\newcommand{\M}{\mathcal{M}}
\pgfplotsset{compat=1.18}
\newcommand{\var}{\mathrm{Var}}
\newcommand{\VaR}{\mathrm{VaR}}
\newcommand{\RVaR}{\mathrm{RVaR}}
\newcommand{\ES}{\mathrm{ES}}
\newcommand{\E}{\mathbb{E}}
\newcommand{\id}{\mathds{1}}
\newcommand{\f}{{\bf f}}
\renewcommand{\a}{{\bf a}}
\renewcommand{\b}{{\bf b}}
\renewcommand{\c}{{\bf c}}
\renewcommand{\ge}{\geqslant}
\renewcommand{\le}{\leqslant}
\renewcommand{\geq}{\geqslant}
\renewcommand{\leq}{\leqslant}
\renewcommand{\epsilon}{\varepsilon}
\theoremstyle{plain}
\newtheorem{theorem}{Theorem}
\newtheorem{corollary}{Corollary}
\newtheorem{lemma}{Lemma}
\newtheorem{proposition}{Proposition}
\theoremstyle{definition}
\theoremstyle{definition}
\newtheorem{remark}{Remark}
\renewcommand{\cite}{\citet}
\begin{document}

	\title{Pareto-optimal reinsurance under dependence uncertainty}
	
\author{Tim J.~Boonen\thanks{Department of Statistics and Actuarial Science, School of Computing and Data Science, The University of Hong Kong, Hong Kong, China.~Email: \texttt{tjboonen@hku.hk}} \and Xia Han\thanks{School of Mathematical Sciences, LPMC and AAIS, Nankai University, China.  ~Email: \texttt{xiahan@nankai.edu.cn}}  \and Peng Liu\thanks{School of Mathematics, Statistics and Actuarial Science, University of Essex, UK. Email: \texttt{peng.liu@essex.ac.uk}}  \and Jiacong Wang\thanks{School of Mathematical Sciences, Nankai University, China. ~Email: \texttt{2120240092@mail.nankai.edu.cn}}}

	\maketitle 
\begin{abstract}

This paper studies Pareto-optimal reinsurance design in a monopolistic market with multiple primary insurers and a single reinsurer, all with heterogeneous risk preferences. The risk preferences are characterized by  a family of risk measures, called Range Value-at-Risk (RVaR), which includes both Value-at-Risk (VaR) and Expected Shortfall (ES) as special cases. Recognizing the practical difficulty of accurately estimating the dependence structure among the insurers’ losses, we adopt a robust optimization approach that assumes the marginal distributions are known while leaving the dependence structure unspecified. We provide a complete characterization of optimal indemnity schedules under the worst-case scenario, showing that the infinite-dimensional optimization problem can be reduced to a tractable finite-dimensional problem involving only two or three parameters for each indemnity function. Additionally, for independent and identically distributed risks, we exploit the argument of asymptotic normality to derive optimal two-parameter layer contracts. Finally, numerical applications are considered in a two-insurer  setting to illustrate the influence of the dependence structures and heterogeneous risk tolerances on optimal strategies and the corresponding risk evaluation.  
\end{abstract}

\noindent\textbf{Keywords:} Optimal reinsurance, robust risk management, Range Value-at-Risk, dependence uncertainty, Pareto efficiency

\section{Introduction}
Centralized risk pooling, where a single entity assumes the financial risks of a large and heterogeneous client base, underpins modern insurance markets. This principle naturally extends to reinsurance, in which specialized entities absorb and manage risks ceded by primary insurers, enhancing market capacity and stability.

In this paper, we study risk-transfer mechanisms in a monopolistic reinsurance market with multiple primary insurers (cedants) and a single reinsurer. Each cedant holds a fixed portfolio of risks and seeks to transfer part of these risks through structured premium payments. A reinsurance treaty specifies a coordinated schedule of premiums and corresponding indemnity rules. A tension arises between the participants’ perspectives: cedants evaluate the treaty based on their individual post-transfer risk, whereas the reinsurer considers aggregate liabilities versus total premiums.

We evaluate the efficiency of multilateral reinsurance treaties through the lens of Pareto optimality: a treaty is considered efficient if no participant's risk can be reduced without increasing that of another. The seminal work of   \cite{A71} established that for a risk-averse decision maker maximizing expected utility, Pareto-optimal contracts take the form of full coverage above a constant deductible. This foundational result has been substantially generalized to alternative market settings, including those employing distortion-type premium principles and accommodating heterogeneous beliefs; see, e.g., \cite{CLW17}, \cite{JHR18}, \cite{M19}, \cite{BJ23}, and  \cite{CGZ24}. \cite{BCG24} study the case with multiple policyholders transferring risk to a single central authority, achieving Pareto efficiency with distortion risk measures. They provide only an implicit description of the optimal risk-transfer contracts and show the relevance of the setting in examples involving flood risk.

  In our framework, where multiple cedants cede risk to a single reinsurer, presents a distinct challenge, as treaty performance depends not only on the marginal distributions of the cedants' risks but also critically on the dependence structure governing their joint behavior, which directly determines the reinsurer's aggregate loss. \cite{MFE15} emphasize that accurately estimating this dependence structure is notoriously difficult in practice, and its misspecification can lead to severe risk management consequences. Moreover, data for different but correlated insurance lines are often collected separately, providing little or no empirical basis for inferring dependence; see, e.g., \cite{EPR13} and \cite{EWW15}.
  Motivated by these operational challenges, we adopt a robust optimization framework in which the marginal loss distributions are assumed to be known, whereas the dependence structure among the risks is left completely unspecified. 
   Robust optimization provides a principled approach to decision-making under model uncertainty, with its theoretical foundations developed in \cite{BenTal2009}, key methodological advances presented in \cite{BenTal2008} and \cite{Bertsimas2011}, and an overview provided in \cite{Gabrel2014}. Applications of robust methods to insurance and risk management include the analysis of minimax portfolio strategies in \cite{Polak2010} and the study of robust and Pareto-efficient insurance contracts in \cite{Asimit2017}. More recent developments addressing model uncertainty in (re)insurance design have appeared in \cite{Chi2022} and \cite{Cai2024}. Although \cite{FHLX25} also study reinsurance problems under dependence ambiguity, their framework does not incorporate the Pareto-optimal multilateral treaty structure that is central to our analysis.

A defining feature of our model is its ability to coherently accommodate heterogeneous risk preferences among all market participants. We assume that both cedants and the reinsurer evaluate risk using the Range Value-at-Risk (RVaR) measure, while allowing each entity to adopt a distinct RVaR threshold, thereby reflecting differing levels of risk tolerance. The RVaR family, introduced by \cite{CDS10} as a class of robust risk measures, generalizes two of the most widely adopted risk measures in insurance practice and regulation: Value-at-Risk (VaR) and Expected Shortfall (ES). The extensive use of VaR and ES for risk quantification and capital requirements \citep[see, e.g.,][]{CTWZ08,LLM3,CM14} has motivated a substantial body of research on optimal reinsurance design based on these measures. More recently, RVaR has been employed as a preference functional in a range of risk-sharing and reinsurance settings, including cooperative and competitive risk allocation \citep{ELW18} and optimal reinsurance design \citep{GHLLW22,FHLX25}.   Our paper is organized as follows. 

Section~\ref{sec:2} formulates the problem and presents the methodological foundations of our analysis. We show that identifying a Pareto-optimal reinsurance contract is equivalent to solving a system-wide risk minimization problem, where the total risk is represented as a weighted sum of the cedants’ and the reinsurer’s risk exposures (Proposition~\ref{prop:1}).

In Section~\ref{sec:3}, we investigate the worst-case scenario. We assume that the marginal loss distributions of the cedants are known, while the dependence structure among these risks remains completely unspecified. Because robust aggregation results for $\VaR$ and $\RVaR$ are limited, we restrict the search to convex or concave indemnity schedules and work on the corresponding reduced domains. The theoretical developments rely on the techniques of \cite{BLLW20} and their extension to the insurer’s problem in \cite{FHLX25}. Under dependence uncertainty, the problem naturally becomes a minimax optimization in which the objective is to identify indemnity rules that minimize the system’s total risk under the worst possible dependence configuration. Our main analytical contribution in this section is a complete characterization of the Pareto-optimal indemnity schedules. We show that the infinite-dimensional optimization over measurable indemnity functions can be reduced to a finite-dimensional search. Specifically, for each cedant, the optimal indemnity function is either a two- or three-parameter layer contract (Theorem~\ref{th:RVaR}).  These tractable parametric representations not only simplify computation but also provide a clear economic interpretation, illustrating how optimal contracts allocate risk between cedants and the reinsurer under heterogeneous preferences and dependence uncertainty. In the special case where the risk measure is reduced to VaR, we derive explicit forms of the optimal reinsurance contracts (Theorem~\ref{thvar}). For the case of two risks, the optimal contracts admit a particularly simple representation based on Makarov-type bounds, allowing explicit evaluation of the worst-case VaR.

Section~\ref{sec:5} focuses on the case of independent and identically distributed (i.i.d.) risks, which allows us to leverage the asymptotic normality established in Proposition~\ref{CLT} to obtain approximately optimal reinsurance strategies. In this setting, the optimal indemnity function admits a simple two-parameter layer form (Theorem~\ref{RVaR_as}), making the solution more tractable while retaining economic interpretability. In the special case of a single insurer, this formulation coincides with the classical problem of designing an optimal reinsurance contract under a mean-standard deviation framework with $\RVaR$ as the risk criterion, and naturally recovers the $\VaR$- and $\ES$-based results of \cite{C12}.  We also provide the corresponding asymptotic normality results under $\VaR$ (Theorem~\ref{thme}). 

In Section~\ref{sec:6}, we provide numerical illustrations of our theoretical findings. In a two-insurer setting, we derive general $\VaR$-based optimal solutions and compute specific reinsurance contracts for simulated data under three dependence regimes: independence, comonotonicity, and full dependence uncertainty. As expected, worst-case dependence generally produces the largest system risk, although $\VaR$ does not always attain its maximum in the comonotonic case; in some situations, the i.i.d. scenario may yield even larger values than the comonotonic benchmark.  Finally, we present comparative examples illustrating how constraints on reinsurance strategies interact with different loss distributions, highlighting the impact of distributional features on the structure of the optimal indemnity schedules. Section~\ref{sec:7} concludes the paper. The proofs are delegated to the appendices.

\section{Model description and notation}\label{sec:2}
 
 For an atomless probability space  $(\Omega, \mathcal F,\mathbb P)$, let
$L^1$ be the set of random variables with finite expectation, and $L^0$ be the set of all measurable random variables. For simplicity, let  $[n] = \{1,\dots, n\}$. Random variables $X$ and $Y$ are called comonotonic if there exist non-decreasing functions $h$ and $g$ such that $X=h(X+Y)$ and $Y=g(X+Y)$ \citep{D94}.

In a monopolistic reinsurance market with multiple primary insurers and a single reinsurer, we assume that each insurer seeks to purchase an optimal reinsurance contract from the reinsurer. Let $f_i$ denote the indemnity function that maps losses to indemnities.  
To mitigate potential ex post moral hazard, we focus on the class of indemnity functions
\[
\mathcal{I} = \left\{f: [0,\infty) \to [0,\infty) \;\middle|\; f(0)=0, \; 0 \le f(x_2)-f(x_1) \le x_2-x_1 \ \text{for all } 0 \le x_1 \le x_2 \right\}.
\]
This class $\mathcal{I}$ is sufficiently rich: it includes many commonly used indemnity functions, such as the excess-of-loss function $f(x)=(x-d)_+$ for some $d \ge 0$, where $z_+:=\max\{z,0\}$, and the quota-share function $f(x)=q x$ for $q \in [0,1]$.

Given the reinsurance contract  $f_i$ and the premium $\pi_i\in\mathbb{R},$ 
  the loss random variable for the $i$-th insurer  is 
$$
T_{f_i,\pi_i}(X_i)=X_i-f_i(X_i)+\pi_i,$$
and  the loss random variable for the (centralized) reinsurer is 
$$ R_{\mathbf f,\boldsymbol{\pi}}(\mathbf X)=\sum_{i=1}^nf_i(X_i)-\sum_{i=1}^n\pi_i,$$
with  $\mathbf f= (f_1,\dots,f_n),$ $\boldsymbol{\pi}=(\pi_1,\dots,\pi_n)$ and $\mathbf X=(X_1,\dots, X_n).$ In this paper, we assume that the marginal distributions of the risks are fixed, but their dependence structure is left unspecified. The corresponding uncertainty set is defined as
$$
\mathcal{E}_n(\mathbf{F})=\left\{\left(X_1, \ldots, X_n\right): X_i \sim F_i, i\in [n]\right\},
$$
where $\mathbf{F}=\left(F_1, \ldots, F_n\right).$


Next, we introduce the risk measures used in this paper to evaluate the risk.
Define the left quantile of a distribution $F$ and a random variable $X$ with $X\sim F$ at $\alpha\in (0,1]$ as
 $$F^{-1}(\alpha)=\mathrm{VaR}_\alpha(X)=\inf\{x: F(x)\geq\alpha\},$$  
and for $\alpha\in[0,1),$  the right quantile  is given by
 $$F^{-1}_{+}(\alpha)=\mathrm{VaR}^+_\alpha(X)=\inf\{x: F(x)>\alpha\}$$
 with the convention that $\inf\emptyset=\infty$. 
 
In this paper,  we assume that  all insurers and the reinsurer evaluate their risks using Range Value-at-Risk ($\RVaR$) , possibly at different levels. 
 For any $\alpha, \beta$ satisfying  $0\leq \beta<\beta+\alpha\leq 1,$ the $\RVaR$  of a random variable $X \in L^1$ at levels $(\alpha, \beta)$ is defined as

$$
\operatorname{RVaR}_{\beta, \alpha}(X)=\frac 1 \alpha \int_{\beta}^{\alpha+\beta}\VaR_{1-\gamma} (X)\mathrm  d \gamma.   
$$
Note that $\operatorname{RVaR}$ falls in the family of distortion risk measures. Hence, it satisfies the properties enjoyed by the general distortion risk measures such as monotonicity, cash invariance, and  comonotonic additivity; see Chapter 4 of  \cite{FS16}.
Moreover, the two regulatory  risk measures Value-at-Risk $(\VaR)$ and expected shortfall $(\ES)$ are special cases or limits of $\RVaR$. Specifically,
for $\beta\in (0,1)$ and $X\in L^0$, 
$$\mathrm{VaR}_{1-\beta}(X)=\lim_{\alpha\downarrow 0}\RVaR_{\beta,\alpha}(X),~\text{and}~\mathrm{VaR}_{1-\beta}^+(X)=\lim_{\alpha\downarrow 0}\RVaR_{\beta-\alpha,\alpha}(X),$$
and for $\alpha\in [0,1)$ and $X\in L^1$,
$$
\mathrm{ES}_\alpha(X)=\RVaR_{0, 1-\alpha}(X)=\frac{1}{1-\alpha} \int_{\alpha} ^1\VaR_{\gamma} (X)  \mathrm{d} \gamma.
$$







In this paper, our aim is to find the optimal reinsurance policies from the perspective of both the insurers and the reinsurer in the worst-case scenario under dependence uncertainty.  A contract $(\mathbf f,\boldsymbol{\pi}) \in\mathcal I^n\times \mathbb{R}^n$ is said to be \emph{robust Pareto-optimal} under dependence uncertainty if there exists no other contract  $(\hat{\mathbf f},\hat{\boldsymbol{\pi}}) \in\mathcal I^n \times \mathbb{R}^n$   such that 
\begin{align*}
\mathrm{RVaR}_{\beta_i,\alpha_i}\left(T_{f_i,\pi_i}(X_i)\right)&\geq \mathrm{RVaR}_{\beta_i,\alpha_i}\left(T_{\hat f_i,\hat \pi_i}(X_i)\right),\text{ for all } i\in [n],\\
\sup_{\mathbf X\in \mathcal{E}_n(\mathbf{F})}\mathrm{RVaR}_{\beta,\alpha}\left(R_{\mathbf f,\boldsymbol{\pi}}(\mathbf X)\right) &\geq \sup_{\mathbf X\in \mathcal{E}_n(\mathbf{F})}\mathrm{RVaR}_{\beta,\alpha}\left(R_{\hat{\mathbf f},\hat{\boldsymbol{\pi}}}(\mathbf X)\right),
\end{align*}
with at least one of these inequalities being strict. This concept highlights the inherent trade-offs between the insurers' individual risk assessments and the reinsurer's evaluation under the worst-case dependence structure.

The following proposition shows that robust Pareto-optimal contracts can be characterized through a single aggregated optimization problem.
\begin{proposition}\label{prop:1}
A contract $(\mathbf f,\boldsymbol{\pi}) \in\mathcal I^n\times \mathbb{R}^n$  is robust Pareto-optimal under dependence uncertainty if and only if  $\mathbf f \in\mathcal I^n$ solves $\inf_{\mathbf f \in\mathcal I^n}V(\mathbf f),$ where 
 \begin{equation}\label{eq:prob1}
V(\mathbf f): =\sum_{i=1}^n \mathrm{RVaR}_{\beta_i,\alpha_i}\left(T_{f_i,\pi_i}(X_i)\right)+ \sup_{\mathbf X\in \mathcal{E}_n(\mathbf{F})}\mathrm{RVaR}_{\beta,\alpha}\left(R_{\mathbf f,\boldsymbol{\pi}}(\mathbf X)\right),~~~\bf f\in\mathcal I^n.
\end{equation}
\end{proposition}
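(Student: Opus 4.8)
The plan is to exploit the translation (cash) invariance of $\RVaR$ in order to show that the objective $V(\mathbf f)$ is in fact independent of the premium vector $\boldsymbol\pi$, and then to run a scalarisation argument in both directions. First I would rewrite each participant's contribution. By cash invariance of $\RVaR$,
\[
\RVaR_{\beta_i,\alpha_i}\!\left(T_{f_i,\pi_i}(X_i)\right)=\RVaR_{\beta_i,\alpha_i}\!\left(X_i-f_i(X_i)\right)+\pi_i,
\]
while, since $\sum_{i=1}^n\pi_i$ is a constant that commutes with both $\RVaR_{\beta,\alpha}$ and the supremum over $\mathcal E_n(\mathbf F)$,
\[
\sup_{\mathbf X\in\mathcal E_n(\mathbf F)}\RVaR_{\beta,\alpha}\!\left(R_{\mathbf f,\boldsymbol\pi}(\mathbf X)\right)=\sup_{\mathbf X\in\mathcal E_n(\mathbf F)}\RVaR_{\beta,\alpha}\!\Big(\sum_{i=1}^n f_i(X_i)\Big)-\sum_{i=1}^n\pi_i .
\]
Adding these, the premium terms cancel, so $V(\mathbf f)=\sum_{i=1}^n\RVaR_{\beta_i,\alpha_i}(X_i-f_i(X_i))+\sup_{\mathbf X}\RVaR_{\beta,\alpha}(\sum_i f_i(X_i))$, a quantity depending only on $\mathbf f$. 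This premium-invariance is the structural heart of the statement: premiums are pure transfers between the cedants and the reinsurer and hence leave the aggregate system risk unchanged.

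For the forward implication (if $\mathbf f$ minimises $V$ then $(\mathbf f,\boldsymbol\pi)$ is robust Pareto-optimal) I would argue by contraposition. If some admissible $(\hat{\mathbf f},\hat{\boldsymbol\pi})$ dominated $(\mathbf f,\boldsymbol\pi)$, then summing the $n$ cedant inequalities together with the reinsurer inequality — all weak, at least one strict — gives exactly $V(\mathbf f)>V(\hat{\mathbf f})$, using that the left-hand sum is $V(\mathbf f)$ and the right-hand sum equals $V(\hat{\mathbf f})$ by the premium-invariance just established. This contradicts the optimality of $\mathbf f$. The direction is immediate once $V$ has been written as the sum of precisely the $n+1$ terms appearing in the definition of robust Pareto-optimality.

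The reverse implication is where the work lies. Suppose $(\mathbf f,\boldsymbol\pi)$ is Pareto-optimal but $\mathbf f$ is not a minimiser, so there is $\hat{\mathbf f}\in\mathcal I^n$ with $V(\hat{\mathbf f})<V(\mathbf f)$. I would convert this strict gap into a Pareto improvement by a premium reallocation: choose $\hat\pi_i$ so that each cedant is left exactly indifferent, that is $\RVaR_{\beta_i,\alpha_i}(X_i-\hat f_i(X_i))+\hat\pi_i=\RVaR_{\beta_i,\alpha_i}(X_i-f_i(X_i))+\pi_i$. With this choice the reinsurer's worst-case risk under $(\hat{\mathbf f},\hat{\boldsymbol\pi})$ equals $V(\hat{\mathbf f})$ minus the (now fixed) total cedant risk, which is strictly below the analogous quantity $V(\mathbf f)$ minus the same total for $(\mathbf f,\boldsymbol\pi)$; hence all cedants are indifferent and the reinsurer is strictly better off, contradicting Pareto-optimality.

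The main obstacle I anticipate is making this reverse-direction reallocation fully rigorous: verifying that the equalising premiums $\hat\pi_i$ are finite real numbers (which uses finiteness of the retained-risk $\RVaR$ terms, available since we work on $L^1$) and that $(\hat{\mathbf f},\hat{\boldsymbol\pi})$ is a genuine competitor in $\mathcal I^n\times\R^n$. Everything else reduces to cash invariance and a single summation, so no deeper property of $\RVaR$ or of the dependence-uncertainty supremum beyond translation invariance is required.
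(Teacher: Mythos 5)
Your proposal is correct and follows essentially the same route as the paper: both directions rest on cash invariance of $\RVaR$ (making $V$ premium-independent), the "if" direction sums the $n+1$ dominance inequalities to contradict minimality, and the "only if" direction uses exactly the paper's premium-reallocation trick — resetting $\hat\pi_i$ so every cedant is indifferent, whence the strict gap $V(\hat{\mathbf f})<V(\mathbf f)$ transfers entirely to the reinsurer and contradicts Pareto optimality. The finiteness concern you flag is handled as you suggest, since $0\le f_i(X_i)\le X_i$ with $X_i\in L^1$ keeps all the relevant $\RVaR$ terms finite.
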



Note that $V(\f)$ is independent of $\boldsymbol{\pi}$ and hence we rewrite it as 
\begin{equation}\label{eq:V2}V(\mathbf f)=\sum_{i=1}^n \mathrm{RVaR}_{\beta_i,\alpha_i}\left(T_{f_i}(X_i)\right)+ \sup_{\mathbf X\in \mathcal{E}_n(\mathbf{F})}\mathrm{RVaR}_{\beta,\alpha}\left(R_{\mathbf f}(\mathbf X)\right),\end{equation}
where 
$T_{f_i}(X_i)=X_i-f_i(X_i)$ and 
$R_{\mathbf f}(\mathbf X)=\sum_{i=1}^nf_i(X_i).$
In the following, we will focus on the optimal reinsurance policies such that   $V(\f)$  is minimized over $\f\in\mathcal I^n.$
\begin{remark}
In our framework, the premium $\pi_i$ for each insurer is treated as a fixed constant, agreed upon in advance between the insurer and the reinsurer. This reflects practical situations in which framework contracts or long-term agreements specify the premium upfront, leaving only the indemnity structure $f_i$ adjustable by the insurer to manage retained risk. Under this assumption, the reinsurer's net loss $R_{\mathbf f,\boldsymbol{\pi}}$ depends on the indemnity functions up to an additive constant determined by the premiums.

A natural concern arises: if the reinsurer strictly seeks to minimize its risk, it could in principle choose $f_i = 0$, effectively avoiding any risk exposure. While this would indeed minimize the reinsurer's risk, in practice, the reinsurer may still accept a positive $f_i$ for several reasons: contractual obligations, regulatory requirements, long-term relationship considerations, or specified risk tolerances. Within this setup, Pareto optimality  is  meaningful: it identifies indemnity schedules where no participant—whether an insurer or the reinsurer—can reduce their own risk without increasing the risk of another, given the fixed premiums and the risk tolerances of all parties. Thus,  with fixed premiums, Pareto-optimal contracts capture the efficient sharing of risk across insurers and the reinsurer, provided the reinsurer's willingness to bear risk is bounded by these practical considerations.

From a theoretical perspective, treating the premium as fixed allows us to focus on the structure of Pareto-optimal reinsurance contracts under heterogeneous risk preferences and dependence uncertainty, without the additional complexity of a premium that varies with $f_i$. 

\end{remark}

\section{Optimal insurance with dependence uncertainty}\label{sec:3}
Note that Proposition \ref{prop:1} implies that identifying a robust Pareto-optimal contract is equivalent to minimizing \(V(\mathbf f)\); that is, determining the optimal reinsurance strategies for individual insurers so that the total risk of a system with multiple insurers and a single reinsurer is minimized in the worst-case scenario under dependence uncertainty,
i.e., finding the optimal reinsurance policies $\mathbf{f} \in \mathcal{I}^n$ that solve:
\begin{equation}\label{eq:opt1}
\inf_{\mathbf f \in\mathcal I^n}\left\{\sum_{i=1}^n \mathrm{RVaR}_{\beta_i,\alpha_i}\left(T_{f_i}(X_i)\right)+ \sup _{\mathbf X \in \mathcal{E}_n(\mathbf{F})} \mathrm{RVaR}_{\beta,\alpha}\left(R_{\mathbf f}(\mathbf X)\right)\right\}.
\end{equation}


Due to the limited availability of robust aggregation results for VaR and RVaR, it is sometimes necessary to impose additional restrictions on the indemnity functions, such as convexity or concavity. Accordingly, we consider the following domains:
\[
\mathcal{I}_{cx}^n = \left\{ \mathbf f = (f_1,\ldots,f_n) \in \mathcal I^n :\ f_i \text{ is convex for } i\in [n] \right\},
\]
and
\[
\mathcal{I}_{cv}^n = \left\{ \mathbf f = (f_1,\ldots,f_n) \in \mathcal I^n :\ f_i \text{ is concave for } i\in [n] \right\}.
\]

For $\alpha\in (0,1),$ we say that {a distribution $F$ is concave beyond its $\alpha$-quantile if the distribution ${(F(x)-\alpha)_+}/{(1-\alpha)}$ is concave over $(F_+^{-1}(\alpha),\infty),$ and  a distribution $F$ is convex beyond its $\alpha$-quantile if the distribution ${(F(x)-\alpha)_+}/{(1-\alpha)}$ is convex on $(-\infty, F^{-1}(1))$; 
see, for instance, Figure \ref{Figure1}. We denote by $\mathcal{M}_{c x}^\alpha$ the set of all distributions that are convex beyond their corresponding $\alpha$-quantiles, and by $\mathcal{M}_{c v}^\alpha$ the set of all distributions that are concave beyond their corresponding $\alpha$-quantiles.  

\vspace{1cm}

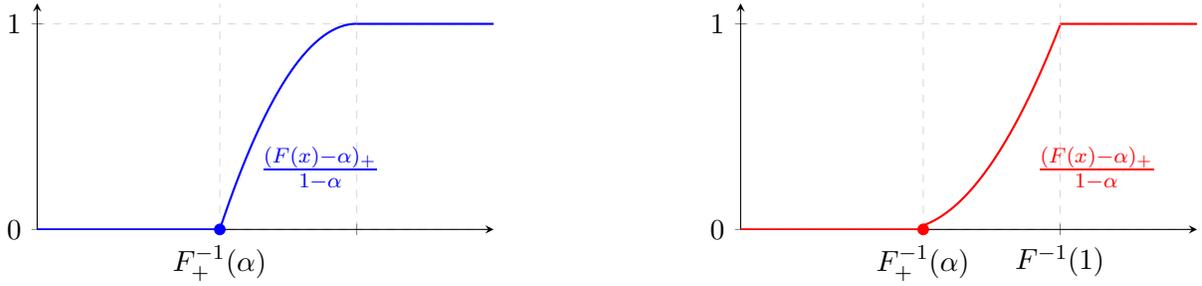
\begin{figure}[htbp]
\centering
\begin{subfigure}{0.45\textwidth}
\centering
\begin{tikzpicture}
\begin{axis}[
    width=\textwidth,
    height=0.6\textwidth,
    axis lines = left,
    xlabel = {},
    ylabel = {},
    ymin=0, ymax=1.1,
    xmin=0, xmax=5,
    xtick = {2,3.5},
    xticklabels = {$F^{-1}_+(\alpha)$},
    ytick = {0,1},
    yticklabels = {0,1},
    grid = major,
    grid style = {dashed, gray!30},
    samples=100,
]

\addplot[domain=0:2, blue, thick] {0};
\addplot[domain=2:3.5, blue, thick] {-0.444*(x-2)^2 + 1.333*(x-2) - 0.0001};
\addplot[domain=3.5:5, blue, thick] {1};
\addplot[only marks, mark=*, blue] coordinates {(2,0)};
\node[blue] at (axis cs: 3.1,0.3) {$\frac{(F(x)-\alpha)_+}{1-\alpha}$};
\end{axis}
\end{tikzpicture}
\end{subfigure}
\hfill
\begin{subfigure}{0.45\textwidth}
\centering
\begin{tikzpicture}
\begin{axis}[
    width=\textwidth,
    height=0.6\textwidth,
    axis lines = left,
    xlabel = {},
    ylabel = {},
    ymin=0, ymax=1.1,
    xmin=0, xmax=5,
    xtick = {2,3.5},
    xticklabels = {$F^{-1}_+(\alpha)$,$F^{-1}(1)$},
    ytick = {0,1},
    yticklabels = {0,1},
    grid = major,
    grid style = {dashed, gray!30},
    samples=100,
]

\addplot[domain=0:2, red, thick] {0};
\addplot[domain=2:3.5, red, thick] {0.3265*(x-1.755)^2};
\addplot[domain=3.5:5, red, thick] {1};
\addplot[only marks, mark=*, red] coordinates {(2,0)};
\node[red] at (axis cs: 3.9,0.3) {$\frac{(F(x)-\alpha)_+}{1-\alpha}$};
\end{axis}
\end{tikzpicture}

\end{subfigure}
\caption{Concave beyond the $\alpha$-quantile (left panel) and convex beyond the $\alpha$-quantile (right panel).}\label{Figure1}
\end{figure}

The following representation for robust $\RVaR$ with dependence uncertainty comes from Proposition 4 of \cite{FHLX25}, which is crucial for the proof of our Theorem \ref{th:RVaR}.
Define \begin{equation}\label{eq:delta}\Delta_n=\left\{\pmb{\gamma}\in (0,1)\times [0,1)^n: \sum_{i=0}^{n}\gamma_i=1\right\},~\text{with}~\pmb{\gamma}=(\gamma_0,\gamma_1,\dots,\gamma_n).\end{equation}
Furthermore,  for any $\alpha\in (0,1),$ we define $\alpha\Delta_n=\left\{\pmb{\gamma}\in (0,1)\times [0,1)^n: \sum_{i=0}^{n}\gamma_i=\alpha\right\}.$
\begin{lemma}\label{lem:0} Suppose that $F_1^{-1},\dots, F_n^{-1}$ are continuous on $(0,1).$     For any $\alpha, \beta$ satisfying $0\leq \beta<\beta+\alpha\leq 1,$  if 
$\mathbf F\in (\M_{cv}^{1-\beta-\alpha})^n,$ 
then we have 
\begin{align}\label{rvar1}\sup_{\mathbf X \in \mathcal{E}_n(\mathbf F)}\mathrm{RVaR}_{\beta,\alpha}\left(\sum_{i=1}^{n}X_i\right)=\inf_{\pmb{\gamma}\in(\beta+\alpha)\Delta_n,\gamma_0\geq \alpha}\sum_{i=1}^{n}\RVaR_{\gamma_i,\gamma_0}(X_i).
\end{align}
\end{lemma}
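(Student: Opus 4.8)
The plan is to prove the identity by two matching inequalities, reading the right-hand side as an optimal split of the total ``tail budget'' $\beta+\alpha$ into a common width $\gamma_0\ge\alpha$ and individual offsets $\gamma_1,\dots,\gamma_n$. Throughout I would use the averaged-quantile form $\RVaR_{\beta,\alpha}(Y)=\frac1\alpha\int_{1-\alpha-\beta}^{1-\beta}F_Y^{-1}(t)\,\mathrm{d}t$, available because the $F_i^{-1}$ are continuous. A preliminary observation is that the feasible set $\{\pmb{\gamma}\in(\beta+\alpha)\Delta_n:\gamma_0\ge\alpha\}$ is compact and, since each $X_i\in L^1$ and $F_i^{-1}$ is continuous, the map $\pmb{\gamma}\mapsto\sum_{i=1}^n\RVaR_{\gamma_i,\gamma_0}(X_i)$ is continuous; hence the infimum is attained at some $\pmb{\gamma}^\ast$, with $\sum_{i=1}^n\gamma_i^\ast=\beta+\alpha-\gamma_0^\ast\le\beta$.

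For the inequality ``$\le$'' I would establish a \emph{universal} bound holding for every coupling $\mathbf X\in\mathcal E_n(\mathbf F)$ and every feasible $\pmb{\gamma}$, namely $\RVaR_{\beta,\alpha}(\sum_{i=1}^nX_i)\le\sum_{i=1}^n\RVaR_{\gamma_i,\gamma_0}(X_i)$. The elementary ingredient is the quantile union bound: if $\sum_{i=1}^n(1-p_i)=1-p$ then $\{X_i\le F_i^{-1}(p_i)\text{ for all }i\}$ has probability at least $p$, so $F_{\sum_iX_i}^{-1}(p)\le\sum_{i=1}^nF_i^{-1}(p_i)$. Integrating an optimally chosen, level-dependent allocation across the averaging window $(1-\alpha-\beta,1-\beta)$ turns this into the stated $\RVaR$ inequality; this is precisely the dual (quantile-aggregation) bound of \cite{BLLW20}, which I would invoke rather than re-derive. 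Taking the supremum over $\mathbf X$ and then the infimum over $\pmb{\gamma}$ yields ``$\le$''. The offsets $\pmb{\gamma}$ cannot be dropped here because $\RVaR$ is not subadditive.

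The inequality ``$\ge$'' is the substantive direction and is where $\mathbf F\in(\M_{cv}^{1-\beta-\alpha})^n$ enters. Working at the minimizer $\pmb{\gamma}^\ast$, I would split each $X_i$ into three quantile layers: a bottom layer below level $1-\gamma_0^\ast-\gamma_i^\ast$, a middle band at levels $(1-\gamma_0^\ast-\gamma_i^\ast,1-\gamma_i^\ast)$ of probability $\gamma_0^\ast$, and a top layer of probability $\gamma_i^\ast$. Because $\gamma_0^\ast+\gamma_i^\ast\le\beta+\alpha$, every band lies beyond the $(1-\beta-\alpha)$-quantile, so the concavity hypothesis forces each band to have a decreasing conditional density; by the classical joint-mixability criterion for decreasing densities the $n$ bands can then be coupled on a common event of probability $\gamma_0^\ast$ so that $\sum_iX_i$ is constant there. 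Being constant, that value must equal the sum of the conditional band means, i.e. $c:=\sum_{i=1}^n\RVaR_{\gamma_i^\ast,\gamma_0^\ast}(X_i)$. I would complete the coupling by placing the top layers on a \emph{disjoint} event of total probability $\sum_{i=1}^n\gamma_i^\ast=\beta+\alpha-\gamma_0^\ast\le\beta$ stacked above $c$, and coupling the bottom layers comonotonically so their sum stays below $c$. The resulting $\sum_iX_i$ is then flat at $c$ on the quantile slab $(1-\beta-\alpha,\,1-\beta-\alpha+\gamma_0^\ast)$, which contains the averaging window $(1-\alpha-\beta,1-\beta)$ exactly because $\gamma_0^\ast\ge\alpha$; hence $\RVaR_{\beta,\alpha}(\sum_iX_i)=c$, giving ``$\ge$''.

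The main obstacle is this construction: one must verify that the concavity-beyond-quantile hypothesis delivers joint mixability on precisely the bands $(1-\gamma_0^\ast-\gamma_i^\ast,1-\gamma_i^\ast)$, and that the probability bookkeeping aligns so that the flat slab of the sum covers the averaging window while the discarded top mass fits within the extreme $\beta$-tail — both of which hinge on the two constraints $\gamma_0^\ast\ge\alpha$ and $\sum_{i\ge1}\gamma_i^\ast\le\beta$. By contrast, the upper bound is a routine consequence of the quantile union bound together with the duality of \cite{BLLW20}.
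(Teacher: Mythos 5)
Your ``$\le$'' half is sound: it is the convolution/quantile-aggregation bound of \cite{BLLW20} together with monotonicity of $\RVaR$ in both of its parameters, and invoking that result is legitimate --- indeed the paper itself offers no proof of Lemma~\ref{lem:0}, citing Proposition~4 of \cite{FHLX25} (whose proof in turn follows \cite{BLLW20}), so there is no internal argument to compare against. The genuine gaps are in your ``$\ge$'' construction, and they sit exactly where the cited papers do the real work. The first is the mixability step: joint mixability is \emph{not} implied by decreasing densities alone. For distributions with decreasing densities on intervals $[a_i,b_i]$ with means $\mu_i$, joint mixability forces the mean condition $\sum_{i=1}^n(\mu_i-a_i)\ge\max_i(b_i-a_i)$ (if the sum is constant at $c$, then $c\ge b_i+\sum_{j\ne i}a_j$ for every $i$), and a decreasing density puts each $\mu_i-a_i\le(b_i-a_i)/2$. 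In particular, for $n=2$ with bands of equal length --- which is what the first-order conditions at an interior $\pmb{\gamma}^\ast$ would give --- the condition can only hold when both bands are uniform, so precisely in the case where the lemma is classical (Makarov, $n=2$) your mechanism of ``sum exactly constant on the middle bands'' is generically infeasible. The correct worst-case couplings control the lower envelope of a counter-monotonic/mixed arrangement on suitably chosen intervals rather than producing a constant sum, and showing that this envelope matches the convolution bound is the substance of the sharpness proofs in \cite{BLLW20} and \cite{FHLX25}. You flag this verification as ``the main obstacle,'' but flagging it does not fill it: it is the hard half of the lemma.

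Second, even granting constancy on the bands, the top-layer placement fails. With the tops on pairwise disjoint events $B_i$, on $B_i$ every $X_j$ with $j\ne i$ must draw from its bottom layer, so all that is guaranteed there is $S\ge F_i^{-1}(1-\gamma_i^\ast)+\sum_{j\ne i}(\text{bottom values})$, which can be strictly below $c=\sum_{j}\RVaR_{\gamma_j^\ast,\gamma_0^\ast}(X_j)$; for $n\ge3$, near-uniform equal-length bands give $S-c\approx(2-n)L/2<0$ on $B_i$ even under the most favourable reassignment of bottom mass. Your argument needs $\p(S<c)\le 1-\beta-\alpha$, i.e.\ $S\ge c$ on events of total probability at least $\gamma_0^\ast+\sum_{i\ge1}\gamma_i^\ast=\beta+\alpha$. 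If some $B_i$ carries $S<c$, the flat slab $\{S=c\}$ is pushed up to quantile levels as high as $[1-\gamma_0^\ast,1]$, the averaging window $(1-\beta-\alpha,1-\beta)$ then picks up quantiles strictly below $c$, and $\RVaR_{\beta,\alpha}(S)<c$. So ``stacked above $c$'' is an assertion, not a consequence of the two constraints $\gamma_0^\ast\ge\alpha$ and $\sum_{i\ge1}\gamma_i^\ast\le\beta$, and it is false in general; the bookkeeping you rely on does not close the argument.
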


The proof of Lemma~\ref{lem:0} follows the arguments in \cite{BLLW20}, which establish the representation for distributions with decreasing densities in the tail part. The extension to distributions that are concave in the tail region is provided in \cite{FHLX25}, thereby establishing Lemma~\ref{lem:0}.

Define
 $g_{a,b}(x):=(x-a)_+-(x-b)_+,$
 where $0\leq a\leq  b\leq \infty$. This function represents a layered coverage, paying losses exceeding a retention level
$a$ that are capped at 
$b$.   The special case  $b=\infty$ corresponds to the classical stop-loss indemnity, which covers all losses above the retention level.  We denote
 ${\bf g}_{\a,\b}=(g_{a_1,b_1},\dots,g_{a_n,b_n})$ with parameter domain
 \begin{equation}\label{eq:a1}\begin{aligned}
     \mathcal{A}_1=\{({\bf a}, {\bf b}): {\bf g}_{\a,\b}\in \mathcal I^n, 0\leq a_i\leq b_i\leq \infty,~i\in [n]\}.
 \end{aligned}
 \end{equation}

 Define $r_{a,b,c}(x):=a x+c(x-b)_+ $ with $0\leq a,c\leq a+c\leq 1$ and $0\leq b\leq \infty.$ This class allows a flexible combination of proportional and excess-of-loss strategies,  providing fine-grained control over both the proportion of loss retained and the additional protection above specified thresholds. The special case 
$b=\infty$  is reduced to a pure quota share, in which a fixed proportion of all losses is ceded. 
 We denote
 ${\bf r}_{\a,\b, \c}=(r_{a_1,b_1,c_1},\dots,r_{a_n,b_n,c_n})$ with parameter domain
 $$\mathcal{A}_2=\{({\bf a}, {\bf b},\c): {\bf r}_{\a,\b,\c}\in\mathcal I^n,  0\leq a_i,c_i\leq a_i+c_i\leq 1, b_i\geq 0, ~i\in [n]\}.$$  
In what follows, we use the convention that $\frac{0}{0}=0.$

Our main result is stated as follows. 

\begin{theorem}\label{th:RVaR} Let $V(\mathbf f)$ be given by \eqref{eq:V2}, and $\Delta_n$ be given by \eqref{eq:delta}.  Suppose $F_1^{-1},\dots, F_n^{-1}$ are continuous on $(0,1).$ For any $\alpha, \beta$ satisfying $0\leq \beta<\beta+\alpha\leq 1,$ we have the following two conclusions.
\begin{enumerate}
\item[(i)] If $\beta=0,$  then 
$$\begin{aligned}\inf_{\mathbf f \in\mathcal I^n}  V(\f)=  \inf_{(\a,\b)\in\mathcal{A}_1}   G(\a,\b),\end{aligned}$$ where \begin{equation}\label{eq:G}G({\bf a}, {\bf b})=\sum_{i=1}^n \left\{\mathrm{RVaR}_{\beta_i,\alpha_i}(X_i) -  \mathrm{RVaR}_{\beta_i,\alpha_i}\left(g_{a_i,b_i}(X_i)\right)+  \mathrm{ES}_{1-\alpha}(g_{a_i,b_i}(X_i))\right\}.\end{equation}
\item[(ii)] If $\mathbf F\in (\M_{cv}^{1-\beta-\alpha})^n,$ then 
$$\begin{aligned}\inf_{\mathbf f \in\mathcal I_{cx}^n}  V(\f)=  \inf_{(\a,\b,\c)\in\mathcal{A}_2} \inf_{\pmb{\gamma}\in(\beta+\alpha)\Delta_n,\gamma_0\geq \alpha} R(\a,\b,\c,\pmb\gamma),\end{aligned}$$ where \begin{equation}\label{eq:R}R({\bf a}, {\bf b},\c,\pmb\gamma)=\sum_{i=1}^n \left\{\mathrm{RVaR}_{\beta_i,\alpha_i}(X_i) -  \mathrm{RVaR}_{\beta_i,\alpha_i}\left(r_{a_i,b_i,c_i}(X_i)\right)+  \mathrm{RVaR}_{\gamma_i,\gamma_0}(r_{a_i,b_i,c_i}(X_i))\right\}.\end{equation}
\end{enumerate}
\end{theorem}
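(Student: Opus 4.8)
The plan is to reduce \eqref{eq:V2} to $n$ decoupled one-dimensional problems and to locate their optimisers among the extreme points of the admissible indemnity class. The first step, common to both parts, is comonotonic additivity. Since each $f_i\in\mathcal I$ is nondecreasing and $1$-Lipschitz, both $f_i(X_i)$ and $X_i-f_i(X_i)$ are nondecreasing functions of $X_i$, hence comonotonic; comonotonic additivity of $\RVaR$ then gives $\RVaR_{\beta_i,\alpha_i}(X_i-f_i(X_i))=\RVaR_{\beta_i,\alpha_i}(X_i)-\RVaR_{\beta_i,\alpha_i}(f_i(X_i))$, so that $V(\mathbf f)$ is the constant $\sum_i\RVaR_{\beta_i,\alpha_i}(X_i)$ plus terms depending only on the ceded variables $f_i(X_i)$.

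Next I would resolve the inner worst case. For part (i), $\beta=0$ makes the reinsurer's functional $\RVaR_{0,\alpha}=\ES_{1-\alpha}$, which is subadditive and comonotonic additive; the comonotonic coupling of $(X_1,\dots,X_n)$ lies in $\mathcal E_n(\mathbf F)$ and renders $(f_1(X_1),\dots,f_n(X_n))$ comonotonic, so $\sup_{\mathbf X\in\mathcal E_n(\mathbf F)}\ES_{1-\alpha}(\sum_i f_i(X_i))=\sum_i\ES_{1-\alpha}(f_i(X_i))$. Combined with the previous step this yields $V(\mathbf f)=\sum_i\{\RVaR_{\beta_i,\alpha_i}(X_i)-\RVaR_{\beta_i,\alpha_i}(f_i(X_i))+\ES_{1-\alpha}(f_i(X_i))\}$, which separates over $i$ and matches $G$ once the optimal $f_i$ is shown to be a layer. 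For part (ii) I would instead invoke Lemma~\ref{lem:0}, applied to the family $\{f_i(X_i)\}$ rather than $\{X_i\}$: since $F_i\in\M_{cv}^{1-\beta-\alpha}$ is equivalent to $F_i^{-1}$ being convex on $(1-\beta-\alpha,1)$ and $f_i$ is convex and nondecreasing, the composition $f_i\circ F_i^{-1}$ is convex there, so $f_i(X_i)\in\M_{cv}^{1-\beta-\alpha}$ and Lemma~\ref{lem:0} gives $\sup_{\mathbf X}\RVaR_{\beta,\alpha}(\sum_i f_i(X_i))=\inf_{\pmb\gamma\in(\beta+\alpha)\Delta_n,\,\gamma_0\ge\alpha}\sum_i\RVaR_{\gamma_i,\gamma_0}(f_i(X_i))$. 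Pulling this $\inf_{\pmb\gamma}$ outside the outer $\inf_{\mathbf f}$ and using separability across $i$ for each fixed $\pmb\gamma$ reduces part (ii) to $\inf_{\pmb\gamma}\sum_i\inf_{f_i\in\mathcal I_{cx}}\{\RVaR_{\beta_i,\alpha_i}(X_i)-\RVaR_{\beta_i,\alpha_i}(f_i(X_i))+\RVaR_{\gamma_i,\gamma_0}(f_i(X_i))\}$.

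Each per-insurer subproblem is linear in $f_i$. Indeed, for nondecreasing $f_i$ one has $\VaR_u(f_i(X_i))=f_i(F_i^{-1}(u))$, so every $\RVaR$ term is an average of $f_i(F_i^{-1}(\cdot))$ over a quantile window; collecting the two windows into a signed weight $w_i$, the objective becomes $\int_0^1 w_i(u)\,f_i(F_i^{-1}(u))\,\mathrm{d}u$. Parametrising by the derivative $\phi_i=f_i'\in[0,1]$ and applying Fubini rewrites it as $\int_0^\infty\phi_i(t)\,W_i(t)\,\mathrm{d}t$ with $W_i(t)=\int_{F_i(t)}^1 w_i(u)\,\mathrm{d}u$; minimising a linear functional over a convex set is attained at an extreme point. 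In part (i), $\phi_i$ ranges over all $[0,1]$-valued functions, so the minimiser is $\phi_i=\mathbf 1_{\{W_i<0\}}$; a sign analysis of $W_i$ — comparing the insurer window $[1-\beta_i-\alpha_i,1-\beta_i]$ with the reinsurer window $[1-\alpha,1]$ — shows $\{t:W_i(t)<0\}$ is a single interval, so $f_i'=\mathbf 1_{[a_i,b_i]}$, i.e. $f_i=g_{a_i,b_i}$, giving $\inf_{\mathcal I^n}V=\inf_{(\a,\b)\in\mathcal A_1}G$. In part (ii), $\phi_i$ must in addition be nondecreasing, and the extreme points of the nondecreasing $[0,1]$-valued functions are the threshold indicators $\mathbf 1_{[s,\infty)}$; the corresponding $f_i$ are stop-loss contracts, which lie in the family $\{r_{a_i,b_i,c_i}\}$, so substituting back recovers $R$ and the representation $\inf_{\mathcal I_{cx}^n}V=\inf_{(\a,\b,\c)\in\mathcal A_2}\inf_{\pmb\gamma}R$.

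The hard part will be the two ``shape'' steps. For part (i) it is the claim that $\{W_i<0\}$ is a single interval: this is not automatic and requires checking every relative position of the two quantile windows (disjoint, partially overlapping, and nested), verifying in each case that the piecewise-linear primitive $s\mapsto\int_s^1 w_i(u)\,\mathrm{d}u$ is negative on at most one interval. For part (ii) the delicate point is the preservation statement that $f_i\in\mathcal I_{cx}$ keeps $f_i(X_i)$ inside $\M_{cv}^{1-\beta-\alpha}$, since this is exactly what licenses Lemma~\ref{lem:0}; one must also make the extreme-point argument rigorous, e.g. by truncating the domain of $f_i$ to a compact interval so that the feasible class is compact and the linear functional attains its minimum, and then passing to the limit.
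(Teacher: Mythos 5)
Your proposal is correct, and it proves the theorem by a genuinely different route from the paper. The two arguments share the same skeleton — comonotonic additivity of $\RVaR$ to split off $\RVaR_{\beta_i,\alpha_i}(f_i(X_i))$, the identity $\sup_{\mathbf X\in\mathcal E_n(\mathbf F)}\ES_{1-\alpha}(\sum_i f_i(X_i))=\sum_i\ES_{1-\alpha}(f_i(X_i))$ in (i), and in (ii) the preservation statement that $f_i$ convex with $F_i\in\M_{cv}^{1-\beta-\alpha}$ gives $F_{f_i(X_i)}\in\M_{cv}^{1-\beta-\alpha}$ (the paper proves this with distribution functions; your quantile formulation via convexity of $f_i\circ F_i^{-1}$ is an equivalent argument) followed by Lemma~\ref{lem:0} and the interchange of infima — but they diverge at the per-insurer minimisation. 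The paper argues constructively: for each fixed $f_i$ it builds a competitor ($g_{a_i,b_i}$ in (i), $r_{a_i,b_i,c_i}$ in (ii)) whose cedant-side $\RVaR$ matches that of $f_i$ by an intermediate-value argument, and then checks by a case analysis on the relative positions of the quantile windows (three cases in (i), six in (ii)) that the reinsurer-side term does not increase. Your marginal-indemnification reformulation replaces that case analysis by a linear-programming observation. In (i) this is fully rigorous with no compactness at all, since the minimiser $\phi_i=\mathbf{1}_{\{W_i<0\}}$ is a pointwise choice; the single-interval property you flag does hold (the primitive $P(q)=\int_q^1 w_i(v)\,\mathrm{d}v$ vanishes at $q=0$ and $q=1$, and in the only configuration where insurance helps, $\alpha\ge\alpha_i+\beta_i$, it is flat, then decreasing, then increasing, then decreasing back to zero, hence negative on exactly one interval; in all other window configurations it is nonnegative). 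In (ii), rather than your truncation-plus-compactness suggestion, the cleanest rigorous version of the extreme-point step is the layer-cake representation $f_i(x)=\int_0^1\bigl(x-\tau_i(s)\bigr)_+\,\mathrm{d}s$, valid for every $f_i\in\mathcal I_{cx}$; linearity of the fixed-$\pmb{\gamma}$ objective then yields $J(f_i)\ge\inf_{\tau}J\bigl((\cdot-\tau)_+\bigr)$ directly, with no attainment issues. Notably, your route delivers a conclusion slightly sharper than the theorem: for part (ii) the infimum is already attained over pure stop-loss contracts $r_{0,b,1}$, so the proportional component of the family $\mathcal A_2$ is not needed at the optimum; this does not contradict the paper, whose six constructions only exhibit some improving competitor in the $r$-family rather than a necessary shape. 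What the paper's approach buys is an elementary, self-contained recipe for explicitly improving any given contract; what yours buys is brevity (one argument instead of nine cases) and the stronger structural conclusion.
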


The forms of optimal indemnity functions are given in Theorem \ref{th:RVaR}. To determine the corresponding optimal reinsurance strategies, it remains to specify the parameters of these functions. This is addressed in the following proposition.

\begin{proposition}\label{exist} Let $G({\bf a}, {\bf b})$ and $R({\bf a}, {\bf b},\c, \pmb\gamma)$ be given by \eqref{eq:G} and \eqref{eq:R}, respectively. The parameters of the optimal ceded loss functions ${\bf g}_{\a,\b}$ and ${\bf r}_{\a,\b,\c}$, as defined in (i)–(ii) of Theorem \ref{th:RVaR}, are well defined and are determined as follows
 \begin{enumerate}
   \item[(i)] For ${\bf g}_{\a,\b}$,
  $$({\bf a}^*, {\bf b}^*)\in \arg\inf_{(\a,\b)\in\mathcal{A}_1}
G({\bf a}, {\bf b});$$
 \item[(ii)] For ${\bf r}_{\a,\b,\c}$,
 $$(\a^*,\b^*,\c^*)=\arg\inf_{(\a,\b,\c)\in\mathcal{A}_2}
  \left\{\inf_{\pmb{\gamma}\in(\beta+\alpha)\Delta_n,\gamma_0\geq \alpha}R({\bf a}, {\bf b},\c, \pmb\gamma)\right\}.$$
\end{enumerate} 
\end{proposition}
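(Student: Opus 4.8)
The plan is to establish the proposition by the direct method: after a suitable compactification of the parameter domains, I would show that the objective functions $G$ and $R$ are continuous on compact sets, so that the infima in (i) and (ii) are attained by the Weierstrass extreme value theorem — which is precisely the assertion that the optimal parameters are well defined. Note first that since each $X_i\in L^1$, all the $\RVaR$- and $\ES$-terms appearing in $G$ and $R$ are finite, so both objectives are real-valued; moreover the leading term $\RVaR_{\beta_i,\alpha_i}(X_i)$ is a parameter-free constant and does not affect the $\arg\inf$.

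The first step is compactification. In (i) the parameters $(\a,\b)$ range over $\mathcal A_1$, where $a_i,b_i$ may be infinite; I would regard each coordinate as living in the compact extended half-line $[0,\infty]$ and set $\bar{\mathcal A}_1=\{(\a,\b):0\le a_i\le b_i\le\infty,\ i\in[n]\}$, a closed and hence compact subset of $([0,\infty]^2)^n$. With the conventions $g_{a,\infty}(x)=(x-a)_+$ and $g_{\infty,\infty}\equiv 0$, every point of $\bar{\mathcal A}_1$ corresponds to a genuine indemnity in $\mathcal I$ (a layer, a stop-loss, or the null contract), so no minimizer is lost. The analogous compactification in (ii) takes $a_i,c_i\in[0,1]$ with $a_i+c_i\le 1$, $b_i\in[0,\infty]$ (with $r_{a,\infty,c}(x)=ax$, a pure quota share), and $\pmb\gamma$ in the closed simplex $\{\pmb\gamma\in[0,\beta+\alpha]^{n+1}:\sum_{i=0}^n\gamma_i=\beta+\alpha,\ \gamma_0\ge\alpha\}$; since $\gamma_0\ge\alpha>0$, the window length of each $\RVaR_{\gamma_i,\gamma_0}$ stays bounded away from zero.

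The second step is continuity of the objectives, which rests on two ingredients. First, every $\RVaR_{\beta,\alpha}$ with window length $\alpha>0$ is $L^1$-Lipschitz: writing it as $\tfrac1\alpha\int_{1-\beta-\alpha}^{1-\beta}\VaR_u(\cdot)\,\mathrm{d}u$, one gets $|\RVaR_{\beta,\alpha}(X)-\RVaR_{\beta,\alpha}(Y)|\le\tfrac1\alpha\int_0^1|\VaR_u(X)-\VaR_u(Y)|\,\mathrm{d}u\le\tfrac1\alpha\,\E|X-Y|$, and in particular this covers each insurer term ($\alpha_i>0$) and the reinsurer term $\ES_{1-\alpha}=\RVaR_{0,\alpha}$ ($\alpha>0$). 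Second, the parametric families are $L^1$-continuous in their parameters: using that $t\mapsto(x-t)_+$ is $1$-Lipschitz, $\|g_{a,b}(X)-g_{a',b'}(X)\|_1\le|a-a'|+|b-b'|$ and $\|r_{a,b,c}(X)-r_{a',b',c'}(X)\|_1\le (|a-a'|+|c-c'|)\,\E X+|b-b'|$, while on the faces at infinity the differences are controlled by $\E[(X-b)_+]\to0$ as $b\to\infty$, which holds by dominated convergence since $X\in L^1$ (the same estimate handles $a\to\infty$). Combining these with the fact that, for fixed $Y\in L^1$, the map $(\gamma_i,\gamma_0)\mapsto\RVaR_{\gamma_i,\gamma_0}(Y)=\tfrac1{\gamma_0}\int_{1-\gamma_i-\gamma_0}^{1-\gamma_i}\VaR_u(Y)\,\mathrm{d}u$ is continuous on $\{\gamma_0\ge\alpha\}$ by absolute continuity of the Lebesgue integral, I obtain joint continuity of $(\pmb\gamma,Y)\mapsto\RVaR_{\gamma_i,\gamma_0}(Y)$ through the estimate $|\RVaR_{\gamma_i,\gamma_0}(Y)-\RVaR_{\gamma_i',\gamma_0'}(Y')|\le\tfrac1\alpha\|Y-Y'\|_1+|\RVaR_{\gamma_i,\gamma_0}(Y')-\RVaR_{\gamma_i',\gamma_0'}(Y')|$. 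Hence $G$ is continuous on $\bar{\mathcal A}_1$ and $R$ is continuous on the compact product domain in (ii), and Weierstrass delivers the minimizers claimed in (i) and (ii).

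The genuinely delicate point — and the expected main obstacle — is the non-compactness created by the unbounded retention/limit parameters $b_i$ (and $a_i$ in (i)). The resolution is to compactify each such parameter to $[0,\infty]$ and to verify that $G$ and $R$ extend continuously to the faces at infinity; this reduces to the single estimate $\E[(X_i-b)_+]\to0$ as $b\to\infty$, and is exactly where the integrability assumption $X_i\in L^1$ enters. Everything else is a routine Lipschitz/continuity bookkeeping argument feeding into the compactness-continuity principle.
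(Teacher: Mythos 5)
Your proposal is correct, and at the top level it follows the same strategy as the paper's proof: establish continuity of $G$ and $R$ on a suitable compact parameter set and invoke the Weierstrass extreme value theorem. The two arguments, however, put the real work in different places, and the comparison is instructive. The paper concentrates on the $\pmb\gamma$-simplex: its stated concern is that $\RVaR_{\gamma_i,\gamma_0}$ may lose continuity as $\gamma_0\downarrow 0$, which it repairs by extending the definition to $\gamma_0=0$ via $\RVaR_{\gamma_i,0}:=\VaR_{1-\gamma_i}$ (allowing the value $+\infty$ when $a_i+c_i>0$ and $\esssup X_i=\infty$), thereby obtaining continuity on the closed set $\mathcal A_2\times\bar\Delta_n^{0,\beta+\alpha}$; continuity in the contract parameters is read off the quantile representation $\RVaR_{\gamma_i,\gamma_0}(r_{a_i,b_i,c_i}(X_i))=\frac{1}{\gamma_0}\int_{\gamma_i}^{\gamma_i+\gamma_0}r_{a_i,b_i,c_i}(F_i^{-1}(t))\,\mathrm{d}t$. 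Under the constraint $\gamma_0\ge\alpha$ that is actually present in the statement (and $\alpha>0$ since $\beta<\beta+\alpha$), your uniform $\alpha^{-1}$-Lipschitz bound in $L^1$ makes that boundary extension unnecessary, so your decision to skip it is legitimate. Conversely, your proof spends its effort exactly where the paper is silent: the paper justifies attainment by asserting that $\mathcal A_1$ and $\mathcal A_2\times\bar\Delta_n^{0,\beta+\alpha}$ are \emph{closed}, which by itself does not yield Weierstrass, because the coordinates $b_i$ (and $a_i$ in part (i)) are unbounded. Your compactification of each coordinate to $[0,\infty]$, together with the verification that $G$ and $R$ extend continuously to the faces at infinity via $\E[(X_i-b)_+]\to 0$, and the observation that these faces correspond to genuine contracts (stop-loss, quota-share, null) already contained in the paper's own $\mathcal A_1$ and $\mathcal A_2$, is precisely the justification the paper leaves implicit. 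The technical machinery also differs: you go through the Wasserstein/$L^1$-Lipschitz property of $\RVaR$ combined with $L^1$-continuity of the parametric families, whereas the paper argues directly from the quantile-integral representation; both are valid, but your route is self-contained and closes a genuine gap in the published argument, while the paper's $\gamma_0\downarrow 0$ analysis addresses a degeneracy that the constraint $\gamma_0\ge\alpha$ rules out anyway.
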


Theorems \ref{th:RVaR} and Proposition \ref{exist} provide a complete characterization of optimal reinsurance designs under the robust framework with multiple insurers and dependence uncertainty. Our analysis establishes that layered stop-loss indemnities $g_{\a,\b}$ are optimal when evaluating risk using ES, while combined proportional-excess-of-loss contracts $r_{\a,\b,\c}$ emerge as optimal under RVaR criteria for convex indemnities. This characterization effectively reduces the inherently infinite-dimensional optimization over admissible indemnity functions to tractable finite-dimensional parameter search problems over the domains $\mathcal{A}_1$ and $\mathcal{A}_2$.
Proposition \ref{exist} further ensures the existence of optimal parameters and guarantees that these solutions can be obtained numerically. This transforms complex reinsurance contract design problems into computationally manageable optimization tasks, bridging theoretical optimality with practical implementability.

Note that the optimal indemnity functions derived in Theorem 2 of \cite{FHLX25} have the form of $a\min(x,b)$ with $0\leq a\leq 1$ and $0\leq b\leq \infty$, which is very different from the optimal indemnity  functions obtained in Theorem \ref{th:RVaR} with the form $r_{a,b,c}(x)=a x+c(x-b)_+ $ with $0\leq a,c\leq a+c\leq 1$ and $0\leq b\leq \infty.$ This is due to the qualitatively different setups of the models in the two papers. The objective in \cite{FHLX25} is to minimize the total risk from the perspective of the reinsurer under dependence uncertainty, whereas the objective in our paper is to find the robust Pareto-optimal contract for a system consisting of multiple insurers and a single reinsurer. Moreover, the proof of Theorem \ref{th:RVaR} is more complex as it involves the detailed discussion of six different cases.

\section{Optimal  solution to the special case of   VaR}\label{sec:4}

In this section, we examine the VaR-based optimal insurance problem, which represents a special case of RVaR. Recall that the relationship between these measures is given by $\mathrm{VaR}_{1-\beta}(X) = \lim_{\alpha \downarrow 0} \RVaR_{\beta,\alpha}(X)$.   We next aim to minimize the following target:
\begin{equation}\label{pro:VaR}
    V(\mathbf f)=\sum_{i=1}^n \VaR_{\alpha_i}\left(T_{f_i}(X_i)\right)+ \sup_{\mathbf X\in \mathcal E_n(\mathbf F)}\VaR_{\alpha}\left(R_{\mathbf f}(\mathbf X)\right). \end{equation}


 An analogous representation for $\RVaR$ in  Lemma~\ref{lem:0} can be established for $\mathrm{VaR}$, which can also be extended to $\mathbf F \in (\mathcal M_{cx}^\alpha)^n$ in this special case.

\begin{lemma}\label{lem:1}
 Suppose $F_1^{-1},\dots, F_n^{-1}$ are continuous on $(0,1)$ and $\mathbf{F} \in\left(\mathcal{M}_{c x}^\alpha\right)^n \cup\left(\mathcal{M}_{c v}^\alpha\right)^n,$ then we have

$$
\sup _{\mathbf X \in \mathcal{E}_n(\mathbf{F})} \operatorname{VaR}_{\alpha}\left(\sum_{i=1}^n X_i\right)=\inf _{\gamma \in(1-\alpha) \Delta_n} \sum_{i=1}^n \RVaR_{\gamma_i, \gamma_0}\left(X_i\right). 
$$

\end{lemma}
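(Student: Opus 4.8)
The plan is to obtain the $\VaR$ identity as the $t\downarrow 0$ limit of the $\RVaR$ identity in Lemma~\ref{lem:0}, and then to supply the matching lower bound through an explicit worst-case coupling. The starting observation is that, for any $Y$, one has $\VaR^+_\alpha(Y)=\lim_{t\downarrow0}\RVaR_{1-\alpha-t,t}(Y)$, so I would apply Lemma~\ref{lem:0} with its parameter pair $(\beta,\alpha)$ replaced by $(1-\alpha-t,\,t)$ for small $t>0$. The key bookkeeping is that $1-(1-\alpha-t)-t=\alpha$, so the hypothesis of Lemma~\ref{lem:0} becomes exactly $\mathbf F\in(\M_{cv}^{\alpha})^n$, matching the concave assumption here. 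Lemma~\ref{lem:0} then reads
$$\sup_{\mathbf X\in\mathcal E_n(\mathbf F)}\RVaR_{1-\alpha-t,t}\Big(\sum_{i=1}^n X_i\Big)=\inf_{\pmb\gamma\in(1-\alpha)\Delta_n,\ \gamma_0\ge t}\ \sum_{i=1}^n\RVaR_{\gamma_i,\gamma_0}(X_i),$$
and it remains to pass to the limit on both sides.

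Next I would control the two one-sided limits. Substituting $u=1-s$ gives $\RVaR_{1-\alpha-t,t}(Y)=\tfrac1t\int_\alpha^{\alpha+t}\VaR_u(Y)\,\mathrm{d}u$, an average of quantiles just above level $\alpha$; since $u\mapsto\VaR_u(Y)$ is non-decreasing, this quantity is non-decreasing in $t$ and decreases to $\VaR^+_\alpha(Y)$ as $t\downarrow0$. On the right-hand side, the constraint sets $\{\pmb\gamma\in(1-\alpha)\Delta_n:\gamma_0\ge t\}$ increase as $t\downarrow0$ to all of $(1-\alpha)\Delta_n$ (the requirement $\gamma_0>0$ being already built into $\Delta_n$), so the infima decrease to the target $\inf_{\pmb\gamma\in(1-\alpha)\Delta_n}\sum_i\RVaR_{\gamma_i,\gamma_0}(X_i)$. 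Combining the pointwise bound $\VaR^+_\alpha(\sum_i X_i)\le\RVaR_{1-\alpha-t,t}(\sum_i X_i)$ with $\sup_{\mathbf X}$ and then $t\downarrow0$ yields the upper (``dual'') bound
$$\sup_{\mathbf X}\VaR_\alpha\Big(\sum_{i=1}^n X_i\Big)\ \le\ \sup_{\mathbf X}\VaR^+_\alpha\Big(\sum_{i=1}^n X_i\Big)\ \le\ \inf_{\pmb\gamma\in(1-\alpha)\Delta_n}\sum_{i=1}^n\RVaR_{\gamma_i,\gamma_0}(X_i).$$
For the convex case, where Lemma~\ref{lem:0} is unavailable, I would instead prove this same upper bound directly for arbitrary marginals, since $\VaR_\alpha(\sum_i X_i)\le\sum_i\RVaR_{\gamma_i,\gamma_0}(X_i)$ for every coupling and every $\pmb\gamma\in(1-\alpha)\Delta_n$ is the classical $\VaR$-aggregation dual bound, provable from comonotonic additivity and the subadditivity of $\ES$ with no shape restriction.

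The crux is the matching lower bound, namely the construction of a single coupling $\mathbf X^\ast\in\mathcal E_n(\mathbf F)$ with $\VaR_\alpha(\sum_i X^\ast_i)$ at least the dual value. Here I would follow the worst-case-coupling technique of \cite{BLLW20} as extended in \cite{FHLX25}: fix a (near-)minimizing $\pmb\gamma^\ast$, split each $X_i$ into a body and a tail at the levels dictated by $\gamma^\ast_i,\gamma^\ast_0$, and arrange the tail parts to be completely (jointly) mixable so that $\sum_i X^\ast_i$ equals the constant $c=\sum_i\RVaR_{\gamma^\ast_i,\gamma^\ast_0}(X_i)$ on the critical tail event and exceeds it on the remainder, making $c$ the $\alpha$-quantile of $\sum_i X_i^\ast$. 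The concavity (decreasing tail density) hypothesis is exactly what guarantees the required mixability, so in the concave case I can produce $\mathbf X^\ast$ either by a direct construction at the quantile level or as a limit of the Lemma~\ref{lem:0} optimizers (the family being tight since the marginals are fixed). For the convex case (increasing tail density) I would invoke the analogous mixability of monotone-density tails; the reason the \emph{same} dual value is attained---whereas for $\RVaR$ the convex and concave regimes genuinely differ, which is why Lemma~\ref{lem:0} is stated only for the concave case---is that $\VaR$ constrains only a single quantile level, so only the probability of the event $\{\sum_i X_i\ge c\}$ matters and the behaviour of the extreme tail above $c$ is irrelevant.

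I expect this lower-bound construction to be the main obstacle: one must verify that the chosen tail pieces can indeed be coupled to the constant $c$ at precisely the dual level, and, in the convex case, that the mixability argument survives without the concavity that powers Lemma~\ref{lem:0}. Once the lower bound $\sup_{\mathbf X}\VaR_\alpha(\sum_i X_i)\ge\inf_{\pmb\gamma\in(1-\alpha)\Delta_n}\sum_i\RVaR_{\gamma_i,\gamma_0}(X_i)$ is in hand, it closes the chain of inequalities above to an equality. A final routine point is that, under the assumed continuity of $F_1^{-1},\dots,F_n^{-1}$, the worst-case law of $\sum_i X_i$ may be taken to have no atom at its $\alpha$-quantile, so that $\VaR_\alpha$ and $\VaR^+_\alpha$ coincide there and the stated identity follows for both the convex and the concave regimes.
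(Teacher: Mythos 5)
Your two upper-bound arguments are essentially sound: applying Lemma~\ref{lem:0} with its parameter pair replaced by $(1-\alpha-t,\,t)$ does turn the hypothesis into $\mathbf F\in(\M_{cv}^{\alpha})^n$ and, after letting $t\downarrow 0$, yields $\sup_{\mathbf X\in\mathcal E_n(\mathbf F)}\VaR^+_\alpha\left(\sum_{i=1}^n X_i\right)\le\inf_{\pmb\gamma\in(1-\alpha)\Delta_n}\sum_{i=1}^n\RVaR_{\gamma_i,\gamma_0}(X_i)$ in the concave case; and in the convex case this same inequality is indeed the shape-free convolution bound of \cite{BLLW20} (though its proof uses a little more than comonotonic additivity and $\ES$-subadditivity, e.g.\ the inf-convolution representation of $\RVaR$). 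The paper itself, however, does none of this work: its entire proof is two citations --- Lemma 4.5 of \cite{BJW14}, which gives $\sup_{\mathbf X}\VaR_\alpha=\sup_{\mathbf X}\VaR^+_\alpha$ under continuity of the marginal quantile functions, followed by Proposition 1 of \cite{FHLX25}, which states exactly that $\sup_{\mathbf X}\VaR^+_\alpha\left(\sum_{i=1}^n X_i\right)$ equals the dual value for $\mathbf F\in(\M_{cx}^\alpha)^n\cup(\M_{cv}^\alpha)^n$. You are in effect re-deriving that cited proposition rather than invoking it, and that is where your proposal has genuine gaps.

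Two gaps, concretely. First, the attainment (lower) bound --- which you yourself flag as the main obstacle --- is never established. In the convex case, the appeal to ``analogous mixability of monotone-density tails'' is precisely the nontrivial content of Proposition 1 of \cite{FHLX25}: increasing-density tails are jointly mixable only under a mean condition (Wang--Wang-type results), and when that condition fails the minimizing $\pmb\gamma$ and the coupling must change; the heuristic that ``only one quantile level matters'' does not produce the required coupling. Even in the concave case, the tightness/limit argument you sketch needs the a.e.\ convergence of quantiles under weak convergence to be carried out, and it delivers only $\VaR^+_\alpha$. Second, your closing step is wrong as stated: the worst-case law of $\sum_i X_i$ cannot ``be taken to have no atom at its $\alpha$-quantile,'' because the very construction you propose --- complete mixing on the critical tail event --- forces $\sum_i X^*_i$ to equal the constant $c$ there, i.e.\ creates an atom of mass at least $\gamma_0^*$ at $c$. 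For such a law $\VaR^+_\alpha=c$, while $\VaR_\alpha$ drops to the essential supremum of the body part of the sum and can be strictly smaller than $c$. The correct repair is not atomlessness of a single worst-case coupling but an approximation across levels (e.g.\ $\VaR_\alpha\ge\VaR^+_{\alpha-\epsilon}$ together with continuity of the dual value in the level), which is exactly what the paper's citation of Lemma 4.5 of \cite{BJW14} supplies; without it, your argument proves the lemma only with $\VaR^+_\alpha$ in place of $\VaR_\alpha$.
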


\begin{proof}
By Lemma 4.5 of \cite{BJW14}, the continuity of $F_1^{-1},\dots, F_n^{-1}$ over $(0,1)$ implies that for $\alpha\in (0,1),$
$$\sup_{\mathbf X \in \mathcal{E}_n(\mathbf F)}\mathrm{VaR}_{\alpha}^+\left(\sum_{i=1}^nX_i\right)
= \sup_{\mathbf X \in \mathcal{E}_n(\mathbf F)}\mathrm{VaR}_{\alpha}\left(\sum_{i=1}^nX_i\right).$$
Applying Proposition 1 of \cite{FHLX25}, we obtain the desired result.
\end{proof} 
We introduce two additional reinsurance policies that will be used in our main results. Define
 $l_{a,b}(x):=a\min(x,b),$
 where $0\leq a\leq 1$ and $0\leq b\leq \infty,$ including the quota-share function as a special case (when $b=\infty$).  Let
 ${\bf l}_{\a,\b}=(l_{a_1,b_1},\dots,l_{a_n,b_n})$ with parameter domain
 $$\mathcal{A}_3=\{({\bf a}, {\bf b}): {\bf l}_{\a,\b}\in \mathcal I^n, 0\leq a_i\leq 1, 0\leq b_i\leq \infty,~i=1,\dots,n\}.$$ 
 Moreover, define
 $h_{a,b}(x):=a(x-b)_+,$ with  $ 0\leq a\leq\infty$ and $0\leq b\leq 1,$ including the quota-share ($b=0$) and stop-loss ($a=1$) functions as  special cases.  Let ${\bf h}_{\a,\b}=(h_{a_1,b_1},\dots,h_{a_n,b_n})$ with parameter domain $$\mathcal{A}_4=\{({\bf a}, {\bf b}): {\bf h}_{\a,\b}\in \mathcal{I}^n,  0\leq a_i\leq 1, 0\leq b_i\leq\infty, ~i=1,\dots,n\}.$$  
Note that $\mathcal{A}_4$ constitutes a subset of $\mathcal{A}_2$, indicating that when the RVaR risk measure degenerates to VaR, the structure of optimal ceded loss functions becomes more specific and constrained.

\begin{theorem}\label{thvar}  Let $V(\mathbf f)$ be given by \eqref{pro:VaR}. Suppose $F_1^{-1},\dots, F_n^{-1}$ are continuous on $(0,1)$ and $\alpha\in (0,1).$ We have the following conclusions.
\begin{enumerate}
\item[(i)] If $n=2$  then
\begin{align*}
\inf_{\f\in\mathcal{I}^n}V(\mathbf f) =\inf_{(\a,\b)\in\mathcal A_1}
  \inf_{\pmb{\gamma}\in(1-\alpha)\Delta_n} \overline G(\a, \b, {\pmb \gamma}),
\end{align*}
where 
\begin{equation}\label{bgab}\begin{aligned}
    \overline G({\bf a}, {\bf b}, {\pmb \gamma})=\sum_{i=1}^n \left\{\mathrm{VaR}_{\alpha_i}(X_i)-\VaR_{\alpha_i}(g_{a_i,b_i}(X_i))+\RVaR_{\gamma_i,\gamma_0}(g_{a_i,b_i}(X_i))\right\}.
\end{aligned}
\end{equation}
\item[(ii)] If $\mathbf F\in (\M_{cx}^{\alpha})^n,$ then
\begin{align*}
  \inf_{\f\in\mathcal{I}_{cv}^n}V(\mathbf f)  =\inf_{(\a,\b)\in\mathcal{A}_3}
  \inf_{\pmb{\gamma}\in(1-\alpha)\Delta_n} L(\a, \b, {\pmb \gamma}),
\end{align*}
where 
$$\begin{aligned}
    L(\a, \b, {\pmb \gamma})=\sum_{i=1}^n \left\{\mathrm{VaR}_{\alpha_i}(X_i)-\VaR_{\alpha_i}(l_{a_i,b_i}(X_i))+\RVaR_{\gamma_i,\gamma_0}(l_{a_i,b_i}(X_i))\right\}.
\end{aligned}$$
\item[(iii)] If $\mathbf F\in (\M_{cv}^{\alpha})^n,$ then
\begin{align*}
  \inf_{\f\in\mathcal{I}_{cx}^n}V(\mathbf f) =\inf_{(\a,\b)\in\mathcal{A}_4}
  \inf_{\pmb{\gamma}\in(1-\alpha)\Delta_n} H(\a, \b, {\pmb \gamma}),
\end{align*}
where 
$$\begin{aligned}
    H(\a, \b, {\pmb \gamma})=\sum_{i=1}^n \left\{\mathrm{VaR}_{\alpha_i}(X_i)-\VaR_{\alpha_i}(h_{a_i,b_i}(X_i))+\RVaR_{\gamma_i,\gamma_0}(h_{a_i,b_i}(X_i))\right\}.
\end{aligned}$$
\end{enumerate}
\end{theorem}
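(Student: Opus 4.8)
The plan is to mirror the three-step structure behind Theorem~\ref{th:RVaR}: strip off the part of $V(\mathbf f)$ that does not depend on the indemnities, turn the worst-case term into a dual $\inf_{\pmb\gamma}$ representation, and then reduce the resulting per-insurer subproblems to the claimed parametric families. First I would remove the constants. Since each $f_i\in\mathcal I$ is non-decreasing and $1$-Lipschitz, both $T_{f_i}(X_i)=X_i-f_i(X_i)$ and $f_i(X_i)$ are non-decreasing functions of $X_i$, hence comonotonic, with $X_i=T_{f_i}(X_i)+f_i(X_i)$. Comonotonic additivity of $\VaR$ gives $\VaR_{\alpha_i}(T_{f_i}(X_i))=\VaR_{\alpha_i}(X_i)-\VaR_{\alpha_i}(f_i(X_i))$, so that
\[ V(\mathbf f)=\sum_{i=1}^n\VaR_{\alpha_i}(X_i)-\sum_{i=1}^n\VaR_{\alpha_i}(f_i(X_i))+\sup_{\mathbf X\in\mathcal E_n(\mathbf F)}\VaR_\alpha\Big(\sum_{i=1}^n f_i(X_i)\Big), \]
where the first sum is a fixed constant.

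Next I would dualise the worst case. For (ii) and (iii), apply Lemma~\ref{lem:1} to the vector $(f_1(X_1),\dots,f_n(X_n))$; the only hypothesis to verify is that each $f_i(X_i)$ has its marginal in $\M_{cx}^\alpha$ (case ii) or $\M_{cv}^\alpha$ (case iii). Using $F_{f_i(X_i)}^{-1}=f_i\circ F_i^{-1}$ and the fact that $\M_{cx}^\alpha$ (resp.\ $\M_{cv}^\alpha$) is precisely the class with concave (resp.\ convex) quantile function on $(\alpha,1)$, this follows because a concave (resp.\ convex) non-decreasing $f_i$ composed with a concave (resp.\ convex) $F_i^{-1}$ is again concave (resp.\ convex) --- which is exactly where the assumptions $\mathbf F\in(\M_{cx}^\alpha)^n$ with $f_i$ concave (resp.\ $\mathbf F\in(\M_{cv}^\alpha)^n$ with $f_i$ convex) enter. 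Lemma~\ref{lem:1} then yields
\[ \sup_{\mathbf X\in\mathcal E_n(\mathbf F)}\VaR_\alpha\Big(\sum_{i=1}^n f_i(X_i)\Big)=\inf_{\pmb\gamma\in(1-\alpha)\Delta_n}\sum_{i=1}^n\RVaR_{\gamma_i,\gamma_0}(f_i(X_i)). \]
For (i) I would instead use the sharp two-marginal (Makarov-type) worst-case VaR identity, which holds for arbitrary continuous marginals and can be written in the same dual form with $n=2$; it needs no shape restriction on $\mathbf F$ and no convexity restriction on $\mathbf f$, which is exactly why (i) is confined to $n=2$ but ranges over all of $\mathcal I^2$.

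Finally I would carry out the per-insurer reduction. Interchanging the two infima and using that the summands decouple, for each fixed $\pmb\gamma$ the problem splits into minimising
\[ J_i(f_i):=\RVaR_{\gamma_i,\gamma_0}(f_i(X_i))-\VaR_{\alpha_i}(f_i(X_i)) \]
over the admissible $f_i$. Writing $q_i=F_i^{-1}$ and $\phi_i=f_i'$ (a $[0,1]$-valued marginal indemnity, non-increasing iff $f_i$ is concave, non-decreasing iff $f_i$ is convex), a Fubini computation rewrites $J_i=\int_0^\infty\phi_i(t)\,K_i(t)\,\mathrm{d}t$ with
\[ K_i(t)=\frac1{\gamma_0}\,\mathrm{Leb}\big(\{u\in(1-\gamma_i-\gamma_0,\,1-\gamma_i):u>F_i(t)\}\big)-\mathbf 1\{F_i(t)<\alpha_i\}. \]
The sign of $K_i$ switches once or twice according to whether $\alpha_i$ lies below, inside, or above the averaging window $(1-\gamma_i-\gamma_0,1-\gamma_i)$, and minimising the linear functional $\int\phi_i K_i$ over the relevant order class is a bang-bang problem: the optimal $\phi_i$ is the indicator of a single interval in (i), whose primitive is the layer $g_{a_i,b_i}$; a single non-increasing step in (ii), giving $l_{a_i,b_i}$; and a single non-decreasing step in (iii), giving $h_{a_i,b_i}$, with free step-height $a_i\in[0,1]$ and location $b_i$. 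Undoing the interchange of infima then reproduces exactly the expressions $\overline G$, $L$ and $H$.

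The crux is this last step: proving that over each order class the linear functional $\int\phi_i K_i$ is minimised by the stated one- or two-parameter shapes. This demands the case analysis on the position of $\alpha_i$ relative to the window, together with an exchange (rearrangement) argument showing that any non-extremal monotone density can be transported toward a single step without raising $J_i$; it is a streamlined version of the six-case analysis behind Theorem~\ref{th:RVaR}, the simplification coming from the insurer's use of the single-point functional $\VaR_{\alpha_i}$ rather than an averaged $\RVaR$. A secondary difficulty is establishing the unconditional two-marginal worst-case VaR identity invoked in (i) and handling the boundary conventions (the convention $\frac{0}{0}=0$, and left- versus right-quantiles) when the $F_i^{-1}$ are only assumed continuous.
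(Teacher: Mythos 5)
Your proposal is correct in outline and shares the paper's skeleton for the first two steps — stripping the constant part via $\VaR_{\alpha_i}(T_{f_i}(X_i))=\VaR_{\alpha_i}(X_i)-f_i(\VaR_{\alpha_i}(X_i))$, then dualising the worst case through Lemma~\ref{lem:1} after checking that the marginals of $f_i(X_i)$ stay in the right shape class (your quantile-composition argument, $F_{f_i(X_i)}^{-1}=f_i\circ F_i^{-1}$ with concave$\,\circ\,$concave or convex$\,\circ\,$convex, is an equivalent and cleaner version of the paper's CDF-level verification; your insistence on the unconditional Makarov identity for $n=2$ is also warranted, since Lemma~\ref{lem:1} as stated carries shape hypotheses that the paper's own proof of (i) does not verify). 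Where you genuinely diverge is the per-insurer reduction. The paper does not set up a linear functional in the marginal indemnity $\phi_i=f_i'$ at all: for each fixed $\pmb\gamma$ and each $f_i$ it simply constructs the candidate contract anchored at the single point $\VaR_{\alpha_i}(X_i)$ — in (i) the layer $g_{a_i,b_i}$ with $a_i=\VaR_{\alpha_i}(X_i)-f_i(\VaR_{\alpha_i}(X_i))$, $b_i=\VaR_{\alpha_i}(X_i)$; in (ii) the capped proportional $l_{a_i,b_i}$ with $a_i=f_i(\VaR_{\alpha_i}(X_i))/\VaR_{\alpha_i}(X_i)$; in (iii) the shifted excess-of-loss $h_{a_i,b_i}$ with slope $(f_i)_+'(\VaR_{\alpha_i}(X_i))$ — and observes that in each case the candidate matches $f_i$ at $\VaR_{\alpha_i}(X_i)$ while lying pointwise below it (by $1$-Lipschitzness, concavity, or the convex supporting-line property, respectively), so monotonicity of $\RVaR_{\gamma_i,\gamma_0}$ immediately gives $\RVaR_{\gamma_i,\gamma_0}(\text{candidate})-\text{candidate}(\VaR_{\alpha_i}(X_i))\le\RVaR_{\gamma_i,\gamma_0}(f_i(X_i))-f_i(\VaR_{\alpha_i}(X_i))$; the reverse inequality is trivial. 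This makes the crux a three-line argument and requires none of the case analysis you anticipate — precisely because the insurer's term is a single quantile evaluation, so one only needs domination plus coincidence at one point, not control of sign changes of a kernel. Your bang-bang route is sound (I checked your kernel $K_i$: the Fubini identity holds, $\{K_i<0\}=\{t:1-\gamma_i-\gamma_0<F_i(t)<\alpha_i\}$ is an interval, and the Choquet/mixture representation of monotone $[0,1]$-valued $\phi_i$ reduces (ii) and (iii) to single steps), and it has the merit of deriving the optimal shapes and their $\pmb\gamma$-dependent locations rather than guessing and verifying them; but it is heavier machinery, needs the extreme-point and integrability details you defer, and — as you yourself flag — the decisive step remains a sketch, whereas the paper's anchored-domination trick closes it outright.
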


Further, for the case $n=2$, we can offer a simpler expression than that of (i) of Theorem \ref{thvar} by applying the result in \cite{M81}, which shows that 
\begin{equation}\label{eq:var2}
    \begin{aligned}
        \sup _{\left(X_1,X_2\right) \in \mathcal{E}_2(\mathbf{F})} \operatorname{VaR}^+_{\alpha}\left( X_1+X_2\right)=\inf_{t\in [0,1-\alpha]} \{\VaR_{\alpha+t}(X_1)+\VaR_{1-t}(X_2) \}.  
    \end{aligned}
\end{equation}

 \begin{proposition}\label{n2} Let $V(\mathbf f)$ be given by \eqref{pro:VaR}.  For $n=2$, suppose that $F_1^{-1}$ and $F_2^{-1}$ are continuous on $(0,1)$, then
\begin{equation*}\label{simple} \begin{aligned}
 \inf_{(f_1,f_2)\in \mathcal I^2} V({\bf f})
=\inf_{(a_1,a_2,b_1,b_2)\in\mathcal A_1}\inf_{t\in[0,1-\alpha]}\overline G_1(a_1,a_2,b_1,b_2,t),
 \end{aligned}\end{equation*}
 where
\begin{equation}\label{eq:G1bar}\begin{aligned}\overline G_1(a_1,a_2,b_1,b_2,t)=&\VaR_{\alpha_1}(X_1)+\VaR_{\alpha_2}(X_2)+ \VaR_{\alpha+t}(g_{a_1,b_1}(X_1))- \VaR_{\alpha_1}(g_{a_1,b_1}(X_1))\\&+\VaR_{1-t}(g_{a_2,b_2}(X_2))- \VaR_{\alpha_2}(g_{a_2,b_2}(X_2)).
 \end{aligned}\end{equation}
Moreover, $(g_{a_1,b_1},g_{a_2,b_2})$ are the optimal indemnity functions for the worst-case scenario, provided
 $$(a_1,a_2,b_1,b_2)\in \arg \inf_{(a_1,a_2,b_1,b_2)\in\mathcal A_1}\left\{\inf_{t\in[0,1-\alpha]}\overline G_1(a_1,a_2,b_1,b_2,t)\right\}.$$
 \end{proposition}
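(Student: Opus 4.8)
The plan is to build on Theorem~\ref{thvar}(i), which for $n=2$ already identifies the layered functions $g_{a_i,b_i}$ as the worst-case optimal indemnities and expresses the optimal value through the simplex dual $\inf_{\pmb\gamma\in(1-\alpha)\Delta_2}\overline G(\a,\b,\pmb\gamma)$. The only remaining task is to replace the representation of the worst-case ceded-loss term $\sup_{\mathbf X\in\mathcal E_2(\mathbf F)}\VaR_\alpha(g_{a_1,b_1}(X_1)+g_{a_2,b_2}(X_2))$ by the two-marginal Makarov formula \eqref{eq:var2}, which holds for $n=2$ without any shape restriction. First I would fix an admissible pair $(\a,\b)\in\mathcal A_1$ and write $V(\mathbf g_{\a,\b})$ as the two retained-risk terms $\VaR_{\alpha_i}(X_i-g_{a_i,b_i}(X_i))$ plus the worst-case ceded-loss term. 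Since $X_i$, $g_{a_i,b_i}(X_i)$ and $X_i-g_{a_i,b_i}(X_i)$ are all nondecreasing functions of $X_i$, hence pairwise comonotonic, comonotonic additivity of $\VaR$ yields $\VaR_{\alpha_i}(X_i-g_{a_i,b_i}(X_i))=\VaR_{\alpha_i}(X_i)-\VaR_{\alpha_i}(g_{a_i,b_i}(X_i))$, which produces the first four terms of $\overline G_1$ in \eqref{eq:G1bar}.

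For the worst-case term I would apply \eqref{eq:var2} to $g_{a_1,b_1}(X_1)$ and $g_{a_2,b_2}(X_2)$, whose marginals are the pushforwards $G_i$ of $F_i$ under $g_{a_i,b_i}$. The key observation is that, since each $g_{a_i,b_i}$ is nondecreasing and the space is atomless, ranging $(X_1,X_2)$ over all couplings in $\mathcal E_2(\mathbf F)$ induces exactly all couplings of $(g_{a_1,b_1}(X_1),g_{a_2,b_2}(X_2))$ in $\mathcal E_2(\mathbf{G})$ with $\mathbf{G}=(G_1,G_2)$: the forward inclusion is immediate, and the converse follows by lifting a coupling of the ceded losses back through independent within-fibre randomisation. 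Hence the ceded-loss problem is itself a two-marginal dependence-uncertainty problem, and \eqref{eq:var2} gives
$$\sup_{\mathbf X\in\mathcal E_2(\mathbf F)}\VaR^+_\alpha\big(g_{a_1,b_1}(X_1)+g_{a_2,b_2}(X_2)\big)=\inf_{t\in[0,1-\alpha]}\big\{\VaR_{\alpha+t}(g_{a_1,b_1}(X_1))+\VaR_{1-t}(g_{a_2,b_2}(X_2))\big\},$$
supplying the remaining two terms of $\overline G_1$. Combining with the previous paragraph and taking the infimum over $(\a,\b)\in\mathcal A_1$ and $t\in[0,1-\alpha]$ gives the claimed identity; the minimizing layer pair of this finite-dimensional search is read off as the worst-case optimal $(g_{a_1,b_1},g_{a_2,b_2})$, since Theorem~\ref{thvar}(i) already guarantees that layers are optimal over all of $\mathcal I^2$.

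The main obstacle is the discrepancy between the right quantile $\VaR^+_\alpha$ appearing in \eqref{eq:var2} and the left quantile $\VaR_\alpha$ used in the objective $V$. I would resolve it exactly as in the proof of Lemma~\ref{lem:1}: the continuity of $F_1^{-1}$ and $F_2^{-1}$ on $(0,1)$ together with Lemma~4.5 of \cite{BJW14} yields $\sup_{\mathbf X\in\mathcal E_2(\mathbf F)}\VaR^+_\alpha(\cdot)=\sup_{\mathbf X\in\mathcal E_2(\mathbf F)}\VaR_\alpha(\cdot)$ for the ceded-loss sum, so that the Makarov $t$-form and the objective concern the same quantity. The delicate point is that the transformed marginals $G_i$ carry atoms (at $0$ and at $b_i-a_i$) even though $F_i$ is continuous; one must therefore apply the continuity argument at the level of the underlying $X_i$ and transport it through the lifting of the preceding paragraph, rather than invoking it for the $G_i$ directly.
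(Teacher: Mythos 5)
Your proposal is correct, but it runs in the opposite order from the paper's own proof, so the two are genuinely different arguments. The paper never invokes Theorem~\ref{thvar}(i) here: it applies the Makarov representation \eqref{eq:var2}, together with the continuity of $F_1^{-1},F_2^{-1}$ and the identity $\VaR_{\beta}(f_i(X_i))=f_i(\VaR_{\beta}(X_i))$, directly to an \emph{arbitrary} $\mathbf f\in\mathcal I^2$, obtaining
\[
V(\mathbf f)=\sum_{i=1}^2\VaR_{\alpha_i}(X_i)-\sum_{i=1}^2 f_i\bigl(\VaR_{\alpha_i}(X_i)\bigr)
+\inf_{t\in[0,1-\alpha]}\bigl\{f_1(\VaR_{\alpha+t}(X_1))+f_2(\VaR_{1-t}(X_2))\bigr\},
\]
and only then reduces to layers \emph{inside} this representation: for each $\mathbf f$ it sets $a_i=\VaR_{\alpha_i}(X_i)-f_i(\VaR_{\alpha_i}(X_i))$, $b_i=\VaR_{\alpha_i}(X_i)$, and uses $f_i\ge g_{a_i,b_i}$ pointwise with equality at $\VaR_{\alpha_i}(X_i)$ to show the layer pair does at least as well for every $t$. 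You instead reduce to layers first, by citing Theorem~\ref{thvar}(i), and then convert the worst-case ceded-loss term for layer indemnities into the Makarov $\inf_t$ form via comonotonic additivity, the coupling-lifting argument, and the left/right-quantile equality. Both routes are valid, and yours is not circular, since the paper's proof of Theorem~\ref{thvar}(i) does not rely on Proposition~\ref{n2}. What each buys: the paper's proof is self-contained and shorter, needing only a pointwise domination; yours leans on the heavier Theorem~\ref{thvar}(i) but in exchange handles only the layer case and makes explicit, for two marginals, the equivalence between the simplex dual $\inf_{\pmb{\gamma}}$ and the Makarov dual $\inf_t$. Notably, the two technical ingredients you spell out (lifting couplings through the maps $g_{a_i,b_i}$, and reconciling $\VaR^+_\alpha$ with $\VaR_\alpha$) are exactly the steps the paper compresses into the phrase ``by the result in \cite{M81} and the continuity of $F_1^{-1}$ and $F_2^{-1}$''; the paper needs them too, for general $f_i$ rather than layers.

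One correction to your final paragraph: your concern that Lemma~4.5 of \cite{BJW14} cannot be invoked for the transformed marginals directly is unfounded. The hypothesis of that lemma is continuity of the \emph{quantile} functions, not of the distribution functions. Atoms of the law of $g_{a_i,b_i}(X_i)$ (at $0$ and at $b_i-a_i$) produce flat stretches of its quantile function, not jumps, and indeed that quantile function equals $g_{a_i,b_i}\circ F_i^{-1}$, which is continuous on $(0,1)$ as a composition of continuous nondecreasing maps. So the direct application is legitimate, and your proposed detour of transporting the continuity argument through the lifting, while not wrong, is unnecessary.
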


\begin{remark}\label{rem:3}
We observe from \eqref{eq:G1bar} that if either $\alpha \ge \alpha_1$ or $\alpha\ge \alpha_2$, then the following inequalities hold for $t\in [0,1-\alpha]$:
\[
\VaR_{\alpha+t}\big(g_{a_1,b_1}(X_1)\big) - g_{a_1,b_1}\big(\VaR_{\alpha_1}(X_1)\big) \ge 0
\quad \text{or} \quad
\VaR_{1-t}\big(g_{a_2,b_2}(X_2)\big) - g_{a_2,b_2}\big(\VaR_{\alpha_2}(X_2)\big) \ge 0.
\]
Consequently,  the optimal value of the objective function coincides with the no-insurance benchmark, i.e.,
$
\VaR_{\alpha_i}(X_i),~i=1,2.
$
This implies that under these parameter conditions, purchasing insurance does not provide any improvement over the no-reinsurance case using $\VaR$ to quantify the risk.
\end{remark}

To obtain the optimal reinsurance strategies, we need to fix the parameters of these functions in Theorem \ref{thvar}, which will be discussed in the following proposition.
The proof follows along the same lines as that of Proposition \ref{exist} and is therefore omitted. \begin{proposition} 
 The parameters of the optimal ceded loss functions $ {\bf g}_{\a,\b}, {\bf l}_{\a,\b},{\bf h}_{\a,\b}$ in (i)-(iii) of Theorem \ref{thvar} are well defined and are determined as follows. 
   \begin{enumerate}
   \item[(i)] For ${\bf g}_{\a,\b}$,
  $$({\bf a}^*, {\bf b}^*)\in \arg\inf_{(\a,\b)\in\mathcal{A}_1}\left\{
  \inf_{\pmb{\gamma}\in(1-\alpha)\Delta_n}\overline G({\bf a}, {\bf b}, {\pmb \gamma})\right\};$$
 \item[(ii)] For ${\bf l}_{\a,\b}$,
$$ (\a^*, \b^*)=\arg\inf_{(\a,\b)\in\mathcal{A}_3}
  \left\{\inf_{\pmb{\gamma}\in(1-\alpha)\Delta_n} L(\a, \b, {\pmb \gamma})\right\};$$
 \item[(iii)] For ${\bf h}_{\a,\b}$,
 $$(\a^*,\b^*)=\arg\inf_{(\a,\b)\in\mathcal{A}_4}
  \left\{\inf_{\pmb{\gamma}\in(1-\alpha)\Delta_n} H(\a, \b, {\pmb \gamma})\right\}.$$
\end{enumerate}

\end{proposition}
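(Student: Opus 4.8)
The plan is to argue, exactly as in the proof of Proposition~\ref{exist}, that in each of the three cases the quantity to be minimized is a continuous function of the free parameters over a domain that can be taken to be compact; the assertion that the relevant $\arg\inf$ is nonempty---equivalently, that the optimal parameters are well defined---then follows from the Weierstrass extreme value theorem. I would treat the three cases in parallel, since $\overline G$, $L$ and $H$ share the same structure $\sum_i\{\VaR_{\alpha_i}(X_i)-\VaR_{\alpha_i}(\phi_i(X_i))+\RVaR_{\gamma_i,\gamma_0}(\phi_i(X_i))\}$, differing only in the indemnity family ($g_{a,b}$, $l_{a,b}$, $h_{a,b}$) and in the parameter domain ($\mathcal A_1$, $\mathcal A_3$, $\mathcal A_4$).

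First I would reduce every risk functional to the quantile function of the corresponding marginal. Because each of $g_{a,b}$, $l_{a,b}$, $h_{a,b}$ is nondecreasing and continuous, it is comonotonic with the identity, so for any level $p\in(0,1)$ one has $\VaR_p(\phi_i(X_i))=\phi_i(F_i^{-1}(p))$, and likewise
\[
\RVaR_{\gamma_i,\gamma_0}\big(\phi_i(X_i)\big)=\frac{1}{\gamma_0}\int_{\gamma_i}^{\gamma_i+\gamma_0} \phi_i\big(F_i^{-1}(1-u)\big)\,\mathrm d u .
\]
Since $F_i^{-1}$ is assumed continuous on $(0,1)$ and $(a,b,x)\mapsto\phi_i(x)$ is jointly continuous, these expressions are jointly continuous in $(\a,\b,\pmb\gamma)$ on the region $\gamma_0>0$, and the integral representation shows (using $X_i\in L^1$ and dominated convergence) that the $\RVaR$ terms extend continuously to the boundary face $\gamma_0\downarrow0$, where they reduce to the corresponding $\VaR$. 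Hence the full objective ($\overline G$, $L$ or $H$) is continuous in all of its arguments.

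Next I would compactify the domain. The level vector $\pmb\gamma$ ranges over $(1-\alpha)\Delta_n$, whose closure is a compact simplex; by the continuous extension just described, the inner infimum over $\pmb\gamma$ is attained on this closure and coincides with the infimum over $(1-\alpha)\Delta_n$, so it defines a continuous value function of the remaining parameters. For the indemnity parameters, any proportional parameter is confined to $[0,1]$, while each threshold parameter ranges over $[0,\infty]$; the only source of noncompactness is a threshold tending to $\infty$. This is harmless: $g_{a,\infty}(x)=(x-a)_+$ and $l_{a,\infty}(x)=ax$ remain admissible indemnities, any threshold exceeding the essential supremum of $X_i$ leaves the objective unchanged, and the objective extends continuously to the value $\infty$. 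Adjoining these boundary points yields compact parameter domains $\overline{\mathcal A_1},\overline{\mathcal A_3},\overline{\mathcal A_4}$, and Weierstrass then delivers a minimizer in each case.

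The main obstacle is the continuity argument at the two degenerate edges: the face $\gamma_0\downarrow0$ of the simplex, where $\RVaR$ collapses to $\VaR$ and one must rule out the merely lower-semicontinuous behaviour of the left quantile, and the limit of a threshold to $\infty$. Both are resolved by the standing assumption that $F_i^{-1}$ is continuous on $(0,1)$: it upgrades $\VaR_p(\phi_i(X_i))=\phi_i(F_i^{-1}(p))$ to a genuinely continuous (not merely semicontinuous) function of the level and the parameters, and it makes the integrand in the $\RVaR$ representation continuous, so that dominated convergence yields continuity up to the boundary faces. Once joint continuity is established, the reduction to a compact domain and the application of the extreme value theorem are routine and parallel Proposition~\ref{exist} verbatim; for this reason the detailed verification may be omitted.
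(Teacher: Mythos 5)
Your proposal is correct and follows essentially the same route as the paper: the paper omits this proof, referring to Proposition~\ref{exist}, whose argument is precisely yours --- represent each $\VaR$/$\RVaR$ term through the quantile function $F_i^{-1}$, use the continuity of $F_i^{-1}$ on $(0,1)$ to get (extended-real-valued) joint continuity of the objective up to the boundary face $\gamma_0\downarrow 0$, and conclude attainment of the infimum on the closed (compactified) parameter set. If anything, your explicit compactification of the threshold parameters at $\infty$ and the appeal to Weierstrass is slightly more careful than the paper's bare appeal to closedness.
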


In Theorem 1 of \cite{FHLX25}, the obtained optimal indemnity functions have the form (i) $g_{a,b}$ for $n=2$; (ii) $c(x-a)_+ + d(x-b)_+$ with $0\leq a\leq b\leq \infty$ and $0\leq c,d\leq c+d\leq 1$ for $\mathbf F\in (\M_{cx}^{\alpha})^n$; (iii) $l_{a,b}(x)=a\min(x,b)$ with $0\leq a\leq 1$ and $0\leq b\leq \infty$ for $\mathbf F\in (\M_{cv}^{\alpha})^n$. For the cases of $\mathbf F\in (\M_{cx}^{\alpha})^n$ and $\mathbf F\in (\M_{cv}^{\alpha})^n$, the optimal indemnity functions in \cite{FHLX25} are completely different from those derived in Theorem \ref{thvar} due to the different setups of the two models.   

\section{Optimal solutions for i.i.d. risks}\label{sec:5}
In the previous section, we analyzed the optimal reinsurance problem under fully unknown dependence, where the joint distribution of $(X_1,\dots,X_n)$ was unspecified and a worst-case formulation was necessary. In this section, we focus on the case where the risks $X_1,\dots,X_n$ are i.i.d.. In particular, when $n\to\infty$, the aggregated risk exhibits asymptotic normality, which allows us to derive explicit and tractable expressions for the optimal indemnity functions.  Our target is to minimize
 \begin{equation}\label{eq:prob11}
V(\mathbf f): =\sum_{i=1}^n \mathrm{RVaR}_{\beta_i,\alpha_i}\left(T_{f_i}(X_i)\right)+ \mathrm{RVaR}_{\beta,\alpha}\left(R_{\mathbf f}(\mathbf X)\right),~~~\bf f\in\mathcal I^n.
\end{equation}
The following proposition provides a classical asymptotic normality result for sums of i.i.d.\ transformed risks.

\begin{proposition}\label{CLT}
Let $\{X_i\}_{i=1}^n$ be a sequence of nonnegative i.i.d. random variables with mean $\mathbb{E}[X_i] = \mu<\infty$ and variance $\mathrm{Var}(X_i) = \sigma^2<\infty$. Define $S_n = \sum_{i=1}^n f_i(X_i)$ with $f_i\in \mathcal{I}$. If $\var(S_n)\to\infty$ as $n\to\infty$, then
$$
\frac{S_n - \mathbb{E}[S_n]}{\sqrt{\mathrm{Var}(S_n)}} \xrightarrow{d} \mathcal{N}(0, 1). 
$$
\end{proposition}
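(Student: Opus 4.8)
The plan is to regard $S_n-\E[S_n]$ as a sum of independent, mean-zero summands with uniformly bounded second moments, and to apply the Lindeberg--Feller central limit theorem; Lyapunov's criterion is unavailable here because only two moments of $X_i$ are assumed, so a $(2+\delta)$-moment bound cannot be extracted. Write $Y_i=f_i(X_i)$, $\tilde Y_i=Y_i-\E[Y_i]$, and $s_n^2:=\var(S_n)=\sum_{i=1}^n\var(Y_i)$, so that $(S_n-\E[S_n])/s_n=\sum_{i=1}^n \tilde Y_i/s_n$. Since each $Y_i$ is a measurable function of the single variable $X_i$ and the $X_i$ are independent, the $\tilde Y_i$ are independent with mean zero, and $\{\tilde Y_i/s_n:1\le i\le n\}$ is the triangular array to which I would apply the theorem.

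First I would extract the uniform moment bound built into the class $\mathcal I$: every $f\in\mathcal I$ obeys $0\le f(x)=f(x)-f(0)\le x$, so $0\le Y_i\le X_i$, whence $\E[Y_i^2]\le\E[X_i^2]=\mu^2+\sigma^2<\infty$ and $\E[Y_i]\le\mu$ for all $i$. In particular $|\tilde Y_i|\le Y_i+\E[Y_i]\le X_i+\mu$. The task then reduces to verifying the Lindeberg condition
\[
L_n(\epsilon):=\frac{1}{s_n^2}\sum_{i=1}^n\E\big[\tilde Y_i^2\,\mathbf 1_{\{|\tilde Y_i|>\epsilon s_n\}}\big]\longrightarrow 0\qquad\text{for every }\epsilon>0.
\]
Using $|\tilde Y_i|\le X_i+\mu$ and the fact that the $X_i$ share the common law of $X_1$, each truncated second moment is dominated by $\delta_n:=\E\big[(X_1+\mu)^2\,\mathbf 1_{\{X_1+\mu>\epsilon s_n\}}\big]$, which does not depend on $i$; and since $s_n\to\infty$ while $\E[(X_1+\mu)^2]<\infty$, dominated convergence gives $\delta_n\to0$.

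The hard part will be converting $\delta_n\to0$ into $L_n(\epsilon)\to0$, and this is the step I expect to be the main obstacle. The crude summation only yields $L_n(\epsilon)\le n\delta_n/s_n^2$, and the prefactor $n/s_n^2$ is not controlled by the hypothesis $s_n^2\to\infty$ alone, since the individual variances $\var(Y_i)$ may decay and allow $s_n^2$ to grow strictly more slowly than $n$. The verification must therefore play the growth rate of $\var(S_n)$ off against the tail of $X_1^2$, rather than relying on the total-variance condition in isolation. In the regime actually needed for the i.i.d.\ application, where the relevant contracts are identical, $f_i\equiv f$, one has $s_n^2=n\,\var(f(X_1))$ growing linearly, the prefactor $n/s_n^2$ is bounded, and the Lindeberg condition follows immediately (equivalently, the classical Lindeberg--L\'evy theorem applies to the i.i.d.\ summands $f(X_i)$). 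I would thus first settle this identically-distributed case, and only then attempt the heterogeneous case under whatever additional lower bound on the variances is needed to keep $n/s_n^2$ in check. Once the Lindeberg condition is established, the Lindeberg--Feller theorem yields $\sum_{i=1}^n\tilde Y_i/s_n\xrightarrow{d}\mathcal N(0,1)$, which is the assertion.
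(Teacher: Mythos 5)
Your approach is the same as the paper's: both reduce the claim to the Lindeberg--Feller theorem for the independent summands $f_i(X_i)$, and both exploit the $1$-Lipschitz property to dominate the centred summands by a quantity whose law does not depend on $i$ (you use $|\tilde Y_i|\le X_i+\mu$, the paper uses $|f_i(X_i)-\mathbb{E}[f_i(X_i)]|\le |X_i-\mu|+\mathbb{E}|X_i-\mu|$), so that the truncated second moments are bounded by a common $\delta_n\to 0$. Where you diverge is at exactly the step you flag as the main obstacle: you note that this only gives $L_n(\epsilon)\le n\delta_n/s_n^2$, that $\mathrm{Var}(S_n)\to\infty$ alone does not control $n/s_n^2$, and you therefore complete the argument only when $n/s_n^2$ stays bounded (in particular for identical contracts $f_i\equiv f$), leaving the heterogeneous case conditional on an additional variance lower bound.

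You should know that the paper attempts to close this gap, but its argument is not valid, so your caution is justified rather than a defect of your proposal. The paper writes the Lindeberg sum as a variance-weighted average of the per-index ratios $\rho_i(n)=\mathbb{E}\big[(f_i(X_i)-\mathbb{E}[f_i(X_i)])^2\id_{A_i}\big]/\mathrm{Var}(f_i(X_i))$, bounds it by $\max_{i\in[n]}\rho_i(n)$, and then shows only that $\rho_i(n)\to0$ as $n\to\infty$ for each \emph{fixed} $i$. Since the maximum runs over an index set that grows with $n$, pointwise convergence in $i$ says nothing about $\max_{i\le n}\rho_i(n)$ without uniformity, and uniformity is precisely what can fail: the numerators are uniformly bounded by $\delta_n$, but the denominators $\mathrm{Var}(f_i(X_i))$ may decay in $i$ --- this is your $n/s_n^2$ obstruction in another guise. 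Indeed the proposition is false as stated: let $X$ put mass of order $4^{-k}k^{-2}$ on the points $2^k$ (so $\mathbb{E}[X^2]<\infty$), let the contracts be the layers $g_k(x)=\min\{(x-2^{k-1})_+,\,2^{k-1}\}$, and repeat the $k$-th layer about $1/\mathbb{P}(X\ge 2^k)$ times. Each $g_k(X_i)$ is then a scaled Bernoulli, each block sum is asymptotically a scaled centred Poisson variable, $\mathrm{Var}(S_n)\to\infty$, yet the top block carries a fraction of the total variance bounded away from zero, so along block ends the normalized sum has third cumulant bounded away from zero and cannot converge to $\mathcal{N}(0,1)$. The statement therefore needs a further hypothesis --- identical contracts, or more generally $\liminf_n \mathrm{Var}(S_n)/n>0$, or a uniform lower bound on $\mathrm{Var}(f_i(X_i))$ --- and under any such condition your argument goes through verbatim. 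In short: your proof is correct in the regime you claim, and the part you declined to claim is exactly the part where the paper's own proof breaks down.
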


The condition $\mathrm{Var}(S_n) \to \infty$ in Proposition~\ref{CLT} captures the natural growth of total accumulated risk as the number of risks increases, assuming that the $f_i$ are nontrivial (not identically zero).

Because of the conclusion in Proposition \ref{CLT},  we suppose that $R_{\mathbf f}(\mathbf X)\sim \mathcal{N}(\E(R_{\mathbf f}(\mathbf X)), \mathrm{Var}(R_{\mathbf f}(\mathbf X)))$ in the following two subsections.

\subsection{The results for RVaR} Recall that 
$
\bar \alpha_i= 1-\beta_i-\alpha_i,  \bar \beta_i =1-\beta_i, 
\bar \alpha= 1-\beta-\alpha,$ and $\bar \beta =1-\beta.
$
Then the problem in equation \eqref{eq:prob11} asymptotically becomes
\begin{equation}\label{pro:1}
\min_{\mathbf f\in \mathcal I^n} \widetilde V(\mathbf f) 
:= \sum_{i=1}^n \mathrm{RVaR}_{\beta_i, \alpha_i} (X_i - f_i(X_i)) 
   + \mu(\mathbf f) 
   + \sigma(\mathbf f) \frac{1}{\alpha} \int_{\bar\alpha}^{\bar\beta} \Phi^{-1}(\gamma) \mathrm d\gamma,
\end{equation}
where $\Phi$ is the cumulative distribution function of a standard normal random variable, and
\[
\mu(\mathbf f) := \sum_{i=1}^n \mathbb{E}[f_i(X_i)], \qquad 
\sigma^2(\mathbf f) := \sum_{i=1}^n \mathrm{Var}(f_i(X_i)).
\]
\begin{remark}
The optimization problem in \eqref{pro:1} admits an alternative interpretation in the context of collective reinsurance purchasing. Specifically, it can be viewed as minimizing the aggregate risk exposure of $n$ insurers who jointly purchase reinsurance for their respective business lines, with premiums calculated via a mean-standard deviation principle, where $\frac{1}{\alpha} \int_{\bar\alpha}^{\bar\beta} \Phi^{-1}(\gamma) \mathrm d\gamma$ serves as a loading coefficient. 
In the special case when $n=1$, our framework reduces to the single-insurer problem and generalizes part of the results of \citet{C12}, who considered optimal reinsurance under VaR and ES criteria with mean-standard deviation premium principles. \end{remark}

The following lemma is well known; see, e.g., Property 3.4.19 in \cite{DDGK05} and Lemma A.2 in \cite{C12}.
\begin{lemma} \label{lem:3}Provided that the random variables $Y_1$ and $Y_2$ have finite expectations, if they satisfy
$$
\mathbb{E}\left[Y_1\right]=\mathbb{E}\left[Y_2\right], \quad F_{Y_1}(t) \leq F_{Y_2}(t),\quad t<t_0,~~S_{Y_1}(t) \leq S_{Y_2}(t),\quad t \geq t_0
$$
for some $t_0 \in \mathbb{R}$, then $Y_1 \leq_{c x} Y_2$, i.e.,
$$
\mathbb{E}\left[G\left(Y_1\right)\right] \leq \mathbb{E}\left[G\left(Y_2\right)\right]
$$
for any convex function $G(x)$ provided that the expectations exist.
\end{lemma}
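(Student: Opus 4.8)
The plan is to reduce the convex-order conclusion to a stop-loss (increasing-convex) dominance and then read it off from the single-crossing hypothesis. I would first recall the classical characterization that, for integrable $Y_1,Y_2$, one has $Y_1\leq_{cx}Y_2$ precisely when $\E[Y_1]=\E[Y_2]$ and the stop-loss dominance $\E[(Y_1-d)_+]\leq \E[(Y_2-d)_+]$ holds for every $d\in\R$. The forward implication is immediate by testing the definition against $G(x)=\pm x$ and $G(x)=(x-d)_+$; the converse is the direction I need, and it follows by representing an arbitrary convex $G$ through its nonnegative second-derivative (Stieltjes) measure as a mixture of the elementary convex maps $x\mapsto(x-d)_+$ and $x\mapsto(d-x)_+$ plus an affine term, the affine part being handled by the equal-means identity $\E[(d-Y)_+]=\E[(Y-d)_+]-(\E[Y]-d)$. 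Since $\E[Y_1]=\E[Y_2]$ is assumed, it then suffices to establish the stop-loss inequality.

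For this, I would use the layer-cake identity $\E[(Y-d)_+]=\int_d^\infty S_Y(t)\,\mathrm dt$ and define
$$\Psi(d):=\E[(Y_2-d)_+]-\E[(Y_1-d)_+]=\int_d^\infty\big(S_{Y_2}(t)-S_{Y_1}(t)\big)\,\mathrm dt,$$
aiming to show $\Psi\geq 0$. The hypothesis forces a single sign change of the integrand: for $t<t_0$ the bound $F_{Y_1}\leq F_{Y_2}$ gives $S_{Y_2}(t)-S_{Y_1}(t)\leq 0$, whereas for $t\geq t_0$ the bound $S_{Y_1}\leq S_{Y_2}$ gives $S_{Y_2}(t)-S_{Y_1}(t)\geq 0$. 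Consequently $\Psi'(d)=-(S_{Y_2}(d)-S_{Y_1}(d))$ is nonnegative on $(-\infty,t_0)$ and nonpositive on $(t_0,\infty)$, so $\Psi$ is unimodal, rising and then falling.

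It remains to fix the two endpoint values. Writing $\E[Y]=\int_0^\infty S_Y\,\mathrm dt-\int_{-\infty}^0 F_Y\,\mathrm dt$ and using $F_{Y_2}-F_{Y_1}=-(S_{Y_2}-S_{Y_1})$ yields the global identity $\lim_{d\to-\infty}\Psi(d)=\int_{-\infty}^\infty(S_{Y_2}-S_{Y_1})\,\mathrm dt=\E[Y_2]-\E[Y_1]=0$, while $\lim_{d\to+\infty}\Psi(d)=0$ because both stop-loss transforms vanish at infinity under finite expectations. A function that starts at $0$, increases, and then decreases back to $0$ is nonnegative throughout; hence $\Psi\geq 0$, which is the required stop-loss dominance, and the reduction in the first paragraph then delivers $Y_1\leq_{cx}Y_2$. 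The only delicate points are the measure-theoretic reduction in the first paragraph and the justification of the two limits under mere integrability; as the statement is classical (Property 3.4.19 of \cite{DDGK05}, Lemma A.2 of \cite{C12}), these can be cited rather than rederived.
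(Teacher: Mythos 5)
Your proof is correct. Note that the paper itself does not prove this lemma but simply cites Property 3.4.19 of \cite{DDGK05} and Lemma A.2 of \cite{C12}; your argument --- reducing convex order to equal means plus stop-loss dominance, and then deducing the stop-loss inequality from the single crossing of the survival functions via the unimodality of $\Psi(d)=\int_d^\infty\big(S_{Y_2}(t)-S_{Y_1}(t)\big)\,\mathrm{d}t$ together with its zero limits at $\pm\infty$ --- is exactly the standard cut-criterion proof underlying those references, with the delicate points (absolute integrability of $S_{Y_2}-S_{Y_1}$ and the identification of the limit at $-\infty$ with $\mathbb{E}[Y_2]-\mathbb{E}[Y_1]$) handled correctly.
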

Theorem~\ref{RVaR_as} characterizes the explicit {form of the optimal retention strategies} in the asymptotic framework. 
\begin{theorem} \label{RVaR_as}
Let $ \widetilde  V(\f)$ be given by \eqref{pro:1}. We have 
$$\begin{aligned}\inf_{\mathbf f \in\mathcal I^n} \widetilde  V(\f)=  \inf_{(\a,\b)\in\mathcal{A}_1}   \widetilde  G(\a,\b),\end{aligned}$$ where \begin{equation}\label{eq:tildeG}\begin{aligned}\widetilde G({\bf a}, {\bf b})&=\sum_{i=1}^n \left\{\mathrm{RVaR}_{\beta_i,\alpha_i}(X_i) -  \mathrm{RVaR}_{\beta_i,\alpha_i}\left(g_{a_i,b_i}(X_i)\right)+  \mathbb{E}[g_{a_i,b_i}(X_i)] \right\}
\\&+\left(\sum_{i=1}^n \mathrm{Var}(g_{a_i,b_i}(X_i))\right)^{1/2} \frac{1}{\alpha}\int_{\bar\alpha}^{\bar\beta} \Phi^{-1}(\gamma)\mathrm d\gamma.\end{aligned}\end{equation}
\end{theorem}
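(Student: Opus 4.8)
The plan is to strip the retention term of its dependence on $f_i$ via comonotonic additivity, reduce the coupled $n$-insurer problem to a single-insurer domination statement, and then force the layer form through a convex-order comparison built on Lemma~\ref{lem:3}. First I would observe that for each $i$ both $x\mapsto f_i(x)$ and $x\mapsto x-f_i(x)$ are non-decreasing (as $f_i\in\mathcal I$ is $1$-Lipschitz), so $f_i(X_i)$ and $X_i-f_i(X_i)$ are comonotonic; since $\RVaR_{\beta_i,\alpha_i}$ is comonotonically additive, $\RVaR_{\beta_i,\alpha_i}(X_i-f_i(X_i))=\RVaR_{\beta_i,\alpha_i}(X_i)-\RVaR_{\beta_i,\alpha_i}(f_i(X_i))$. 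Substituting this into \eqref{pro:1} and writing $K:=\tfrac1\alpha\int_{\bar\alpha}^{\bar\beta}\Phi^{-1}(\gamma)\,\mathrm d\gamma$ gives
\[
\widetilde V(\f)=\sum_{i=1}^n\Big\{\RVaR_{\beta_i,\alpha_i}(X_i)-\RVaR_{\beta_i,\alpha_i}(f_i(X_i))+\E[f_i(X_i)]\Big\}+K\Big(\sum_{i=1}^n\var(f_i(X_i))\Big)^{1/2},
\]
which is precisely $\widetilde G(\a,\b)$ when each $f_i$ equals the layer $g_{a_i,b_i}$. Because layers belong to $\mathcal I$, the bound $\inf_{\f\in\mathcal I^n}\widetilde V\le\inf_{(\a,\b)\in\mathcal A_1}\widetilde G$ is immediate, and the entire content of the theorem is the reverse inequality, i.e.\ that every $\f\in\mathcal I^n$ is weakly dominated by a family of layers.

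The key structural point is that $\widetilde V(\f)$ depends on each $f_i$ only through the triple $\big(\RVaR_{\beta_i,\alpha_i}(f_i(X_i)),\,\E[f_i(X_i)],\,\var(f_i(X_i))\big)$. Provided $K\ge 0$ -- which holds in the relevant tail regime, as $K=\RVaR_{\beta,\alpha}(Z)$ for $Z\sim\mathcal N(0,1)$ -- this functional is non-increasing in each $\RVaR_{\beta_i,\alpha_i}(f_i(X_i))$, non-decreasing in each $\E[f_i(X_i)]$, and non-decreasing in each $\var(f_i(X_i))$ (since $t\mapsto K(\sum_i t_i)^{1/2}$ is coordinatewise non-decreasing). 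Consequently it suffices to establish the single-insurer claim: for every $f\in\mathcal I$ there is a layer $g_{a,b}$ with $\RVaR_{\beta_i,\alpha_i}(g_{a,b}(X_i))\ge\RVaR_{\beta_i,\alpha_i}(f(X_i))$, $\E[g_{a,b}(X_i)]\le\E[f(X_i)]$, and $\var(g_{a,b}(X_i))\le\var(f(X_i))$. Replacing each $f_i$ by its dominating layer then lowers every summand and the aggregate standard-deviation term at once, yielding $\inf_{\mathcal A_1}\widetilde G\le\inf_{\mathcal I^n}\widetilde V$. This is the step that specialises to, and generalises, \citet{C12}.

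To produce the dominating layer I would fix the mean and the $\RVaR_{\beta_i,\alpha_i}$-value of $f(X_i)$ and construct $g_{a,b}$ sharing these two values, the two free parameters $a,b$ supplying exactly the needed degrees of freedom, with feasibility following from continuity and monotonicity of $(a,b)\mapsto\big(\E[g_{a,b}(X_i)],\RVaR_{\beta_i,\alpha_i}(g_{a,b}(X_i))\big)$. It then remains to show $g_{a,b}(X_i)\le_{\mathrm{cx}}f(X_i)$, which by Lemma~\ref{lem:3} amounts to verifying the single-crossing condition between the distribution functions of $g_{a,b}(X_i)$ and $f(X_i)$ (equivalently, that their quantile functions cross exactly once, from above to below, over the window $[\bar\alpha_i,\bar\beta_i]=[1-\beta_i-\alpha_i,\,1-\beta_i]$). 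Convex order with equal means then delivers $\var(g_{a,b}(X_i))\le\var(f(X_i))$, completing the domination. Attainment of the minimising parameters over $\mathcal A_1$ is handled separately, along the lines of Proposition~\ref{exist}.

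I expect the single-crossing verification to be the main obstacle. Since the matched layer must reproduce both the mean and the RVaR window, whether its quantile function crosses that of a generic $f$ exactly once depends on the relative placement of that window, the deductible $a$, and the cap $b$; this produces a case analysis parallel to the six cases behind Theorem~\ref{th:RVaR}, with particular care required when $f$ carries no deductible (so the matching layer degenerates) or when the crossing would otherwise fail to be unique. Two subsidiary points also need attention: confirming $K\ge0$ at the prescribed levels and disposing of the degenerate case $\sigma(\f)=0$ (forcing $\f\equiv\mathbf 0$, itself a layer) directly, and justifying the replacement of the finite-$n$ objective \eqref{eq:prob11} by its Gaussian surrogate \eqref{pro:1}, which rests on the asymptotic normality in Proposition~\ref{CLT}.
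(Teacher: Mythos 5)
Your opening moves are correct and agree with the paper's: the comonotonic-additivity rewrite $\RVaR_{\beta_i,\alpha_i}(X_i-f_i(X_i))=\RVaR_{\beta_i,\alpha_i}(X_i)-\RVaR_{\beta_i,\alpha_i}(f_i(X_i))$, the observation that $\widetilde V$ sees each $f_i$ only through $\RVaR_{\beta_i,\alpha_i}(f_i(X_i))$, $\E[f_i(X_i)]$ and $\var(f_i(X_i))$, and the resulting monotonicity. The genuine gap is the single-insurer domination statement you reduce everything to. Asking for a layer with \emph{simultaneously} $\RVaR_{\beta_i,\alpha_i}(g_{a,b}(X_i))\ge\RVaR_{\beta_i,\alpha_i}(f(X_i))$, $\E[g_{a,b}(X_i)]\le\E[f(X_i)]$ and $\var(g_{a,b}(X_i))\le\var(f(X_i))$ is strictly stronger than what the objective requires (it only involves the \emph{combination} $\RVaR-\E$ together with the variance), and this three-inequality claim is false. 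Take $X_i$ uniform on $[0,1]$, $f(x)=x/2$, $\beta_i=0.1$, $\alpha_i=0.2$, so the quantile window is $[0.7,0.9]$; then $\RVaR_{\beta_i,\alpha_i}(f(X_i))=0.4$, $\E[f(X_i)]=0.25$, $\var(f(X_i))=1/48\approx0.0208$. Any layer with $\RVaR\ge0.4$ needs $a\le0.4$ and $b-a\ge0.4$, and minimising the variance over all layers with $\RVaR\ge0.4$ and mean $\le0.25$ gives roughly $0.0268>1/48$, attained precisely at your matched layer $g_{0.175,\,0.575}$ (the unique layer with $\RVaR=0.4$ and mean $0.25$). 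So no layer dominates this $f$ in your three coordinates, and your construction breaks exactly where you anticipated trouble: with equal means, $g_{a,b}-f$ has sign pattern $-,+,-$ (below $f$ on $[0,a]$, above it along the slope-one piece, below it again past $b$), the quantile functions cross \emph{twice}, Lemma~\ref{lem:3} does not apply, and here the matched layer is in fact \emph{more} spread out than $f(X_i)$. No case analysis can rescue a false statement.

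The paper's proof is built to avoid precisely this trap: it never separates the RVaR from the mean. Its three-step construction (replace the part of $f_i$ above $\VaR_{1-\beta_i-\alpha_i}(X_i)$ by a slope-one capped piece matching the tail ES and the mean; then rework the part below that quantile at fixed mean; then pass to a pure layer which, above the $(1-\beta_i-\alpha_i)$-quantile, differs from the previous function only by a \emph{constant}) keeps the quantity $\RVaR_{\beta_i,\alpha_i}(\cdot)-\E[\cdot]$ from decreasing while reducing the variance at every step, each variance comparison resting on Lemma~\ref{lem:3} applied to a pair that genuinely crosses once. In the example above the paper's construction outputs $g_{0.495,\,0.788}$, with $\RVaR\approx0.274<0.4$ and mean $\approx0.105<0.25$ — both coordinates \emph{fall}, which your scheme forbids — yet $\RVaR-\E\approx0.169\ge0.15$ and variance $\approx0.016\le1/48$, which is all the objective needs. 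To repair your argument you must replace the three-inequality target by the two-quantity target $(\RVaR-\E,\ \var)$ and then produce a construction achieving it; the mean-and-RVaR-matched layer plus a single-crossing verification cannot do so. (Your two side remarks — that the sign of $K=\frac1\alpha\int_{\bar\alpha}^{\bar\beta}\Phi^{-1}(\gamma)\,\mathrm{d}\gamma$ must be checked, and that the finite-$n$ objective is replaced by its Gaussian surrogate via Proposition~\ref{CLT} — are legitimate but minor; the paper makes the same implicit assumptions.)
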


The layered structure of the optimal strategy in Theorem \ref{RVaR_as} under the asymptotic normality framework aligns with the forms identified in worst-case scenarios where RVaR degenerates to ES (when $\beta = 0$) in Theorem \ref{th:RVaR} or VaR (when $n=2$) in Theorem \ref{thvar}. 
Notably, this common layered form emerges in our asymptotic setting without requiring any distributional assumptions on $X_i$ or functional form restrictions on $f_i$. 

The proof for Proposition \ref{pro:6} follows along the same lines as that of Proposition \ref{exist} and is therefore omitted.
\begin{proposition}\label{pro:6}
 Let $\widetilde G$ be given by \eqref{eq:tildeG}. The parameters of the optimal ceded loss function     $ {\bf g}_{\a,\b}$  of Theorem \ref{RVaR_as} exists and is determined by
  $$({\bf a}^*, {\bf b}^*)\in \arg\inf_{(\a,\b)\in\mathcal{A}_1} G({\bf a}, {\bf b}).$$
\end{proposition}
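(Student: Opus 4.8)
The plan is to prove the existence claim by a compactness-plus-continuity argument: I will show that the admissible parameter set is compact once the half-lines are compactified, that $\widetilde G$ in \eqref{eq:tildeG} extends to a continuous function on it, and then invoke the Weierstrass extreme value theorem. First I would view $\mathcal{A}_1$ inside the compact cube $[0,\infty]^{2n}$, each factor being the compactified half-line $[0,\infty]$ (homeomorphic to $[0,1]$). The constraints $0\le a_i\le b_i\le\infty$ cut out a closed subset of this compact space, so $\mathcal{A}_1$ is itself compact, and ${\bf g}_{\a,\b}\in\mathcal I^n$ holds automatically there. It then remains only to verify that $\widetilde G$ is continuous on $\mathcal{A}_1$, boundary faces $a_i=\infty$ or $b_i=\infty$ included.

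For interior points, continuity is routine via dominated convergence. The map $(a,b)\mapsto g_{a,b}(x)$ is jointly continuous and satisfies $0\le g_{a,b}(x)\le x$; since the standing i.i.d.\ hypotheses give $\E[X_i]=\mu<\infty$ and $\var(X_i)=\sigma^2<\infty$, i.e.\ $X_i\in L^2$, dominated convergence with dominators $X_i$ and $X_i^2$ shows that $(a_i,b_i)\mapsto\E[g_{a_i,b_i}(X_i)]$ and $(a_i,b_i)\mapsto\E[g_{a_i,b_i}(X_i)^2]$ are continuous, hence so is $\var(g_{a_i,b_i}(X_i))$ and the square-root term in \eqref{eq:tildeG}. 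For the $\RVaR$ terms I would use that $g_{a,b}$ is non-decreasing and continuous, so that $\VaR_{1-\gamma}(g_{a,b}(X_i))=g_{a,b}(F_i^{-1}(1-\gamma))$ for a.e.\ $\gamma$, whence
\[
\RVaR_{\beta_i,\alpha_i}\big(g_{a_i,b_i}(X_i)\big)=\frac{1}{\alpha_i}\int_{\beta_i}^{\alpha_i+\beta_i} g_{a_i,b_i}\big(F_i^{-1}(1-\gamma)\big)\,\mathrm d\gamma.
\]
The integrand is dominated by $F_i^{-1}(1-\gamma)$, integrable on the interval of integration because $X_i\in L^1$, so a further application of dominated convergence yields continuity in $(a_i,b_i)$.

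The main obstacle is the boundary behaviour, i.e.\ continuity as parameters tend to $\infty$. I would treat it by noting that $b_i\uparrow\infty$ gives $g_{a_i,b_i}(x)\uparrow(x-a_i)_+$ pointwise, while $a_i\uparrow\infty$ (which forces $b_i\uparrow\infty$) gives $g_{a_i,b_i}(x)\downarrow 0$; in either case the same three dominating functions used above let dominated convergence carry the limit through $\E[g]$, $\E[g^2]$, and the $\RVaR$ integral, so the value of $\widetilde G$ at a boundary point coincides with the limit of its interior values. This establishes continuity of $\widetilde G$ on all of the compact set $\mathcal{A}_1$.

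By the extreme value theorem the infimum $\inf_{(\a,\b)\in\mathcal{A}_1}\widetilde G(\a,\b)$ is attained, so the set $\arg\inf_{(\a,\b)\in\mathcal{A}_1}\widetilde G(\a,\b)$ is nonempty. Together with Theorem~\ref{RVaR_as}, any minimiser $(\a^*,\b^*)$ in this set furnishes the optimal layered indemnity ${\bf g}_{\a^*,\b^*}$; the final identification of this minimiser with the optimal strategy is precisely the bookkeeping already carried out in the proof of Proposition~\ref{exist}.
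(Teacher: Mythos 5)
Your proof is correct and takes essentially the same approach as the paper: the paper omits the proof of this proposition, deferring to Proposition~\ref{exist}, whose argument for the $G$-case is precisely continuity of the objective (via continuity of $\RVaR$ and $\ES$) combined with closedness/compactness of $\mathcal{A}_1$, followed by attainment of the infimum. Your write-up simply makes explicit what the paper leaves implicit --- the compactification of the half-lines $[0,\infty]$ and the dominated-convergence check that $\widetilde G$ remains continuous on the boundary faces $a_i=\infty$, $b_i=\infty$ --- so it is, if anything, a more complete rendering of the same argument.
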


\subsection{The result for VaR}
In this section, we solve for the optimal insurance contract under the VaR-based criterion in   the asymptotic framework. Firstly, we show that  solving the VaR-based optimal insurance problem 
\begin{align}\label{eq:var11} V(\mathbf f)=\sum_{i=1}^n \VaR_{\alpha_i}\left(T_{f_i}(X_i)\right)+ \VaR_{\alpha}\left(R_{\mathbf f}(\mathbf X)\right)
\end{align} over $\mathbf f\in \mathcal I^n$ is equivalent to solving \eqref{eq:var11} over $\mathbf g_{\a,\b}\in \mathcal I^n$, as shown in the   proposition below.

\begin{proposition}\label{prop:5} The optimization problem  \eqref{eq:var11} over $\mathcal I^n$ can be equivalently reformulated as 
\begin{align*}
\inf_{\f\in\mathcal{I}^n}  V(\mathbf f) =\inf_{(\a,\b)\in\mathcal A_1}G_0(\a, \b),
\end{align*}
where $G_0({\bf a}, {\bf b})$ is given by \begin{equation}\label{eq:goab}\begin{aligned}   
G_0({\bf a}, {\bf b})=\sum_{i=1}^n \left\{\mathrm{VaR}_{\alpha_i}(X_i)-g_{a_i,b_i}\left(\VaR_{\alpha_i}(X_i)\right)\right\}+\mathrm{VaR}_\alpha\left(\sum_{i=1}^ng_{a_i,b_i}( X_i)\right).\end{aligned}
\end{equation}
\end{proposition}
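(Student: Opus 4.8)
The plan is to prove the two inequalities separately, after first collapsing the $n$ individual retention terms. For every $f_i\in\mathcal I$ the map $x\mapsto x-f_i(x)$ is nondecreasing and $1$-Lipschitz, hence continuous; since the left quantile commutes with a nondecreasing left-continuous transform, $\VaR_{\alpha_i}(T_{f_i}(X_i))=\VaR_{\alpha_i}(X_i)-f_i(d_i)$ with $d_i:=\VaR_{\alpha_i}(X_i)$. Writing the objective in this reduced form,
\[
V(\mathbf f)=\sum_{i=1}^n\VaR_{\alpha_i}(X_i)-\sum_{i=1}^n f_i(d_i)+\VaR_\alpha\Big(\sum_{i=1}^n f_i(X_i)\Big),
\]
the same identity applied to $\mathbf f=\mathbf g_{\a,\b}$ gives $V(\mathbf g_{\a,\b})=G_0(\a,\b)$. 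Because every layer vector lies in $\mathcal I^n$, this immediately yields the easy inequality $\inf_{\mathbf f\in\mathcal I^n}V(\mathbf f)\le\inf_{(\a,\b)\in\mathcal A_1}G_0(\a,\b)$, so the whole content is the reverse inequality.

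For the reverse inequality the natural route is to match an arbitrary $\mathbf f\in\mathcal I^n$ from below by a layer vector, i.e.\ to produce $(\a,\b)\in\mathcal A_1$ with $-\sum_i g_{a_i,b_i}(d_i)+\VaR_\alpha(\sum_i g_{a_i,b_i}(X_i))\le -\sum_i f_i(d_i)+\VaR_\alpha(\sum_i f_i(X_i))$. I would keep the value at the individual quantile fixed: set $a_i:=d_i-f_i(d_i)$, which lies in $[0,d_i]$ because $0\le f_i(d_i)\le d_i$ follows from $f_i(0)=0$ and the Lipschitz bound, and take a cap $b_i\ge d_i$, so that $g_{a_i,b_i}(d_i)=f_i(d_i)$ and the retention terms cancel exactly. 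The Lipschitz property of $f_i$ then forces the crossing pattern $g_{a_i,b_i}\le f_i$ on $[0,d_i]$, $g_{a_i,b_i}\ge f_i$ on a middle band starting at $d_i$, and $g_{a_i,b_i}\le f_i$ again once $f_i$ overtakes the cap; the free parameter $b_i$ is used to suppress the upper-tail excess created on the middle band.

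The crux, and the step I expect to be the main obstacle, is upgrading this pointwise crossing information to the aggregate quantile inequality $\VaR_\alpha(\sum_i g_{a_i,b_i}(X_i))\le t^\ast:=\VaR_\alpha(\sum_i f_i(X_i))$. A clean sufficient condition is the event inclusion $\{\sum_i f_i(X_i)\le t^\ast\}\subseteq\{\sum_i g_{a_i,b_i}(X_i)\le t^\ast\}$ up to a $\p$-null set, since then $\p(\sum_i g_{a_i,b_i}(X_i)\le t^\ast)\ge\p(\sum_i f_i(X_i)\le t^\ast)\ge\alpha$ forces $\VaR_\alpha(\sum_i g_{a_i,b_i}(X_i))\le t^\ast$. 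Establishing this inclusion is delicate exactly because the middle band is the one region where $g_{a_i,b_i}$ exceeds $f_i$: one must choose the caps $b_i$ (falling back to $b_i=\infty$ where a finite cap would violate admissibility) so that the band excess never lifts the aggregate across $t^\ast$ on the low-sum event, which for small $\alpha$ concentrates on small values of the $X_i$ where $g_{a_i,b_i}\le f_i$. I anticipate a case distinction on the location of the aggregate quantile relative to the $d_i$ and the caps, echoing Remark~\ref{rem:3} (where $\alpha\ge\alpha_i$ collapses the problem to the no-reinsurance value, so that a trivial layer already matches) and the multi-case analysis behind Theorem~\ref{th:RVaR}. Throughout, the continuity of $F_1^{-1},\dots,F_n^{-1}$ guarantees that the relevant quantiles are attained and that the optimal layer parameters are well defined, exactly as in Proposition~\ref{exist}.
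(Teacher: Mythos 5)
Your reduction of the objective via $\VaR_{\alpha_i}(T_{f_i}(X_i))=\VaR_{\alpha_i}(X_i)-f_i(d_i)$ with $d_i:=\VaR_{\alpha_i}(X_i)$, the easy inequality, and the choice $a_i=d_i-f_i(d_i)$ are all correct and match the paper. However, the proof has a genuine gap at exactly the point you yourself flag as ``the main obstacle'': the reverse inequality is never established, only sketched. The gap is self-inflicted by your choice of cap. By allowing $b_i\ge d_i$ (possibly strictly larger, or $b_i=\infty$), you create a middle band on which $g_{a_i,b_i}\ge f_i$, and then the aggregate comparison $\VaR_\alpha\big(\sum_i g_{a_i,b_i}(X_i)\big)\le \VaR_\alpha\big(\sum_i f_i(X_i)\big)$ does not follow from monotonicity of $\VaR$; the event-inclusion argument you propose, with its case analysis over the location of the aggregate quantile, is left unexecuted and would be delicate, since the band excess can push the aggregate sum above $t^\ast$ on part of the event $\{\sum_i f_i(X_i)\le t^\ast\}$. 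As written, the proposal does not prove the proposition.

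The fix --- and the paper's actual proof --- is to take the cap as small as admissibility allows: $b_i=d_i$ exactly (admissible since $a_i=d_i-f_i(d_i)\le d_i$, with $0\le f_i(d_i)\le d_i$ from $f_i(0)=0$ and the Lipschitz bound). Then the middle band disappears and $g_{a_i,b_i}\le f_i$ holds pointwise on all of $[0,\infty)$: on $[0,a_i]$ the layer is $0$; on $[a_i,d_i]$ one has $g_{a_i,b_i}(x)=x-d_i+f_i(d_i)\le f_i(x)$ by the $1$-Lipschitz property of $f_i$; and on $[d_i,\infty)$ the layer is constant equal to $f_i(d_i)\le f_i(x)$ by monotonicity of $f_i$. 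Hence $\sum_i g_{a_i,b_i}(X_i)\le \sum_i f_i(X_i)$ almost surely, so monotonicity of $\VaR_\alpha$ gives the aggregate inequality immediately, while $g_{a_i,b_i}(d_i)=f_i(d_i)$ keeps the retention terms equal. No event inclusion, no case analysis, and no fallback to $b_i=\infty$ is ever needed; this works for an arbitrary joint law of $\mathbf X$, which is why the paper's proof is three lines long.
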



Proposition \ref{prop:5} indicates that, when the multivariate risks 
$\{X_i\}_{i=1}^n$ are i.i.d., the Central Limit Theorem can be applied 
directly to the aggregate $\sum_{i=1}^n g_{a_i,b_i}$ in the asymptotic 
analysis of aggregation risk, which gives
\begin{equation}\label{pro:VaRas}
\min_{g_{\a,\b}\in \mathcal A_1}  
\left\{
\sum_{i=1}^n \mathrm{VaR}_{\alpha_i}(X - g_{a_i,b_i}(X))
+ \mu(g_{\a,\b})
+ \Phi^{-1}(\alpha)\, \sigma(g_{\a,\b})
\right\},
\end{equation}
where 
\[
\mu(g_{\a,\b}) = \sum_{i=1}^n \mathbb{E}[g_{a_i,b_i}(X)], 
\qquad
\sigma^2(g_{\a,\b}) = \sum_{i=1}^n \mathrm{Var}(g_{a_i,b_i}(X)),
\]
and the parameters satisfy 
\(0 \le a_i \le \VaR_{\alpha_i}(X)\)  
and  
\(b_i = \VaR_{\alpha_i}(X)\).

In the next theorem, we can determine  the optimal ceded loss functions $g_{\a,\b}$ explicitly for \eqref{pro:VaRas}. For convenience, define $$w_i(a_i)= \int_{a_i}^{\VaR_{\alpha_i}(X_i)}S_X(x)\mathrm d x, ~~\text{and}~~ v_i(a_i)= 2\int_{a_i}^{\VaR_{\alpha_i}(X_i)} (x-a_i) S_X(x)\mathrm d x.$$ 
\begin{theorem}\label{thme}  
The optimal indemnity functions ${\mathbf g}^*_{\a,\b}=(g^*_{a_1,b_1},\dots,g^*_{a_n,b_n})$ for Problem \eqref{pro:VaRas} are given by 
\begin{equation}\label{eq:f_star}
g^*_{a^*_i,b^*_i}(x) = (x-a^*_i)_+ - (x-b^*_i)_+, \quad i=1,\dots,n,
\end{equation}
with parameters determined by
\begin{equation*}
a_i^* = \inf \left\{0 \leq a_i \leq \VaR_{\alpha_i}(X_i): 1 - \Phi^{-1}(\alpha) \cdot \frac{w_i(a_i)^2}{\sum_{j=1}^n \left(v_j(a_j) - w_j(a_j)^2 \right)} \geq 0 \right\},
\end{equation*}
and $b_i^* = \VaR_{\alpha_i}(X_i)$.
\end{theorem}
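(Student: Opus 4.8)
The plan is to collapse \eqref{pro:VaRas} into an explicit function of the retention vector $\mathbf a=(a_1,\dots,a_n)$ and then minimise it by a first-order analysis. Since \eqref{pro:VaRas} already fixes $b_i=\VaR_{\alpha_i}(X_i)$ and restricts $0\le a_i\le \VaR_{\alpha_i}(X_i)$, I would first evaluate its three ingredients under $g_{a_i,b_i}(x)=(x-a_i)_+-(x-b_i)_+$. Monotonicity of $x\mapsto x-g_{a_i,b_i}(x)$ gives the retained term $\VaR_{\alpha_i}(X_i)-g_{a_i,b_i}\!\left(\VaR_{\alpha_i}(X_i)\right)=a_i$; the identity $\mathbb{E}[(X-a)_+]=\int_a^\infty S_X$ gives $\mathbb{E}[g_{a_i,b_i}(X_i)]=w_i(a_i)$; and an integration by parts applied to $\mathbb{E}[g_{a_i,b_i}(X_i)^2]$ gives $\mathbb{E}[g_{a_i,b_i}(X_i)^2]=v_i(a_i)$, whence $\mathrm{Var}(g_{a_i,b_i}(X_i))=v_i(a_i)-w_i(a_i)^2$. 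Substituting these identities, \eqref{pro:VaRas} becomes the minimisation over $\prod_i[0,\VaR_{\alpha_i}(X_i)]$ of
\begin{equation*}
\Psi(\mathbf a)=\sum_{i=1}^n\bigl(a_i+w_i(a_i)\bigr)+\Phi^{-1}(\alpha)\Bigl(\sum_{i=1}^n\bigl(v_i(a_i)-w_i(a_i)^2\bigr)\Bigr)^{1/2}.
\end{equation*}

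The difficulty is that the square-root term couples the coordinates and $\Psi$ is not manifestly convex, so a naive coordinatewise stationarity argument is not justified. I would circumvent this through the variational identity $\sqrt{x}=\min_{s>0}(x+s^2)/(2s)$, which yields $\Psi(\mathbf a)=\min_{s>0}\Psi_s(\mathbf a)$ with
\begin{equation*}
\Psi_s(\mathbf a)=\sum_{i=1}^n\Bigl(a_i+w_i(a_i)+\frac{\Phi^{-1}(\alpha)}{2s}\bigl(v_i(a_i)-w_i(a_i)^2\bigr)\Bigr)+\frac{\Phi^{-1}(\alpha)s}{2}.
\end{equation*}
For each fixed $s$ this auxiliary objective is separable across $i$, so that $\min_{\mathbf a}\Psi=\min_{s>0}\min_{\mathbf a}\Psi_s$ reduces the problem to $n$ independent one-dimensional minimisations followed by a scalar optimisation in $s$.

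For the $i$-th scalar problem I would differentiate, using $w_i'(a_i)=-S_X(a_i)$ and $\tfrac{\d}{\d a_i}(v_i-w_i^2)=-2w_i(a_i)(1-S_X(a_i))$ (Leibniz rule), to obtain the factorisation $\partial_{a_i}\Psi_s=F_X(a_i)\bigl(1-\Phi^{-1}(\alpha)w_i(a_i)/s\bigr)$ with $F_X=1-S_X\ge0$. Because $w_i$ is nonincreasing, the bracketed factor is nondecreasing in $a_i$, so $\partial_{a_i}\Psi_s$ changes sign once, from nonpositive to nonnegative; the scalar minimiser is therefore the infimum of $\{a_i:1-\Phi^{-1}(\alpha)w_i(a_i)/s\ge0\}$. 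Imposing the consistency condition $\partial_s\Psi_s=0$, which forces $s=\sigma:=\bigl(\sum_j(v_j-w_j^2)\bigr)^{1/2}$ at the joint optimum, then delivers the infimum characterisation of $a_i^*$ together with $b_i^*=\VaR_{\alpha_i}(X_i)$ stated in the theorem.

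The main obstacle, as anticipated, is the global step: promoting the per-coordinate minimisers obtained for a frozen $s$ to a genuine minimiser of $\Psi$, i.e. proving that the consistent value $s=\sigma(\mathbf a^*(s))$ is attained. I would establish this by showing that $s\mapsto\sigma(\mathbf a^*(s))$ is a continuous self-map on a suitable subinterval of $(0,\infty)$—continuity and monotonicity of $a_i^*(s)$ in $s$ follow from the single-crossing structure—so that a fixed point exists and pins down $s$; equivalently, the outer minimisation over $s$ in $\min_{s>0}\min_{\mathbf a}\Psi_s$ attains its value at this consistent $s$. The box constraint is handled automatically by the infimum: the defining set is nonempty because the bracket is strictly positive at the upper endpoint $a_i=\VaR_{\alpha_i}(X_i)$ (where $w_i=0$), and the infimum returns $a_i^*=0$ precisely when the bracket is already nonnegative at $a_i=0$. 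Finally, continuity of $F_1^{-1},\dots,F_n^{-1}$ supplies the regularity of $w_i$ and $v_i$ needed for all the differentiations above.
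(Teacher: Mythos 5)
Your proposal reaches the same characterization through the same first-order skeleton as the paper, but the key step is handled by a genuinely different device. The paper differentiates the coupled objective $F(\mathbf a)=\sum_i(a_i+w_i(a_i))+\Phi^{-1}(\alpha)\bigl(\sum_i(v_i(a_i)-w_i(a_i)^2)\bigr)^{1/2}$ directly, obtains $\partial F/\partial a_i=F_{X_i}(a_i)\bigl(1-\Phi^{-1}(\alpha)G_i(\mathbf a)^{1/2}\bigr)$ with $G_i=w_i^2/\sum_j(v_j-w_j^2)$, and then needs a dedicated lemma — that $G_i$ is decreasing in $a_i$, proved via the auxiliary inequality $g_i(a_i)=w_i(a_i)^2-S_{X_i}(a_i)v_i(a_i)\le 0$ — to get the single-crossing of the derivative. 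You instead decouple the square root by $\sqrt{x}=\min_{s>0}(x+s^2)/(2s)$, which makes the problem separable for frozen $s$; single-crossing of each scalar derivative $F_{X}(a_i)\bigl(1-\Phi^{-1}(\alpha)w_i(a_i)/s\bigr)$ then follows from monotonicity of $w_i$ alone, with no analogue of the paper's $g_i\le 0$ lemma. The price is a consistency step re-imposing $s=\sigma(\mathbf a)$. Note also that your condition $1-\Phi^{-1}(\alpha)w_i(a_i)/\sigma\ge 0$ agrees with the paper's \emph{derivative} formula (the $G_i^{1/2}$ version); the displayed condition in the theorem drops the square root, an inconsistency internal to the paper rather than a defect of your argument.

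One caveat on your closing step: the claimed equivalence between ``a fixed point of $s\mapsto\sigma(\mathbf a^*(s))$ exists'' and ``the outer minimisation over $s$ is attained at a consistent $s$'' fails in one direction. Since $\phi(s):=\min_{\mathbf a}\Psi_s(\mathbf a)$ is a pointwise minimum of functions convex in $s$, it need not be convex, so it can have several consistent critical points; a fixed point is therefore not automatically the global minimiser. The clean closure runs the other way: a minimiser $\mathbf a^\dagger$ of $\Psi$ exists by compactness of $\prod_i[0,\VaR_{\alpha_i}(X_i)]$ and continuity; putting $s^\dagger=\sigma(\mathbf a^\dagger)$ (assuming $s^\dagger>0$, i.e.\ excluding the no-insurance corner), the pair $(\mathbf a^\dagger,s^\dagger)$ jointly minimises $\Psi_s(\mathbf a)$, hence $\mathbf a^\dagger$ minimises the separable $\Psi_{s^\dagger}$, and your scalar analysis forces $a_i^\dagger=a_i^*(s^\dagger)$ — consistency holds automatically at the optimum, no Brouwer or intermediate-value argument needed. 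With that repair your route is complete, and in fact tighter than the paper's own global step, which asserts $\partial F/\partial a_i\ge 0$ outright when this only holds beyond the crossing point.
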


   \begin{corollary}\label{cor}   For $\alpha_1=\dots=\alpha_n$, we have  \begin{equation}\label{eq:f}{g}^*_{a_1,b_1}(x) = \dots={g}^*_{a_n,b_n}(x)  =(x-a^*)_+- (x-\VaR_{\alpha}(X))_+\end{equation}  with 
 $$a^*= \inf \left\{0 \leq a \leq \VaR_{\alpha}(X): 1-\Phi^{-1}(\alpha)\frac{  w(a)^2}{ n (v(a) -w(a)^2 ) }\geq 0    \right\},$$ in which 
$$w(a)= \int_{a}^{\VaR_{\alpha}(X)}S_X(x)\mathrm d x, ~~\text{and}~~ v(a)= 2\int_{a}^{\VaR_{\alpha}(X)} (x-a) S_X(x)\mathrm d x.$$   In particular, $a^*=0$ as $n\to\infty.$ 
 
 \end{corollary}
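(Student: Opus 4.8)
The plan is to read off Corollary~\ref{cor} from Theorem~\ref{thme} by exploiting the symmetry created by the i.i.d.\ assumption together with $\alpha_1=\dots=\alpha_n=\alpha$, and then to analyse the resulting scalar threshold as $n\to\infty$.

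First I would record the effect of the symmetric hypotheses. Since the $X_i$ share the common law of $X$ and $\alpha_i=\alpha$ for every $i$, we have $\VaR_{\alpha_i}(X_i)=\VaR_\alpha(X)=:q$, so the functions $w_i,v_i$ of Theorem~\ref{thme} all reduce to the single pair $w,v$ of the corollary, and each upper endpoint is $b_i^*=q$. The crucial observation is that the defining condition for $a_i^*$ in Theorem~\ref{thme} depends on the index $i$ only through $w_i=w$: the denominator $D:=\sum_{j=1}^n\big(v_j(a_j^*)-w_j(a_j^*)^2\big)$ is a single quantity common to all indices. Hence every $a_i^*$ is the infimum of the same set, so the $a_i^*$ must all coincide with a common value $a^*$; feeding this back makes the denominator self-consistent, $D=n\big(v(a^*)-w(a^*)^2\big)$. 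Substituting $D$ into the condition of Theorem~\ref{thme} gives precisely the scalar characterisation of $a^*$ in the corollary, and because $b_i^*=q=\VaR_\alpha(X)$ for all $i$, the optimal layers collapse to the single form $g^*(x)=(x-a^*)_+-(x-\VaR_\alpha(X))_+$, establishing \eqref{eq:f}.

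For the limit $a^*=0$ as $n\to\infty$, I would rewrite the defining inequality as $\rho(a)\le n/\Phi^{-1}(\alpha)$, where $\rho(a):=w(a)^2/(v(a)-w(a)^2)$, and prove that $\rho$ is bounded on $[0,q]$. On $[0,q)$ the denominator equals $\var(g_{a,q}(X))>0$, so $\rho$ is finite and continuous; the only delicate point is the endpoint $a\to q^-$, where numerator and denominator both vanish. A first-order expansion, using $w(a)\sim S_X(q)(q-a)$ and $v(a)\sim S_X(q)(q-a)^2$ together with the continuity of $F$ (so that $S_X(q)=1-\alpha$), gives the finite limit $\rho(a)\to S_X(q)/(1-S_X(q))=(1-\alpha)/\alpha$. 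Thus $\rho$ extends continuously to the compact interval $[0,q]$ and is bounded by some $M<\infty$. Taking $n$ large enough that $n/\Phi^{-1}(\alpha)\ge M$ (the case $\Phi^{-1}(\alpha)\le0$ being trivial, since then $\Phi^{-1}(\alpha)\rho(a)\le 0\le 1$ for every $a$), the feasible set becomes all of $[0,q]$, whose infimum is $0$; hence $a^*=0$ for every sufficiently large $n$, which yields the stated limit.

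The main obstacle is the boundedness of $\rho$ at the upper endpoint $a=q$, i.e.\ the indeterminate $0/0$ limit resolved by the Taylor expansion above; the symmetric reduction and the remaining continuity checks are routine bookkeeping.
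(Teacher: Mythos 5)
Your proposal is correct. Note that the paper gives no proof of this corollary at all: it is stated as an immediate specialization of Theorem~\ref{thme}, and no argument appears in Appendix~D, in particular none for the assertion that $a^*=0$ as $n\to\infty$. Your first step (all $w_i$, $v_i$, $b_i^*$ collapse to $w$, $v$, $\VaR_\alpha(X)$; the defining condition depends on $i$ only through $w_i=w$ and the common denominator, so the $a_i^*$ coincide and the denominator becomes $n\bigl(v(a^*)-w(a^*)^2\bigr)$) is exactly the intended reduction, and it also handles cleanly the fact that Theorem~\ref{thme}'s definition of $a_i^*$ is really a coupled (fixed-point) condition in all coordinates: given the common denominator, the infima are identical, and substituting back yields the diagonal, scalar condition of the corollary (the diagonal and fixed-point readings give the same infimum because $w$ and the ratio $w^2/(v-w^2)$ are both nonincreasing in $a$). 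Where your write-up adds genuine value is the second half: identifying $v(a)-w(a)^2=\mathrm{Var}\bigl(g_{a,\VaR_\alpha(X)}(X)\bigr)$, resolving the $0/0$ behaviour of $\rho(a)=w(a)^2/(v(a)-w(a)^2)$ at $a\uparrow\VaR_\alpha(X)$ via the expansions $w(a)\sim(1-\alpha)(q-a)$, $v(a)\sim(1-\alpha)(q-a)^2$ (giving the finite limit $(1-\alpha)/\alpha$), and concluding that $\rho$ is bounded on the compact interval, so the feasibility set is all of $[0,\VaR_\alpha(X)]$ once $n\geq \Phi^{-1}(\alpha)\sup\rho$, hence $a^*=0$ for all large $n$; the degenerate case $\Phi^{-1}(\alpha)\le 0$ is also correctly dispatched. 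This is a complete proof of a claim the paper only asserts. One caveat worth flagging, though it is a defect of the paper rather than of your argument: the condition in the statement of Theorem~\ref{thme} (and in the corollary) uses the ratio $G_i(\mathbf a)$, whereas the first-order condition derived in the paper's proof involves $G_i(\mathbf a)^{1/2}$; you consistently follow the stated version, and your boundedness argument is insensitive to this discrepancy (if $\rho$ is bounded so is $\rho^{1/2}$), so the conclusion $a^*=0$ for large $n$ holds under either reading.
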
 
 Corollary \ref{cor} highlights that when the insurer pools a large number of i.i.d. risks, the aggregate ceded loss 
$\sum_{i=1}^n g_{a,b}(X_i)$ becomes increasingly predictable due to diversification.  
As a result, the incentive to introduce a positive attachment point $a>0$ vanishes.

\section{Simulation studies}\label{sec:6}
In this section, we present two simulation studies to illustrate our theoretical results. In Section~\ref{sec:6.1}, we provide a case study that examines the minimization of \eqref{pro:VaR} under the special case where RVaR is reduced to VaR. We consider three distinct dependence structures: i.i.d.\ risks, comonotonic risks, and dependence uncertainty. 
Section~\ref{sec:6.2} presents comparative examples that demonstrate the differences in optimal reinsurance design for a general distribution $F$ versus the case where $F \in (\mathcal{M}_{cv})^n$, as characterized in Theorem~\ref{thvar}.

\subsection{Effects of dependence and confidence levels}\label{sec:6.1}

We present an illustrative example that solves the minimization problem (\ref{pro:VaR}) under three different dependence structures---i.i.d., comonotonicity, and dependence uncertainty when $n=2$.   In particular, the optimization problem \eqref{pro:VaR} can be written as \begin{align}\label{eq:K}
    K(\a,\b)= \sum_{i=1}^{2} \VaR_{\alpha_i}(X_i - g_{a_i,b_i}(X_i)) + \VaR_{\alpha}\Big(\sum_{i=1}^{2} g_{a_i,b_i}(X_i)\Big).
\end{align}
Under the different dependence structures, $K(\a,\b) $ takes the following forms: \begin{enumerate}[(i)]
    \item Worst-case: $
        K(\a,\b)  = \inf_{t\in[0,1-\alpha]} \overline G_1(\a, \b, t),
$
where $\overline G_1$ is defined in (\ref{eq:G1bar});
    
    \item  i.i.d. case:  
$
        K(\a,\b) = G_0(\a, \b),
$
    where $G_0$ is defined in (\ref{eq:goab});
    
    \item Comonotonic case: 
$
        K(\a,\b) = \sum_{i=1}^{2} \VaR_{\alpha_i}(X_i - g_{a_i,b_i}(X_i)) + \sum_{i=1}^{2} \VaR_{\alpha}(g_{a_i,b_i}(X_i)).
$
\end{enumerate}
 \begin{lemma}\label{lem:4}
 Suppose $F_1^{-1}$ and $ F_2^{-1}$ are identical and continuous on $(0,1)$, and let $\alpha\in (0,1)$. Then, under the different dependence structures, the following holds:
\begin{align*}
    \inf_{(\a,\b)\in\mathcal A_1}  K(\a,\b) 
    = \inf_{\bf u \in\mathcal A_1(\bf u)} K(\bf u,\bf v)  ,
\end{align*}
where $v_i = \VaR_{\alpha_i}(X_i)$ for $i=1,2$, and
\begin{align*}
    \mathcal {A}_1({\bf u}) = \{{\bf u}: \mathbf{g_{{\bf u}, {\bf v}}} \in \mathcal{I}^n,~0 \le u_i \le v_i \le \infty,~i=1,2\}.
\end{align*}
\end{lemma}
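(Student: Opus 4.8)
The plan is to observe that, in all three dependence regimes, the objective shares a common structure — a sum of cedant retention terms plus a reinsurer aggregation functional that is monotone in the pointwise order on indemnities and cash-invariant — and then to argue coordinate by coordinate that the cap $b_i$ can always be driven to $v_i=\VaR_{\alpha_i}(X_i)$ without increasing $K$. Concretely, I would first rewrite each of the three forms of $K(\a,\b)$ as
$$
K(\a,\b)=\sum_{i=1}^2\Big(\VaR_{\alpha_i}(X_i)-g_{a_i,b_i}(v_i)\Big)+\rho\big(g_{a_1,b_1}(X_1),g_{a_2,b_2}(X_2)\big),
$$
where $v_i=\VaR_{\alpha_i}(X_i)$ and $\rho$ is the reinsurer's aggregation functional. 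The cedant term takes this shape because $x\mapsto x-g_{a,b}(x)$ is non-decreasing and $g_{a,b}$ is continuous, so by comonotonic additivity $\VaR_{\alpha_i}(X_i-g_{a_i,b_i}(X_i))=\VaR_{\alpha_i}(X_i)-g_{a_i,b_i}(v_i)$. The functional $\rho$ equals $\sum_i\VaR_\alpha(\cdot)$ in the comonotonic case, the VaR of the sum under the i.i.d.\ coupling, and $\sup_{\mathbf X\in\mathcal E_2(\mathbf F)}\VaR_\alpha$ of the sum in the worst case; for the last one, the equivalence of the $\inf_t$ representation in \eqref{eq:G1bar} with this supremum is precisely the Makarov identity \eqref{eq:var2} (cf. Proposition~\ref{n2}). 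In every case $\rho$ is monotone and cash-invariant.

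Next I would fix a coordinate $i$ together with all remaining parameters and analyse the dependence on $b_i$. For $b_i\ge v_i$ the cedant term $\VaR_{\alpha_i}(X_i)-g_{a_i,b_i}(v_i)$ is constant in $b_i$ while $\rho$ is non-decreasing, so $K$ never decreases above $v_i$ and we never benefit from $b_i>v_i$. For $a_i\le b_i\le v_i$, raising $b_i$ up to $v_i$ lowers the cedant term by exactly $v_i-b_i$. The decisive estimate for the opposite term is the decomposition $g_{a_i,v_i}(X_i)=g_{a_i,b_i}(X_i)+D_i$ with $0\le D_i=g_{b_i,v_i}(X_i)\le v_i-b_i$ almost surely, whence monotonicity and cash-invariance of $\VaR_\alpha$ (and of the supremum over $\mathcal E_2(\mathbf F)$) give the Lipschitz-type bound $\rho(\ldots,g_{a_i,v_i},\ldots)\le\rho(\ldots,g_{a_i,b_i},\ldots)+(v_i-b_i)$. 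Hence the reinsurer term rises by at most $v_i-b_i$, so the net change of $K$ is $\le 0$. I expect this trade-off to be the main obstacle: the cedant prefers a high cap and the reinsurer a low one, and the whole claim rests on the cedant's gain weakly dominating the reinsurer's loss. The point is that the Lipschitz bound holds uniformly across the three regimes, since monotonicity and cash-invariance are preserved under summation and under taking the supremum over couplings.

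Finally I would treat the degenerate region $a_i>v_i$ separately: there $g_{a_i,b_i}(v_i)=0$, so the cedant term already equals the no-insurance value $v_i$, and replacing component $i$ by the null contract $g_{v_i,v_i}\equiv0$ (feasible with $u_i=v_i$) leaves the cedant term unchanged and weakly lowers $\rho$. Applying these reductions coordinatewise maps any $(\a,\b)\in\mathcal A_1$ to a point $({\bf u},{\bf v})$ with ${\bf u}\in\mathcal A_1({\bf u})$ and $K({\bf u},{\bf v})\le K(\a,\b)$, which yields $\inf_{{\bf u}\in\mathcal A_1({\bf u})}K({\bf u},{\bf v})\le\inf_{(\a,\b)\in\mathcal A_1}K(\a,\b)$; the reverse inequality is immediate from the inclusion $\{({\bf u},{\bf v}):{\bf u}\in\mathcal A_1({\bf u})\}\subseteq\mathcal A_1$. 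Combining the two gives the stated equality of infima.
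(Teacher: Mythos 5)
Your proposal is correct, but it follows a genuinely different route from the paper. The paper's proof fixes the indemnity value at the quantile: given $(a_i,b_i)$ it sets $u_i = v_i - g_{a_i,b_i}(v_i)$ and invokes Theorem 3.1 of \cite{CT13} to get the pointwise domination $g_{u_i,v_i}(x) \le g_{a_i,b_i}(x)$ together with equality at $x=v_i$; the cedant term is then \emph{unchanged} while every reinsurer term (at any VaR level, hence for every $t$ in the representation $\overline G_1(\cdot,\cdot,t)$) weakly decreases, so only monotonicity of $\VaR$ is needed. It then concludes equality of the infima through a somewhat laborious argmin/contradiction argument (its ``Step 2''), and states that the i.i.d.\ and comonotonic regimes are handled similarly. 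You instead keep $a_i$ and move the cap $b_i$ to $v_i$, which \emph{changes} the cedant term by exactly $v_i-b_i$ and requires the compensating estimate $\rho(\ldots,g_{a_i,v_i},\ldots)\le\rho(\ldots,g_{a_i,b_i},\ldots)+(v_i-b_i)$ from the decomposition $g_{a_i,v_i}=g_{a_i,b_i}+g_{b_i,v_i}$ with $0\le g_{b_i,v_i}\le v_i-b_i$, i.e.\ you need cash invariance in addition to monotonicity. What your approach buys is a uniform one-shot treatment of all three dependence regimes through the abstract properties of the aggregation functional $\rho$ (monotone, cash-invariant, preserved under sums and under suprema over couplings), plus a cleaner conclusion: the coordinatewise reduction gives one inequality directly and the inclusion $\{({\bf u},{\bf v}):{\bf u}\in\mathcal A_1({\bf u})\}\subseteq\mathcal A_1$ gives the other, with no attainment-of-infimum bookkeeping. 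What the paper's transformation buys is a strictly stronger comparison (pointwise domination with the cedant term held fixed), which makes the per-$t$ argument in the worst-case representation immediate and avoids any appeal to translation invariance. Both proofs cover the same case split in substance (cap above, cap below, and the degenerate region $a_i>v_i$ where the contract is effectively null), so there is no gap in your argument.
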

Obtaining a closed-form solution for the optimal insurance problem \eqref{eq:K} remains challenging even under the i.i.d. assumption. Therefore, we employ the asymptotic normality result established in Theorem \ref{thme} to derive numerical solutions. 

We generate independent samples of $X_i$, $i=1,2$, from a Pareto distribution with cumulative distribution function
\[
F_X(x) = 1 - \left(1 + \frac{x}{\lambda}\right)^{-\beta},
\]
where we set $\beta = 9$ and $\lambda = 8$. This parameterization yields $\mathbb{E}[X_i] = 1$ and $\mathrm{Var}(X_i) = \frac{9}{7}$.

We consider three scenarios with $\alpha_1 > \alpha_2$ to examine different configurations of confidence levels:
\begin{itemize}
    \item Case 1: $\alpha_1 = 0.9$, $\alpha_2 = 0.85$, $\alpha = 0.95$
    \item Case 2: $\alpha_1 = 0.95$, $\alpha_2 = 0.85$, $\alpha = 0.9$  
    \item Case 3: $\alpha_1 = 0.95$, $\alpha_2 = 0.9$, $\alpha = 0.85$
\end{itemize}
The corresponding parameter choices and numerical results are summarized in Table~\ref{tab:placeholder}.

\begin{table}[h!]
    \centering
    \caption{Optimal parameters $(a_1^*, a_2^*)$ and objective values under different dependence structures}
    \begin{tabular}{c|l|c|c|c|c}
        \hline
        \textbf{Case} & \textbf{Dependence} & \textbf{Objective Value} & \textbf{$a_1^*$} & \textbf{$a_2^*$} & \textbf{$t^*$} \\ 
        \hline
        \multirow{3}{*}{1} & Worst-case & 4.2096 & $[0, 2.3324]$ & $[0, 1.8772]$ & 0 \\ 
        \cline{2-6}
        & Comonotonic & 4.2096 & $[0, 2.3324]$ & $[0, 1.8772]$ & -- \\ 
        \cline{2-6}
        & i.i.d. & 3.2695 & 0.4224 & 0.3372 & -- \\ 
        \hline
        \multirow{3}{*}{2} & Worst-case & 4.2096 & $[0, 2.3324]$ & $[0, 1.8772]$ & 0 \\ 
        \cline{2-6}
        & Comonotonic & 4.2096 & $[0, 2.3324]$ & $[0, 1.8772]$ & -- \\ 
        \cline{2-6}
        & i.i.d. & 3.1258 & 0.0996 & 0.0072 & -- \\ 
        \hline
        \multirow{3}{*}{3} & Worst-case & 4.2096 & $[0, 1.8772]$ & $[0, 2.3324]$ & 0 \\ 
        \cline{2-6}
        & Comonotonic & 3.7545 & $[0, 1.8772]$ & $[0, 1.8772]$ & -- \\ 
        \cline{2-6}
        & i.i.d. & 2.9832 & 0 & 0 & -- \\ 
        \hline
    \end{tabular}
    \vspace{0.2cm}
    \parbox{\textwidth}{\footnotesize \textbf{Note:}  $\VaR_{0.85}(X) = 1.8772$, $\VaR_{0.9}(X) = 2.3324$. The notation $[0, c]$ indicates that any value in the interval achieves the same optimal objective value, and $t^*$ is related to the optimal value of $t$ in  Proposition \ref{n2}. Dashes (--) indicate that the parameter $t^*$ is not applicable in that setting.} 
    \label{tab:placeholder}
\end{table}

The emergence of interval-valued optimal parameters stems from the specific structure of our objective function $K(\mathbf{a},\mathbf{b})$. From Table \ref{tab:placeholder}, we observe the following patterns: Under the worst-case condition, we obtain $t^*=0$ in all three scenarios, although this does not hold in general (see Section \ref{sec:6.2}). This particular choice of $t^*=0$ implies that the parameter $a^*_2$ can take any value within the interval $[0, \VaR_{\alpha_2}(X)]$ without affecting the optimal objective value, since the condition $\alpha_2 + t \le 1$ is satisfied throughout this range; see Remark \ref{rem:3} for details.

In Case~1, the optimal objective values under both comonotonic and worst-case dependence coincide, equaling $\sum_{i=1}^2 \VaR_{\alpha_i}(X_i)$, which exceeds the value obtained under the i.i.d.\ condition. 
Given that $\alpha \ge \alpha_1 \ge \alpha_2$, the optimal intervals $a_i \in [0, b_i^*]$ indicate that under comonotonic and worst-case dependence, insurers can select any retention level within $[0, \VaR_{\alpha_i}(X_i)]$ for losses up to $\VaR_{\alpha_i}(X_i)$, while retaining all losses beyond this threshold. This results in the same aggregate risk measure as having no reinsurance at all. 

In Case~2, where $\alpha_1 \ge \alpha \ge \alpha_2$, under both comonotonic and worst-case dependence, only the second insurer possesses flexibility, being able to choose any retention level for $a_2^* \in [0, \VaR_{\alpha_2}(X_2)]$ while taking no reinsurance beyond this threshold. This leads to the same outcome as purchasing no reinsurance for the second insurer.
For the first insurer, the optimal strategy involves reinsuring the layer $x - a_1$ for losses in the interval $x \in [\VaR_{\alpha}(X_1), \VaR_{\alpha_1}(X_1)]$, while maintaining flexibility in other regions. Furthermore, under the i.i.d.\ condition, we observe the intuitive result that $a_1 \ge a_2$ when $\alpha_1 \ge \alpha_2$, reflecting that higher risk tolerance requires less reinsurance.
Although the comonotonic scenario yields a higher objective value than the i.i.d.\ case in this particular configuration, we emphasize that this ordering is not universal. As demonstrated in Figure~\ref{fig:va}, the relationship between optimal values under different dependence structures can vary across parameter settings.

    In Case~3, the parameter $t$ remains at $0$, and the optimal reinsurance structure under the worst-case condition resembles that observed in Case~2. The worst-case objective value exceeds the comonotonic value, demonstrating that comonotonic dependence does not always constitute the worst-case scenario for VaR-based risk measures. Under comonotonic dependence,  any combination of $a_1, a_2 \in [0, \VaR_{\alpha}(X_i)]$ minimizes the objective function.
In contrast, under the i.i.d. condition, the unique optimal solution is $a_1 = a_2 = 0$. This outcome follows directly from Theorem~\ref{thme}: when $\alpha$ is sufficiently small relative to $\alpha_1$ and $\alpha_2$, the condition $1 - \Phi^{-1}(\alpha) G_i(\mathbf{a})^{1/2} \ge 0$ holds for all $a_1, a_2 \ge 0$, driving the optimal retention levels to their minimum values.

Next, we assume $\alpha_1 = \alpha_2$ to facilitate a clearer analysis of the relationship between $\alpha$, $\alpha_i$, and the value of the objective function under the optimal  strategy.

\begin{figure}[htb!]
\centering
 \includegraphics[width=14cm]{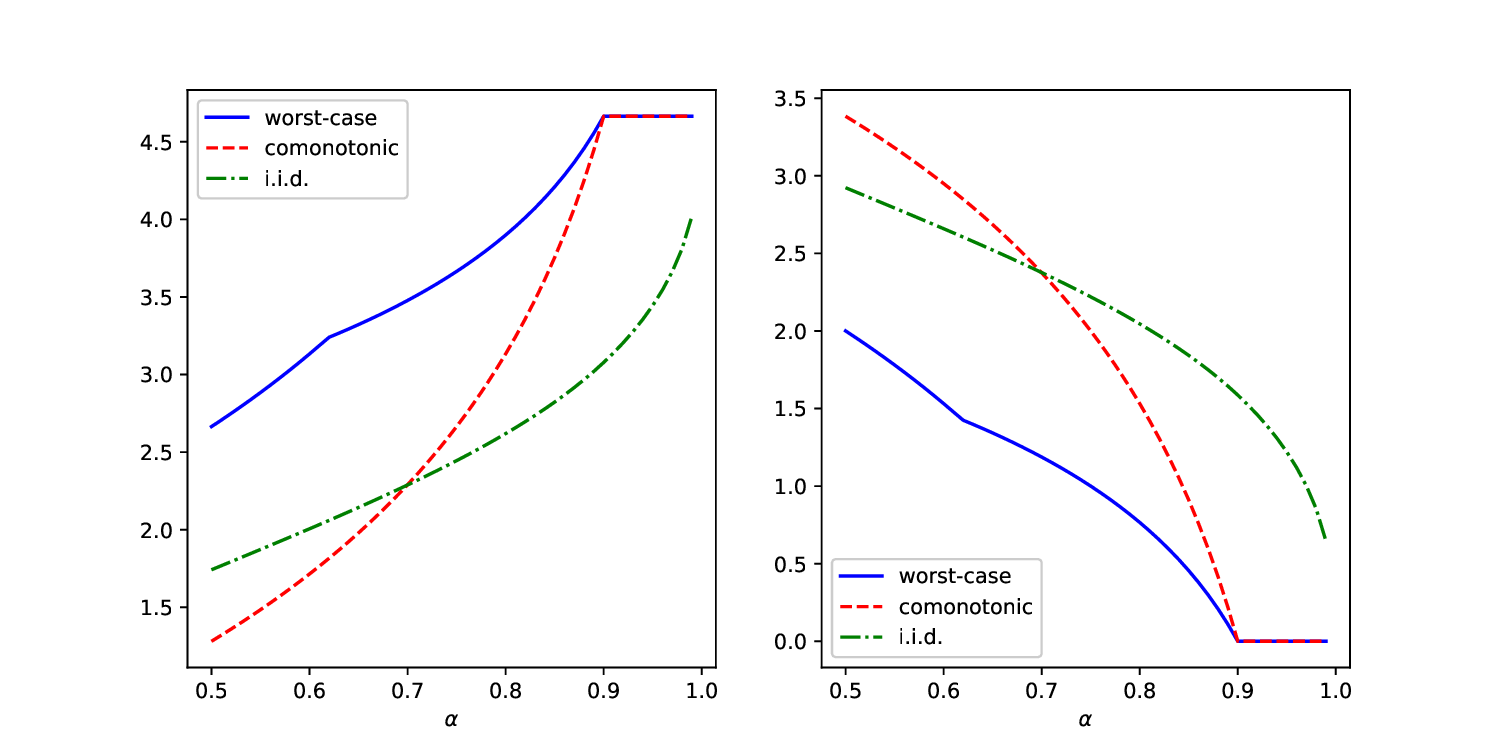}
 \captionsetup{font=small}
\caption{ Value of the objective function under the optimal insurance strategy (left panel) and the benefit of purchasing reinsurance (right panel) for $\alpha\in(0.5,1)$}

 \label{fig:va}
\end{figure}

\begin{figure}[htb!]
\centering
 \includegraphics[width=14cm]{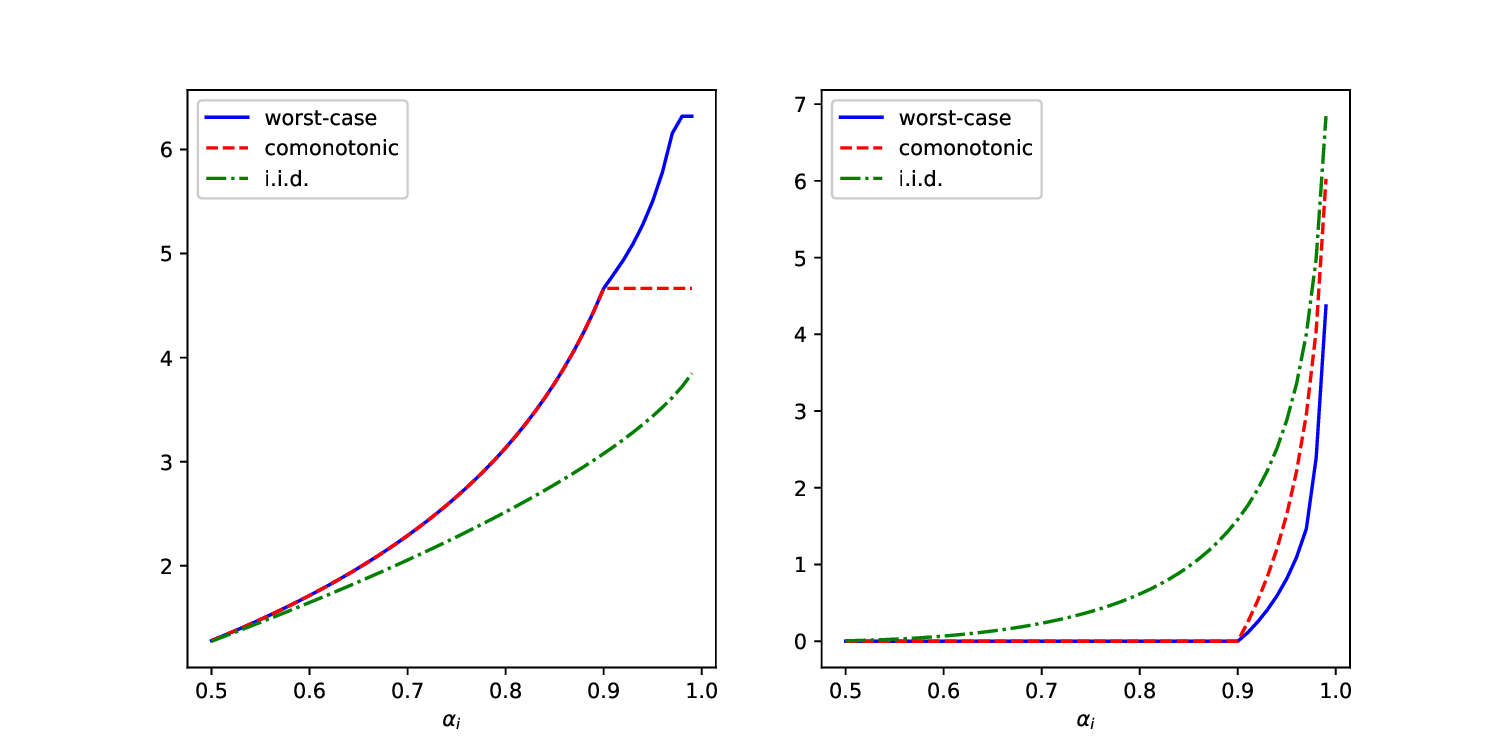}
 \captionsetup{font=small}
 \caption{  Value of the objective function under the optimal insurance strategy (left panel) and the benefit of purchasing reinsurance (right panel) for $\alpha_i\in(0.5,1)$}
 \label{fig:vai}
\end{figure}

As shown in the left panel of Figure~\ref{fig:va}, with fixed $\alpha_1 = \alpha_2 = 0.9$, the optimal objective value increases with $\alpha$. The worst-case scenario consistently yields the highest values, dominating other dependence structures. When $\alpha \ge \alpha_i$, the comonotonic and worst-case results coincide with $\sum_{i=1}^2 \mathrm{VaR}_{\alpha_i}(X_i)$, as evidenced in both Figures~\ref{fig:va} and~\ref{fig:vai}. For sufficiently small $\alpha$, the i.i.d.\ case may outperform the comonotonic scenario, confirming that comonotonic dependence does not always represent the worst-case outcome for VaR-based optimization.

The right panel of Figure~\ref{fig:va} reveals that the benefit of reinsurance — measured as the reduction in the objective value compared to no reinsurance — decreases with $\alpha$. This suggests that the advantage of purchasing reinsurance diminishes when reinsurers employ higher confidence levels. Economically, this aligns with intuition: as reinsurers become more conservative in their risk assessment, the value proposition of reinsurance contracts weakens. Furthermore, when $\alpha \ge \alpha_i$, the optimal objective value coincides with the no-reinsurance case, indicating zero benefit from risk transfer under these parameter conditions.

In Figure~\ref{fig:vai}, with fixed $\alpha = 0.9$, the left panel demonstrates that the optimal objective value increases with $\alpha_i$. While $\mathrm{VaR}_{\alpha_i}(X_i)$ naturally grows with $\alpha_i$, the reinsurance strategy's indirect dependence on this parameter warrants further examination of the net benefit. We therefore analyze the difference:
\begin{align*}
    \sum_{i=1}^2 \mathrm{VaR}_{\alpha_i}(g_{a_i,b_i}(X_i))
    - \mathrm{VaR}_\alpha\!\left(\sum_{i=1}^2 g_{a_i,b_i}(X_i)\right),
\end{align*}
which quantifies the impovement over the no-reinsurance case. As shown in the right panel, this difference increases with $\alpha_i$, indicating that the value of optimal reinsurance becomes more pronounced as insurers adopt higher confidence levels in their risk assessment.   

\subsection{Effects of distributional assumptions}\label{sec:6.2}
In Subsection~\ref{sec:6.1}, we examined specific examples under the worst-case condition for $n=2$ given by Proposition~\ref{n2}, where the optimal parameter $t$ consistently took the value $0$. We now extend our analysis to investigate the general behavior of $t$ and its implications for optimal reinsurance design.

\begin{proposition}\label{a12}
Let $n=2$ and $\overline{G}_1$ be defined by \eqref{eq:G1bar}. Assume that $F_1^{-1}$ and $F_2^{-1}$ are continuous on $(0,1)$, and that $\alpha \ge \alpha_1 + \alpha_2 - 1$. Then, in the optimization problem
\[
\inf_{(a_1,a_2,b_1,b_2)\in\mathcal{A}_1} \inf_{t \in [0,1-\alpha]} \overline{G}_1(a_1,a_2,b_1,b_2,t),
\]  
the optimal value of $t$ is attained at one of the boundary points: either $t^* = 0$ or $t^* = 1-\alpha$.
\end{proposition}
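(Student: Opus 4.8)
The plan is to exploit two structural features of $\overline{G}_1$ given in \eqref{eq:G1bar}: among its terms, only $\VaR_{\alpha+t}(g_{a_1,b_1}(X_1))$ and $\VaR_{1-t}(g_{a_2,b_2}(X_2))$ depend on $t$, and the objective separates additively across the two insurers, since $(a_1,b_1)$ enters only through the $X_1$-terms and $(a_2,b_2)$ only through the $X_2$-terms. First I would swap the two infima (valid since both equal the infimum over the product domain), writing $\inf_{(\a,\b)\in\mathcal{A}_1}\inf_{t}\overline{G}_1 = \inf_{t}\Psi(t)$ with $\Psi(t):=\inf_{(\a,\b)\in\mathcal{A}_1}\overline{G}_1(\a,\b,t)$, and use that $\mathcal{A}_1$ factorizes as $\{0\le a_i\le b_i\le\infty\}$ to minimize each insurer's layer independently for fixed $t$.

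The key computational step is to evaluate these inner minimizations in closed form. Since $g_{a,b}$ is nondecreasing and $F_i^{-1}$ is continuous, quantiles commute: $\VaR_p(g_{a,b}(X_i))=g_{a,b}(F_i^{-1}(p))$. Writing $s=F_1^{-1}(\alpha+t)$ and $q=F_1^{-1}(\alpha_1)$, I would minimize $g_{a_1,b_1}(s)-g_{a_1,b_1}(q)$ over $0\le a_1\le b_1\le\infty$. Because $g_{a,b}$ is nondecreasing and $1$-Lipschitz, this difference lies in $[0,s-q]$ when $s\ge q$ and in $[-(q-s),0]$ when $s<q$; choosing the layer to avoid (respectively contain) the interval between $s$ and $q$ attains the minimum, yielding
\[
\inf_{0\le a_1\le b_1}\bigl[g_{a_1,b_1}(F_1^{-1}(\alpha+t))-g_{a_1,b_1}(F_1^{-1}(\alpha_1))\bigr]=-\bigl(F_1^{-1}(\alpha_1)-F_1^{-1}(\alpha+t)\bigr)_+,
\]
and the analogous quantity $-(F_2^{-1}(\alpha_2)-F_2^{-1}(1-t))_+$ for the second insurer. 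Hence $\Psi(t)=\VaR_{\alpha_1}(X_1)+\VaR_{\alpha_2}(X_2)-D_1(t)-D_2(t)$, where $D_1(t):=(F_1^{-1}(\alpha_1)-F_1^{-1}(\alpha+t))_+$ and $D_2(t):=(F_2^{-1}(\alpha_2)-F_2^{-1}(1-t))_+$, so minimizing $\Psi$ is equivalent to maximizing $D_1+D_2$ over $t\in[0,1-\alpha]$.

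The last step is to analyze the monotonicity and support of $D_1+D_2$. Monotonicity of $F_i^{-1}$ shows $D_1$ is nonincreasing and vanishes for $t\ge\alpha_1-\alpha$, while $D_2$ is nondecreasing and vanishes for $t\le 1-\alpha_2$. The hypothesis $\alpha\ge\alpha_1+\alpha_2-1$ is exactly $\alpha_1-\alpha\le 1-\alpha_2$, so the sets $\{D_1>0\}$ and $\{D_2>0\}$ are disjoint. Consequently $D_1+D_2$ coincides with the nonincreasing $D_1$ on $[0,\alpha_1-\alpha]$, is identically zero on $[\alpha_1-\alpha,1-\alpha_2]$, and coincides with the nondecreasing $D_2$ on $[1-\alpha_2,1-\alpha]$; its maximum over $[0,1-\alpha]$ is therefore $\max\{D_1(0),D_2(1-\alpha)\}$, attained at $t=0$ or $t=1-\alpha$, which gives $t^*\in\{0,1-\alpha\}$.

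The main obstacle is the closed-form inner minimization and the realization that the optimal layer collapses each insurer's $t$-dependent contribution to a one-sided quantile gap; once this is established, the disjoint-support structure supplied by $\alpha\ge\alpha_1+\alpha_2-1$ makes the boundary conclusion immediate. A minor point to treat separately is the degenerate case $\alpha\ge\alpha_1$ or $\alpha\ge\alpha_2$, in which one of $D_1,D_2$ vanishes identically and $D_1+D_2$ is monotone, so its maximum trivially sits at an endpoint.
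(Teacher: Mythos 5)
Your proof is correct, and it arrives at the same two candidate solutions as the paper --- value $\VaR_{\alpha}(X_1)+\VaR_{\alpha_2}(X_2)$ at $t^*=0$, or $\VaR_{\alpha_1}(X_1)+\VaR_{\alpha}(X_2)$ at $t^*=1-\alpha$ --- but the organization is genuinely different. The paper never writes down a value function in $t$: after first disposing of the degenerate situations $\alpha\ge\alpha_1$ or $\alpha\ge\alpha_2$, it splits $[0,1-\alpha]$ into the three regimes $t<\alpha_1-\alpha$, $\alpha_1-\alpha\le t\le 1-\alpha_2$, $t>1-\alpha_2$, bounds the two $t$-dependent terms from below in each regime (by monotonicity of VaR in the level when $\theta_i\ge\alpha_i$, by the $1$-Lipschitz property of $g_{a_i,b_i}$ otherwise), locates the minimizing $t$ within each regime, and finally compares the three regime minima. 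You instead swap the two infima and solve the inner layer optimization in closed form, $\inf_{0\le a\le b}\bigl[g_{a,b}(s)-g_{a,b}(q)\bigr]=-(q-s)_+$, which packages the paper's per-case monotonicity and Lipschitz bounds into a single formula, and you obtain the explicit value function $\Psi(t)=\VaR_{\alpha_1}(X_1)+\VaR_{\alpha_2}(X_2)-D_1(t)-D_2(t)$; the hypothesis $\alpha\ge\alpha_1+\alpha_2-1$ then enters exactly once, as disjointness of the supports of $D_1$ and $D_2$, making the endpoint conclusion immediate. What your version buys is uniformity and transparency: no sign-based case split, the degenerate cases are absorbed into $D_i\equiv 0$, and the role of the hypothesis is isolated. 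What the paper's case format buys is that it carries over directly to Proposition~\ref{aa1a2}, where $\alpha<\alpha_1+\alpha_2-1$ makes the supports of $D_1$ and $D_2$ overlap on $(1-\alpha_2,\alpha_1-\alpha)$ and the middle region must be handled by the convexity assumption $\mathbf F\in(\M_{cx}^{\alpha})^2$ rather than by disjointness; your closed-form reduction would also be a clean starting point there, but the extra comparison argument in the overlap region cannot be avoided.
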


\begin{proposition}\label{aa1a2}
Let $n=2$ and $\overline{G}_1$ be defined by \eqref{eq:G1bar}. Assume that $F_1^{-1}$ and $F_2^{-1}$ are continuous on $(0,1)$, and $\mathbf{F} \in \left(\mathcal{M}_{c x}^\alpha\right)^2$. Then, in the optimization problem
\[
\inf_{(a_1,a_2,b_1,b_2)\in\mathcal{A}_1} \inf_{t\in[0,1-\alpha]} \overline{G}_1(a_1,a_2,b_1,b_2,t),
\]
the optimal value of $t$ is attained at one of the boundary points: either $t^* = 0$ or $t^* = 1-\alpha$.
\end{proposition}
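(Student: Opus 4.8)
The plan is to isolate the dependence on $t$ and reduce everything to a concavity statement. Only the two ``reinsurer-side'' quantile terms in $\overline G_1$ depend on $t$, so after discarding the $t$-free retention terms it suffices to analyse
\[
\Psi(t):=\VaR_{\alpha+t}\big(g_{a_1,b_1}(X_1)\big)+\VaR_{1-t}\big(g_{a_2,b_2}(X_2)\big),\qquad t\in[0,1-\alpha],
\]
and to show that the global infimum of $\overline G_1$ over $(\a,\b,t)\in\mathcal{A}_1\times[0,1-\alpha]$ is attained with $t\in\{0,1-\alpha\}$. Since each $g_{a_i,b_i}$ is non-decreasing and the $F_i^{-1}$ are continuous, I would first record $\VaR_p(g_{a_i,b_i}(X_i))=g_{a_i,b_i}(F_i^{-1}(p))$, so that $\Psi(t)=g_{a_1,b_1}(F_1^{-1}(\alpha+t))+g_{a_2,b_2}(F_2^{-1}(1-t))$.

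The concavity engine is the following. The hypothesis $\mathbf F\in(\mathcal{M}_{cx}^\alpha)^2$ means precisely that $F_1^{-1}$ and $F_2^{-1}$ are concave on $(\alpha,1)$. If in addition $a_i\le F_i^{-1}(\alpha)=\VaR_\alpha(X_i)$ for $i=1,2$, then for every $t\in[0,1-\alpha]$ the arguments $F_1^{-1}(\alpha+t)$ and $F_2^{-1}(1-t)$ lie in $[a_i,\infty)$, on which $g_{a_i,b_i}(x)=\min(x-a_i,\,b_i-a_i)$ is concave and non-decreasing. Composing a concave non-decreasing function with a concave one, and using that $t\mapsto F_2^{-1}(1-t)$ is concave, makes both $t\mapsto g_{a_1,b_1}(F_1^{-1}(\alpha+t))$ and $t\mapsto g_{a_2,b_2}(F_2^{-1}(1-t))$ concave; hence $\Psi$ is concave on $[0,1-\alpha]$ and attains its minimum at an endpoint.

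The main obstacle, and where the argument really lives, is removing the restriction $a_i\le\VaR_\alpha(X_i)$. When $a_i>\VaR_\alpha(X_i)$ the composition $g_{a_i,b_i}\circ F_i^{-1}$ acquires a convex kink at the attachment point and $\Psi$ need not be concave; indeed $\Psi$ can have a genuine interior minimum (it may even vanish on an entire subinterval when both attachments are high). My plan is to show that such configurations are never needed, by proving that one may lower $a_i$ down to $\VaR_\alpha(X_i)$ without increasing $\overline G_1$. Replacing $a_i$ by $a_i':=\VaR_\alpha(X_i)$ decreases the $t$-free retention term by $\Delta:=g_{a_i',b_i}(\VaR_{\alpha_i}(X_i))-g_{a_i,b_i}(\VaR_{\alpha_i}(X_i))$, while it raises $\Psi(t)$ pointwise by $D(F_i^{-1}(\alpha+t))$, where $D:=g_{a_i',b_i}-g_{a_i,b_i}$ is non-decreasing with maximum $a_i-a_i'$. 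The key point is that in the relevant regime $a_i<\VaR_{\alpha_i}(X_i)$ one has $F_i^{-1}(\alpha_i)>a_i$, whence $\Delta=a_i-a_i'=\max D$, so $D(F_i^{-1}(\alpha+t))\le\Delta$ for every $t$; taking the infimum over $t$ then shows $\inf_t\Psi$ rises by at most $\Delta$, exactly offsetting the drop in the retention term, so the value does not increase. The degenerate case $a_i\ge\VaR_{\alpha_i}(X_i)$ is handled separately by noting that $g_{a_i,b_i}$ then contributes nothing at level $\alpha_i$ and may be taken identically zero, which only lowers $\Psi$. Applying this reduction coordinatewise to a global minimiser brings us into the regime $a_i\le\VaR_\alpha(X_i)$ of the previous paragraph, where concavity of $\Psi$ forces the optimal $t$ to a boundary point. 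Concavity of the $F_i^{-1}$ enters only at this last step, which explains why the companion Proposition~\ref{a12} must invoke a different hypothesis.
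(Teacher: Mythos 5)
Your proof is correct, and it takes a genuinely different route from the paper's. The paper first settles the case $\alpha\ge\alpha_1+\alpha_2-1$ by citing Proposition~\ref{a12}, and for $\alpha<\alpha_1+\alpha_2-1$ it partitions $[0,1-\alpha]$ into three ranges according to the position of $\alpha+t$ relative to $\alpha_1$ and of $1-t$ relative to $\alpha_2$: in the two outer ranges the inner infimum over $(\a,\b)$ is computed explicitly from the $1$-Lipschitz property (giving $\VaR_{\alpha+t}(X_1)+\VaR_{\alpha_2}(X_2)$ and $\VaR_{\alpha_1}(X_1)+\VaR_{1-t}(X_2)$, minimized at $t=0$ and $t=1-\alpha$ respectively), while in the middle range the value is $\VaR_{\alpha+t}(X_1)+\VaR_{1-t}(X_2)$ and the hypothesis $\mathbf F\in(\M_{cx}^{\alpha})^2$ enters through two two-point quantile-slope inequalities showing this dominates one of the two boundary values. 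You instead prove a normalization step --- attachment points can be lowered to $a_i\le\VaR_\alpha(X_i)$ without increasing $\overline G_1$ pointwise in $t$, since in the regime $a_i<\VaR_{\alpha_i}(X_i)$ the retention credit $\Delta=D\bigl(F_i^{-1}(\alpha_i)\bigr)=\max D$ dominates the increase $D\bigl(F_1^{-1}(\alpha+t)\bigr)$, resp.\ $D\bigl(F_2^{-1}(1-t)\bigr)$, for every $t$ --- and then use the shape hypothesis exactly once, to make $\Psi$ a sum of concave functions of $t$ (concave nondecreasing layer functions composed with concave quantiles), forcing an endpoint minimum. I checked the reduction, including the degenerate branch $a_i\ge\VaR_{\alpha_i}(X_i)$ where the zero indemnity is pointwise at least as good and feasibility of $(a_i',b_i)$ in $\mathcal A_1$; it is sound, and note that concavity of $F_i^{-1}$ on $(\alpha,1)$ automatically forces $F_i^{-1}(1)<\infty$, so no infinite values arise at $t\in\{0,1-\alpha\}$. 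Trade-offs: your argument is uniform in $(\alpha,\alpha_1,\alpha_2)$ --- no case split and no appeal to Proposition~\ref{a12} --- and isolates the role of $\M_{cx}^{\alpha}$ in a single concavity statement; the paper's longer case analysis delivers, beyond boundary attainment, the explicit candidate optimal values and retention intervals that the discussion in Section~\ref{sec:6.2} relies on. One small fix: run your final step for an arbitrary feasible $(\a,\b,t)$ (or a minimizing sequence) rather than ``a global minimiser,'' since the outer infimum over $(\a,\b)$ need not be attained; your pointwise estimates already give the restricted-infimum identity verbatim.
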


Our theoretical analysis establishes that for $n=2$, the optimal solution reduces to boundary points ($t^* = 0$ or $t^* = 1-\alpha$) under two scenarios: when $\mathbf{F} \in (\mathcal{M}_{\text{cx}}^\alpha)^n$, or when $\alpha_1 + \alpha_2 \le 1 + \alpha$. However, more complex behavior emerges in other parameter configurations. 
We consider the case where $\mathbf{F} \in (\mathcal{M}_{\text{cv}}^\alpha)^n$ and $\alpha_1 + \alpha_2 > 1 + \alpha$. We generate $X_i$ from Pareto distributions with cumulative distribution functions
\[
F_{X_i}(x) = 1 - \left(1 + \frac{x}{\lambda_i}\right)^{-\beta_i}, 
\]
using parameters $(\beta_1, \beta_2) = (9, 6)$ and $(\lambda_1, \lambda_2) = (8, 5)$. This specification yields heterogeneous risk profiles with different tail behaviors. We set the confidence level $\alpha = 0.9$ and report the corresponding optimal parameters $(a^*_1, a^*_2, t^*)$ obtained through numerical optimization in Table~\ref{tab:placeholder2}.

\begin{table}[h!]
    \centering
    \caption{Optimal reinsurance parameters under worst-case dependence with high confidence levels ($\alpha = 0.9$)}
   \begin{tabular}{c|c|c|c|c}
         \hline
         \textbf{Case} & \textbf{result} & \textbf{$a_1^*$} & \textbf{$a_2^*$} & \textbf{$t^*$}\\ \hline
         $\alpha_1=0.97,~\alpha_2=0.99$ & $6.3022$ & $[0,2.3324]$ & $[0,3.9698]$ & $0$ \\ \hline
         $\alpha_1=0.98,~\alpha_2=0.99$ & $6.3944$ & $[0,3.2103]$ & $[0,3.1841]$ & $0.052$ \\ \hline
         $\alpha_1=0.99,~\alpha_2=0.99$ & $6.3944$ & $[0,3.2103]$ & $[0,3.1841]$ & $0.052$ \\ \hline
         $\alpha_1=0.99,~\alpha_2=0.98$ & $6.3944$ & $[0,3.2103]$ & $[0,3.1841]$ & $0.052$ \\ \hline
         $\alpha_1=0.99,~\alpha_2=0.97$ & $6.1503$ & $[0,3.8113]$ & $[0,2.3390]$ & $0.1000$ \\ \hline
    \end{tabular}

    \vspace{0.2cm}
    \parbox{\textwidth}{\footnotesize \textbf{Note:} 
    $\mathrm{VaR}_{0.9}(X_1) = 2.3324$, $\mathrm{VaR}_{0.952}(X_1) = 3.2103$, $\mathrm{VaR}_{0.99}(X_1) = 3.8113$; 
    $\mathrm{VaR}_{0.9}(X_2) = 2.3390$, $\mathrm{VaR}_{0.948}(X_2) = 3.1841$, $\mathrm{VaR}_{0.99}(X_2) = 3.9698$. 
    The notation $[0, c]$ indicates that any retention level in this interval achieves the same optimal objective value, and $t^*$ is related to the optimal value of $t$ in  Proposition \ref{n2}.}
    \label{tab:placeholder2}
\end{table}

Table~\ref{tab:placeholder2} reveals  that boundary solutions persist even when $\alpha_1 + \alpha_2 > 1 + \alpha$: for $(\alpha_1, \alpha_2) = (0.97, 0.99)$ we obtain $t^* = 0$, while for $(0.99, 0.97)$ we have $t^* = 1 - \alpha$. This demonstrates that the reduction to boundary points is not limited to the theoretical condition $\alpha_1 + \alpha_2 \le 1 + \alpha$.

When both $\alpha_1$ and $\alpha_2$ are sufficiently large, the optimal retention levels exhibit the structure $a^*_1 \in [0, \VaR_{\alpha+t^*}(X_1)]$ and $a^*_2 \in [0, \VaR_{1-t^*}(X_2)]$. Notably, the optimal solutions for $(\alpha_1, \alpha_2) = (0.98, 0.99)$, $(0.99, 0.99)$, and $(0.99, 0.98)$ are identical. Similar to Proposition~\ref{aa1a2}, the optimal value $\VaR_{\alpha+t^*}(X_1) + \VaR_{1-t^*}(X_2)$   depends only on  $t^*$ and not on the specific values of $\alpha_1$ and $\alpha_2$. Consequently, both $t^*$ and the optimal objective value remain unchanged across these configurations, as further illustrated in Figure~\ref{fig:vai}.
These findings have practical relevance for insurance markets. Insurers, being typically more risk averse, often employ higher confidence levels (e.g., $\alpha_1, \alpha_2 = 0.99$) compared to reinsurers (e.g., $\alpha = 0.9$). This makes the case $\alpha_1 + \alpha_2 > 1 + \alpha$ particularly relevant in practice, and our results provide guidance for optimal reinsurance design in such realistic settings.

\begin{remark}
It is important to emphasize that our preceding analysis focuses specifically on the case $n = 2$ and addresses the optimal insurance problem under the $\VaR$ risk measure. When considering the more general $\mathrm{RVaR}$-based optimization, the interplay between $\mathrm{RVaR}_{\beta_i,\alpha_i}(X_i - f_i(X_i))$ and $\mathrm{RVaR}_{\gamma_i,\gamma_0}(f_i(X_i))$ introduces additional complexity, necessitating more sophisticated technical treatment.
Consider a simplified setting with $\mathbf{f} \in \mathcal{I}^n$, $\beta = 0$, and $\beta_i = 0$ for $i = 1, 2$. The optimization problem becomes:
\begin{align*}
    \inf_{(a_i,b_i) \in \mathcal{A}_1} \left[ \mathrm{ES}_{1-\alpha_i}(X_i) - \mathrm{ES}_{1-\alpha_i}(g_{a_i,b_i}(X_i)) + \mathrm{ES}_{1-\alpha}(g_{a_i,b_i}(X_i)) \right].
\end{align*}
 If $\alpha \leq \alpha_i$, then
    \begin{align*}
        \mathrm{ES}_{1-\alpha}(g_{a_i,b_i}(X_i)) - \mathrm{ES}_{1-\alpha_i}(g_{a_i,b_i}(X_i)) \geq 0.
    \end{align*}
 If $\alpha > \alpha_i$, we derive:
    \begin{align*}
        & \mathrm{ES}_{1-\alpha}(g_{a_i,b_i}(X_i)) - \mathrm{ES}_{1-\alpha_i}(g_{a_i,b_i}(X_i)) \\
        = & \frac{1}{\alpha} \int_{1-\alpha}^{1-\alpha_i} g_{a_i,b_i}(x)  \mathrm{d}F_X(x) - \left( \frac{1}{\alpha_i} - \frac{1}{\alpha} \right) \int_{1-\alpha_i}^1 g_{a_i,b_i}(x)  \mathrm{d}F_X(x) \\
        = & \frac{1}{\alpha} \int_{1-\alpha}^{1-\alpha_i} \left( g_{a_i,b_i}(x) - g_{a_i,b_i}(\VaR_{1-\alpha_i}(X_i)) \right) \mathrm{d}F_X(x) \\
        & - \left( \frac{1}{\alpha_i} - \frac{1}{\alpha} \right) \int_{1-\alpha_i}^1 \left( g_{a_i,b_i}(x) - g_{a_i,b_i}(\VaR_{1-\alpha_i}(X_i)) \right) \mathrm{d}F_X(x).
    \end{align*}

Note that  for $x \leq \VaR_{1-\alpha_i}(X_i)$, we have $g_{a_i,b_i}(x) - g_{a_i,b_i}(\VaR_{1-\alpha_i}(X_i)) \geq x - \VaR_{1-\alpha_i}(X_i)$, and for $x \geq \VaR_{1-\alpha_i}(X_i)$, we have $g_{a_i,b_i}(x) - g_{a_i,b_i}(\VaR_{1-\alpha_i}(X_i)) \leq x - \VaR_{1-\alpha_i}(X_i)$.
Therefore, the optimal solution for the ES case is achieved with $a^*_i \leq \VaR_{1-\alpha}(X_i)$ and $b^*_i = \infty$, 
which differs from the optimal solution in the $\VaR$ case.
\end{remark}

\section{Conclusion}\label{sec:7}

This paper develops a robust framework for designing Pareto-optimal multilateral reinsurance treaties under dependence uncertainty. Through theoretical analysis and numerical studies, we establish several key insights that advance the understanding of optimal risk sharing in complex insurance markets.

Our main theoretical contribution lies in characterizing the precise structure of Pareto-optimal reinsurance contracts when the dependence between the cedants’ risks is completely unknown. Under the robust RVaR framework (Theorem~\ref{th:RVaR}), the originally infinite-dimensional optimization problem can be reduced to a tractable finite-dimensional one. Optimal indemnities take distinctive parametric forms, including {layered contracts} $g_{a,b}$ that cover losses exceeding a retention level $a$ but are capped at $b$, and {hybrid contracts} $r_{a,b,c}$ that combine proportional and excess-of-loss coverage. These structures allow explicit control over the retained proportion of losses and additional protection above specified thresholds.

For the special case of VaR objectives (Theorem~\ref{thvar}), optimal indemnity functions exhibit  {layer structures} whose forms depend on the convexity properties of marginal distributions. When distributions belong to $\mathcal{M}_{cx}^\alpha$ or $\mathcal{M}_{cv}^\alpha$, we obtain even more explicit characterizations: the {capped proportional contracts} $l_{a,b}$ generalize quota-share arrangements, and the  {shifted excess-of-loss contracts} $h_{a,b}$ include stop loss or quota share as special cases. The asymptotic analysis (Theorems~\ref{RVaR_as} and~\ref{thme}) further demonstrates that as the number of cedants grows, optimal contracts converge to layered structures, represented by $g_{a,b}$.

Our simulation studies highlight several practical patterns. First, the relationship between dependence and optimal outcomes is nuanced: while the worst-case dependence consistently produces conservative outcomes, comonotonicity does not always yield the maximal VaR, and i.i.d.\ scenarios can sometimes generate larger objective values depending on parameter configurations. Second, the value of reinsurance is highly sensitive to confidence levels: higher $\alpha$ in the reinsurer’s assessment reduces the marginal benefit of reinsurance, and higher confidence levels among cedants amplify the advantage of optimal contracts.

Finally, while our analysis assumes {exogenously set premiums}, a natural extension is to study  {premium-dependent strategies}, in which  optimal contracts interact with pricing decisions. Exploring this interaction will be critical for understanding the full economic implications of Pareto-optimal reinsurance design in practice.

\small{}
    \subsection*{Acknowledgments}
The authors are grateful to Yiying Zhang for his helpful comments. Xia Han is supported by the National Natural Science Foundation of China (Grant Nos. 12301604, 12371471, and 12471449).
 \subsection*{Declaration of Interest statements}
The authors declare that they have no known competing financial interests or personal relationships that could have appeared to influence the work reported in this article.

\appendix
\section{Proofs of Section \ref{sec:2}}
\begin{proof}[Proof of Proposition \ref{prop:1}]
We first show the ``if'' part by contradiction. Assume that there exists a contract  $(\mathbf f,\boldsymbol{\pi})  \in\mathcal I^n\times \mathbb{R}^n$ that solves  $\inf_{\mathbf f \in\mathcal I^n}V(\mathbf f),$ but is not robust Pareto-optimal.  This means that there exists $(\hat{\mathbf f},\hat{ \boldsymbol{\pi}}) \in\mathcal I^n\times \mathbb{R}^n$ such that 
\begin{align*}
\mathrm{RVaR}_{\beta_i,\alpha_i}\left(T_{f_i,\pi_i}(X_i)\right)&\geq \mathrm{RVaR}_{\beta_i,\alpha_i}\left(T_{\hat f_i,\hat \pi_i}(X_i)\right),\text{ for all } i\in [n],\\
\sup_{\mathbf X\in \mathcal{E}_n(\mathbf{F})}\mathrm{RVaR}_{\beta,\alpha}\left(R_{\mathbf f,\boldsymbol{\pi}}(\mathbf X)\right) &\geq \sup_{\mathbf X\in \mathcal{E}_n(\mathbf{F})}\mathrm{RVaR}_{\beta,\alpha}\left(R_{\hat{ \mathbf f},\hat{ \boldsymbol{\pi}}}(\mathbf X)\right),
\end{align*}
with at least one strict inequality. It follows that 
\begin{equation*}\begin{aligned}&\sum_{i=1}^n \mathrm{RVaR}_{\beta_i,\alpha_i}\left(T_{f_i,\pi_i}(X_i)\right)+ \sup_{\mathbf X\in \mathcal{E}_n(\mathbf{F})}\mathrm{RVaR}_{\beta,\alpha}\left(R_{\mathbf f,\boldsymbol{\pi}}(\mathbf X)\right)\\&>\sum_{i=1}^n \mathrm{RVaR}_{\beta_i,\alpha_i}\left(T_{\hat f_i,\hat \pi_i}(X_i)\right)+
\sup_{\mathbf X\in \mathcal{E}_n(\mathbf{F})}\mathrm{RVaR}_{\beta,\alpha}\left(R_{\hat{ \mathbf f},\hat{ \boldsymbol{\pi}}}(\mathbf X)\right),\end{aligned}\end{equation*} which contradicts the assumption that $\mathbf f\in \arg\inf_{\mathbf f \in\mathcal I^n}V(\mathbf f).$ Hence, the ``if'' part holds.


To show the ``only if'' part, assume, by way of contradiction, that there exists a robust Pareto-optimal contract $(\mathbf f^*,\boldsymbol{\pi}^*) \in\mathcal I^n\times \mathbb{R}^n$ such that $\mathbf f^*\notin \arg\inf_{\mathbf f \in\mathcal I^n}V(\mathbf f).$ Then, there exists $(\widetilde{\mathbf f},\widetilde{\boldsymbol{\pi}}) \in\mathcal I^n\times \mathbb{R}^n$ such that

\begin{equation}
\begin{aligned}
&\;\sum_{i=1}^{n}\mathrm{RVaR}_{\beta_i,\alpha_i}\left( T_{\widetilde f_i,\widetilde \pi_i}(X_i)\right)+\sup_{\mathbf X\in \mathcal{E}_n(\mathbf{F})}\mathrm{RVaR}_{\beta,\alpha}\left(R_{\widetilde{\mathbf f}, \widetilde{\boldsymbol{\pi}}}(\mathbf X)\right)
\\&<\sum_{i=1}^{n}\mathrm{RVaR}_{\beta_i,\alpha_i}\left( T_{f_i^*,\pi_i^*}(X_i)\right)+\sup_{\mathbf X\in \mathcal{E}_n(\mathbf{F})}\mathrm{RVaR}_{\beta,\alpha}\left(R_{\mathbf f^*,\boldsymbol{\pi}^*}(\mathbf X)\right).
\end{aligned}
\label{eq:star}
\end{equation}
Define, for $i\in [n],$
{
\begin{equation}\begin{aligned}
\hat{\pi}_i:&=\;\widetilde{\pi}_i+\left(\mathrm{RVaR}_{\beta_i,\alpha_i}\left( T_{f_i^*,\pi_i^*}(X_i)\right)-\mathrm{RVaR}_{\beta_i,\alpha_i}\left( T_{\widetilde f_i,\widetilde \pi_i}(X_i)\right)\right)\\&
=\mathrm{RVaR}_{\beta_i,\alpha_i}\left( \widetilde f_i(X_i)\right)-\mathrm{RVaR}_{\beta_i,\alpha_i}\left( f_i^*(X_i)\right)+\pi^*_i.
\label{eq:pi_hat}
\end{aligned}
\end{equation}
}Note that $(\widetilde{\mathbf f},\hat{\boldsymbol{\pi}})\in\mathcal I^n\times \mathbb{R}^n.$ By \eqref{eq:pi_hat} and cash additivity of $\mathrm{RVaR}_{\beta_i,\alpha_i}$,\footnote{For $c\in \mathbb{R}$, $\mathrm{RVaR}_{\beta_i,\alpha_i}(X+c)= \mathrm{RVaR}_{\beta_i,\alpha_i}(X)+c$ holds  for all  $X\in L^1$. } we have, for $i\in [n],$
\begin{equation*}
\mathrm{RVaR}_{\beta_i,\alpha_i}\left( T_{\widetilde f_i,\hat\pi_i}(X_i)\right)=
\mathrm{RVaR}_{\beta_i,\alpha_i}\left( T_{f_i^*,\pi_i^*}(X_i)\right).
\end{equation*}
Moreover, it follows from  \eqref{eq:star}, \eqref{eq:pi_hat}, and cash additivity of $\mathrm{RVaR}_{\beta,\alpha}$ that
\begin{align*}
\sup_{\mathbf X\in \mathcal{E}_n(\mathbf{F})}\mathrm{RVaR}_{\beta,\alpha}\left(R_{\widetilde{\mathbf f},\hat{\boldsymbol{\pi}}}(\mathbf X)\right)&=\sup_{\mathbf X\in \mathcal{E}_n(\mathbf{F})}\mathrm{RVaR}_{\beta,\alpha}\left(R_{\widetilde{ \mathbf f},\widetilde{\boldsymbol{\pi}}}(\mathbf X)\right)+\sum_{i=1}^{n}\mathrm{RVaR}_{\beta_i,\alpha_i}\left( T_{\widetilde f_i,\widetilde \pi_i}(X_i)\right)
\\
&\quad-\sum_{i=1}^{n}\mathrm{RVaR}_{\beta_i,\alpha_i}\left( T_{f_i^*, \pi_i^*}(X_i)\right)\\&<\sup_{\mathbf X\in \mathcal{E}_n(\mathbf{F})}\mathrm{RVaR}_{\beta,\alpha}\left(R_{\mathbf f^*,\boldsymbol{\pi}^*}(\mathbf X)\right).
\end{align*}
 This contradicts the fact that $(\mathbf f^*,\boldsymbol{\pi}^*) \in\mathcal I^n\times \mathbb{R}^n$ is Pareto optimal. 
Hence, the ``only if'' part holds. This completes the proof.
\end{proof}

\section{Proofs of Section \ref{sec:3}}
\begin{proof}[Proof of Theorem \ref{th:RVaR}]  
 (i) 
If  $\beta=0,$  we have $$\begin{aligned}\sup_{\mathbf X \in \mathcal{E}_n(\mathbf F)}\mathrm{RVaR}_{\beta,\alpha}\left(\sum_{i=1}^{n}f_i(X_i)\right)& = \sup_{\mathbf X \in \mathcal{E}_n(\mathbf F)}\mathrm{ES}_{1-\alpha}\left(\sum_{i=1}^{n}f_i(X_i)\right)\\&=\sum_{i=1}^{n}  \ES_{1-\alpha}(f_i(X_i)). \end{aligned}$$ The second   equality holds because ES is subadditive\footnote{For $\alpha\in (0,1]$, $\ES_{1-\alpha}(X+Y)\leq \ES_{1-\alpha}(X)+\ES_{1-\alpha}(Y)$ holds  for all  $X,Y\in L^1$. } and is maximized under the comonotonic dependence structure.
Thus, the optimization problem  \eqref{eq:opt1}  can be written as $$\begin{aligned}\inf_{\mathbf f \in\mathcal I^n}  \sum_{i=1}^n \left\{ \mathrm{RVaR}_{\beta_i,\alpha_i}(X_i) -  \mathrm{RVaR}_{\beta_i,\alpha_i}\left(f_i(X_i)\right)+  \ES_{1-\alpha}(f_i(X_i))\right\}.\end{aligned}$$
By definition, every strategy $ {\bf g}_{{\bf a}, {\bf b}}$ with $({\bf a}, {\bf b})\in \mathcal{A}_1$ belongs to $\mathcal I^n$. Hence, restricting the infimum over $\mathcal I^n$ to this subset yields
\[
\inf_{\mathbf f \in \mathcal I^n} V(\mathbf f)
\le 
\inf_{({\bf a}, {\bf b})\in \mathcal{A}_1}  V( {\bf g}_{{\bf a}, {\bf b}})
= \inf_{ ({\bf a}, {\bf b})\in \mathcal{A}_1} G( {\bf a}, {\bf b}).
\]
 For ease of notation, define $ \bar \alpha_i= 1-\beta_i-\alpha_i,$  $\bar \beta_i =1-\beta_i$ and $\bar \alpha=1-\alpha.$ Next, we explore the optimal indemnity functions under the following three cases (see Figure \ref{fig:ES1}).
\begin{figure}[htb!]
\centering
\includegraphics[width=16cm]{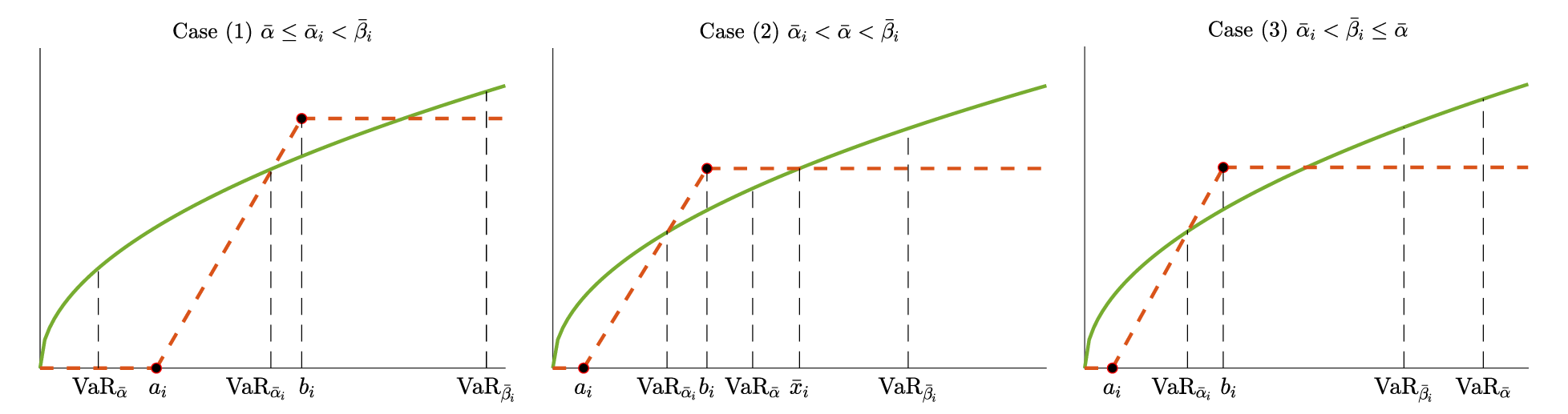}
 \captionsetup{font=small}
 \caption{ \small Three parameter orderings and corresponding $g_{a_i,b_i}$} \label{fig:ES1}
\end{figure}

Let 
\[
a_i = \VaR_{\bar \alpha_i}(X_i) - f_i(\VaR_{\bar \alpha_i}(X_i)), \quad 
\VaR_{\bar \alpha_i}(X_i) \le b_i \le \VaR_{\bar \beta_i}(X_i)
\] 
be such that 
\[
\mathrm{RVaR}_{\beta_i, \alpha_i}\bigl(g_{a_i,b_i}(X_i)\bigr) 
= \mathrm{RVaR}_{\beta_i, \alpha_i}\bigl(f_i(X_i)\bigr).
\] 
The existence of such \(b_i\) is ensured by the continuity of the mapping
\[
t \mapsto \mathrm{RVaR}_{\beta_i, \alpha_i}\bigl(g_{a_i,t}(X_i)\bigr), 
\quad t \in [\VaR_{\bar \alpha_i}(X_i), \VaR_{\bar \beta_i}(X_i)],
\] 
together with the intermediate-value property
\[
\mathrm{RVaR}_{\beta_i, \alpha_i}\bigl(g_{a_i,\VaR_{\bar \alpha_i}(X_i)}(X_i)\bigr)
\le \mathrm{RVaR}_{\beta_i, \alpha_i}\bigl(f_i(X_i)\bigr)
\le \mathrm{RVaR}_{\beta_i, \alpha_i}\bigl(g_{a_i,\VaR_{\bar \beta_i}(X_i)}(X_i)\bigr).
\]

{\bf Case (1)}: $\bar\alpha\leq \bar \alpha_i <\bar \beta_i.$   
With the predefined $a_i,b_i,$ we have $g_{a_i,b_i}(x)\leq f_i(x)$ for $x\leq \VaR_{\bar \alpha_i}(X_i)$ and $x\geq \VaR_{\bar \beta_i}(X_i).$ Hence, we have 
\begin{align*}\ES_{\bar\alpha} (g_{a_i,b_i}(X_i))&=\frac{1}{\alpha}\left(\int_{[\bar\alpha,\bar\alpha_i]\cup [\bar\beta_i, 1]}g_{a_i,b_i}(\VaR_t(X_i))\mathrm d t+\alpha_i\mathrm{RVaR}_{\beta_i,  \alpha_i} (g_{a_i,b_i}(X_i))\right)\\
&\leq \frac{1}{\alpha}\left(\int_{[\bar\alpha,\bar\alpha_i]\cup [\bar\beta_i, 1]}f_i(\VaR_t(X_i))\mathrm d t+\alpha_i\mathrm{RVaR}_{\beta_i,  \alpha_i} (f_i(X_i))\right)\\
&=\ES_{\bar\alpha} (f_i(X)).
\end{align*}  
This implies  that for any $f_i\in \mathcal{I},$ we can always find $g_{a_i,b_i}\in \mathcal{I}$   that is better than $f_i.$

{\bf Case (2)}: $\bar \alpha_i< \bar\alpha <\bar \beta_i.$  By the construction of $a_i$ and $b_i,$ there exists $\VaR_{\bar\alpha_i}(X_i)\leq \bar x_i\leq \VaR_{\bar\beta_i}(X_i)$ such that 
$g_{a_i,b_i}(x)\geq f_i(x)$ for $x\in [\VaR_{\bar\alpha_i}(X_i), \bar x_i]$ and $g_{a_i,b_i}(x)\leq f_i(x)$ for $x\geq \bar x_i.$ Hence, $\mathrm{RVaR}_{\beta_i,  \alpha_i} (g_{a_i,b_i}(X_i)) =\mathrm{RVaR}_{\beta_i,  \alpha_i}(f_i(X_i))$ implies   $$\int_{[\bar\alpha, \bar\beta_i]}g_{a_i,b_i}(\VaR_t(X_i))\mathrm d t\leq \int_{[\bar\alpha, \bar\beta_i]}f_i(\VaR_t(X_i))\mathrm d t,$$
which further implies $\ES_{\bar\alpha} (g_{a_i,b_i}(X_i))\leq \ES_{\bar\alpha} (f_i(X)).$ Therefore, for any $f_i\in \mathcal{I},$ there  always exists $g_{a_i,b_i}\in \mathcal{I}$  such  that it is better than $f_i.$ 

{\bf Case (3)}: $ \bar \alpha_i <\bar \beta_i\leq \bar\alpha.$  In this case, it is obvious that $\ES_{\bar\alpha} (g_{a_i,b_i}(X_i))\leq \ES_{\bar\alpha} (f_i(X)).$ Therefore, for any $f_i\in \mathcal{I},$ there  always exists $g_{a_i,b_i}\in \mathcal{I}$   such that it is better than $f_i\in \mathcal{I}.$ 

To summarize the three cases, we have 
$$\begin{aligned}\inf_{\f\in\mathcal{I}^n}V(\mathbf f)&\geq 
\inf_{(\a,\b)\in\mathcal{A}_1} G(\a,\b) .\end{aligned}$$
 We obtain the desired result for (i).

(ii)  
Let $f_i \in \mathcal{I}$ be convex, and consider $X_i$ with $F_{X_i} \in \mathcal{M}_{cv}^{1-\beta-\alpha}$.  
To show $F_{f_i(X_i)} \in \mathcal{M}_{cv}^{1-\beta-\alpha}$, take any $y_1, y_2 \ge f_i(F_+^{-1}(1-\beta-\alpha))$ with $y_1 \le y_2$, and let $x_1 = f_i^{-1}(y_1)$, $x_2 = f_i^{-1}(y_2)$.  
By the monotonicity of $f_i$, the preimage of the interval $[y_1,y_2]$ lies in $[x_1,x_2]$.  
Then, for any $\lambda \in (0,1)$, by the convexity of $f_i$,
\[
f_i(\lambda x_1 + (1-\lambda) x_2) \le \lambda f_i(x_1) + (1-\lambda) f_i(x_2) = \lambda y_1 + (1-\lambda) y_2,
\]
which implies
\[
f_i^{-1}(\lambda y_1 + (1-\lambda)y_2) \ge \lambda x_1 + (1-\lambda)x_2.
\]
Since $F_{X_i} \in \mathcal{M}_{cv}^{1-\beta-\alpha}$, for $x_1,x_2 \ge F_+^{-1}(1-\beta-\alpha)$ we have
\[
F_{X_i}(\lambda x_1 + (1-\lambda)x_2) \ge \lambda F_{X_i}(x_1) + (1-\lambda) F_{X_i}(x_2).
\]
Combining the above inequalities and using monotonicity of $f_i$, we obtain
\[
\begin{aligned}
F_{f_i(X_i)}(\lambda y_1 + (1-\lambda)y_2) 
&= F_{X_i}(f_i^{-1}(\lambda y_1 + (1-\lambda)y_2)) \\
&\ge F_{X_i}(\lambda x_1 + (1-\lambda)x_2) \\
&\ge \lambda F_{X_i}(x_1) + (1-\lambda) F_{X_i}(x_2) \\
&= \lambda F_{f_i(X_i)}(y_1) + (1-\lambda) F_{f_i(X_i)}(y_2),
\end{aligned}
\]
which proves that $F_{f_i(X_i)} \in \mathcal{M}_{cv}^{1-\beta-\alpha}$.

 Hence, in light of  Lemma \ref{lem:0}, the optimization problem  \eqref{eq:opt1} becomes 
\begin{equation}\label{eq:opt2}
\begin{aligned}
&\inf_{\mathbf f \in\mathcal I_{cx}^n} \inf_{\pmb{\gamma}\in(\beta+\alpha)\Delta_n,\gamma_0\geq \alpha} \left\{\sum_{i=1}^n \mathrm{RVaR}_{\beta_i,\alpha_i}\left(T_{f_i}(X_i)\right)+    \sum_{i=1}^n   \mathrm{RVaR}_{\gamma_i,\gamma_0}(f_i(X_i))\right\}\\  =&  \inf_{\pmb{\gamma}\in(\beta+\alpha)\Delta_n,\gamma_0\geq \alpha}  \inf_{\mathbf f \in\mathcal I_{cx}^n}  \sum_{i=1}^n \left\{ \mathrm{RVaR}_{\beta_i,\alpha_i}(X_i) -  \mathrm{RVaR}_{\beta_i,\alpha_i}\left(f_i(X_i)\right)+  \mathrm{RVaR}_{\gamma_i,\gamma_0}(f_i(X_i))\right\}. 
\end{aligned}
\end{equation}
By a proof similar to Theorem \ref{th:RVaR} (i), we have $$\begin{aligned}\inf_{\f\in\mathcal{I}_{cx}^n}V(\mathbf f)\leq \inf_{(\a,\b,\c)\in\mathcal{A}_2} \inf_{\pmb{\gamma}\in(\beta+\alpha)\Delta_n,\gamma_0\geq \alpha}R(\a,\b,\c,\pmb{\gamma}) .\end{aligned}$$
 
Let $ \bar \gamma_0= 1-\gamma_0-\gamma_i$ and $\bar \gamma_i =1-\gamma_i.$  We need to show the inverse inequality for \eqref{eq:opt2} under the following six cases (see Figure \ref{fig: RVaR}). 
\begin{figure}[htb!]
\centering
 \includegraphics[width=13cm]{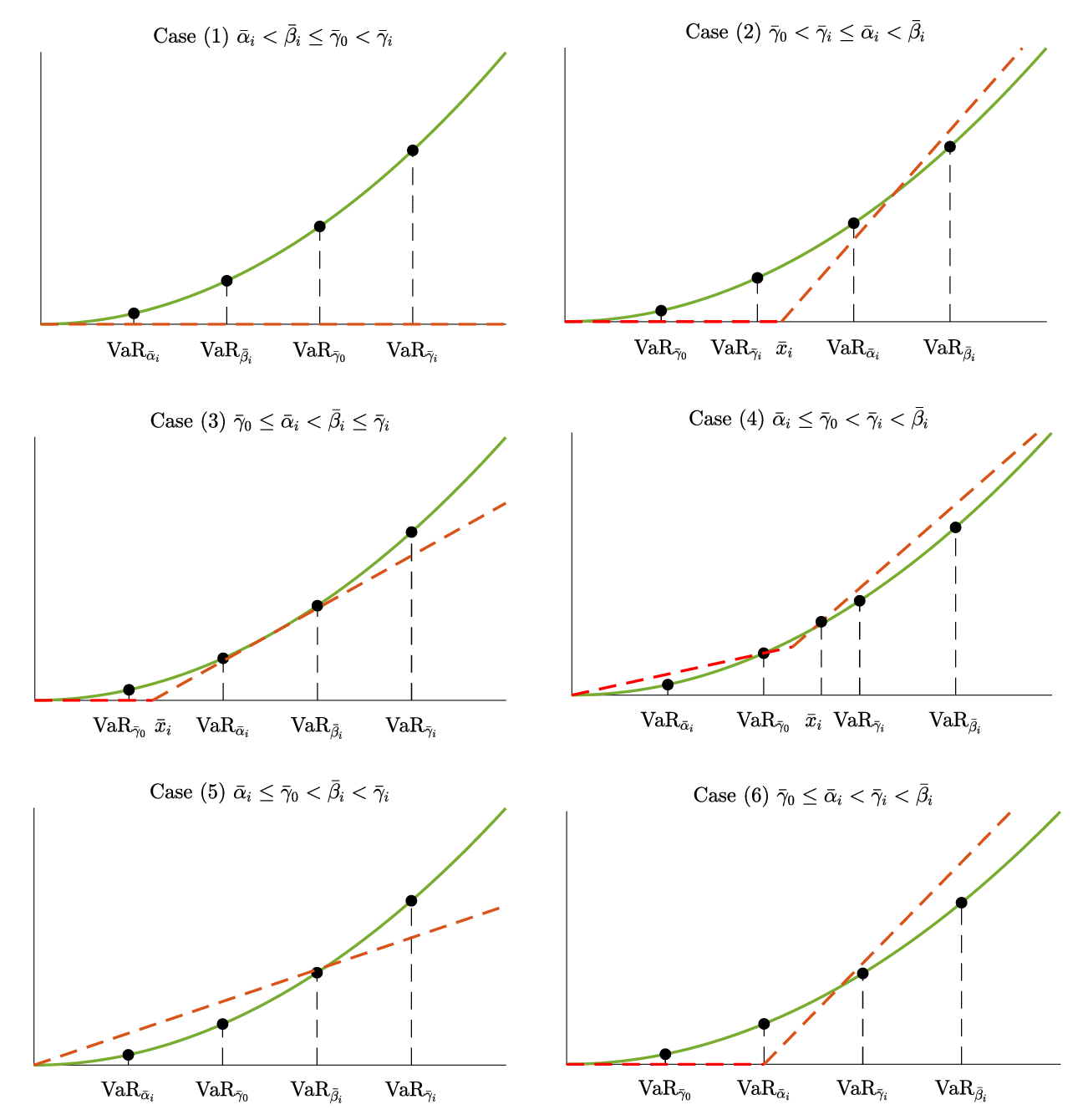}
 \captionsetup{font=small}
\caption{\small Six parameter orderings and corresponding    $r_{a_i,b_i,c_i}$,  each outperforming convex indemnities.}
 \label{fig: RVaR}
\end{figure}

{\bf Case (1): $\bar \alpha_i <\bar \beta_i \leq \bar \gamma_0<\bar\gamma_i$.} 
In this case, we have $\mathrm{RVaR}_{\beta_i,\alpha_i}\left(f_i(X_i)\right)\leq   \mathrm{RVaR}_{\gamma_i,\gamma_0}(f_i(X_i)) $  for any $f_i\in\mathcal  I_{cx}.$ Thus we have  $\mathrm{RVaR}_{\gamma_i,\gamma_0}(f_i(X_i))-\mathrm{RVaR}_{\beta_i,\alpha_i}\left(f_i(X_i)\right) \geq 0$, which implies that   $r_{0,0,0}\in \mathcal I_{cx}$ always   performs better than $f_i.$

 {\bf Case (2): $\bar \gamma_0 < \bar \gamma_i  \leq \bar \alpha_i   <\bar \beta_i $.}  
  Define $a_i= 0$, $b_i=\bar x_i$ and $c_i=1,$  where $\VaR_{\bar \alpha_i}(X_i)-f_i(\VaR_{\bar \alpha_i}(X_i)) \leq \bar x_i \leq \VaR_{\bar \beta_i}(X_i),$  such that  $\mathrm{RVaR}_{\beta_i,  \alpha_i}  (r_{a_i,b_i, c_i}(X_i))=\mathrm{RVaR}_{\beta_i,  \alpha_i}(f_i(X_i)).$ The existence of such $\bar x_i$ is guaranteed by the continuity of $\mathrm{RVaR}_{\beta_i,  \alpha_i} (r_{a_i,t, c_i}(X_i))$ for $t\in [\VaR_{\bar \alpha_i}(X_i)-f_i(\VaR_{\bar \alpha_i}(X_i)), \VaR_{\bar \beta_i}(X_i)]$ and the fact that $$\mathrm{RVaR}_{\beta_i,  \alpha_i} (r_{a_i,\VaR_{\bar \beta_i}(X_i), c_i}(X_i))\leq \mathrm{RVaR}_{\beta_i,  \alpha_i}(f_i(X_i))\leq \mathrm{RVaR}_{\beta_i,  \alpha_i} (r_{a_i,\VaR_{\bar \alpha_i}(X_i)-f_i(\VaR_{\bar \alpha_i}(X_i)), c_i}(X_i)).$$ 
The convexity of $f_i$ implies that $r_{a_i,b_i, c_i}(x)\leq f_i(x)$ for $x\leq \bar x.$ Hence, we have  $ \mathrm{RVaR}_{\gamma_i,  \gamma_0}(r_{a_i,b_i, c_i}(X_i))  \leq \mathrm{RVaR}_{\gamma_i,  \gamma_0}(f_i(X_i)),$ which implies $\mathrm{RVaR}_{\gamma_i,\gamma_0}(f_i(X_i))-\mathrm{RVaR}_{\beta_i,\alpha_i}\left(f_i(X_i)\right)\geq\mathrm{RVaR}_{\gamma_i,\gamma_0}(r_{a_i,b_i, c_i}(X_i))-\mathrm{RVaR}_{\beta_i,\alpha_i}\left(r_{a_i,b_i, c_i}(X_i)\right).$ 

 {\bf Case (3):  $  \bar \gamma_0 \leq \bar \alpha_i    <\bar \beta_i\leq \bar \gamma_i$.} 
 Let $$k_{1i}=\frac{f_i(\VaR_{\bar \beta_i}(X_i))-f_i(\VaR_{\bar \alpha_i} (X_i))}{\VaR_{\bar \beta_i} (X_i)-\VaR_{\bar \alpha_i} (X_i) },$$ with the convention that $\frac{0}{0}=0.$
For any $f\in \mathcal I_{cx},$ let $ a_i= 0$ ,  $b_i=\bar x_i$ and $c_i=k_{1i},$ where  $$\VaR_{\bar \alpha_i}(X_i)-\frac{1}{k_{1i}}f_i(\VaR_{\bar \alpha_i} (X_i))) \leq  \bar x_i \leq  \VaR_{\bar \beta_i}(X_i),$$     such that 
$\mathrm{RVaR}_{\beta_i,  \alpha_i} (r_{a_i,b_i,c_i}(X_i))) =\mathrm{RVaR}_{\beta_i,  \alpha_i}(f_i(X_i)).$ 
 The existence of such $\bar x_i$ is guaranteed by the continuity of $\mathrm{RVaR}_{\beta_i,  \alpha_i} (r_{a_i,t, c_i}(X_i))$ for $t\in [\VaR_{\bar \alpha_i}(X_i)-\frac{1}{k_{1i}}f_i(\VaR_{\bar \alpha_i} (X_i))) ,\VaR_{\bar \beta_i}(X_i)],$
 and the fact that $$\mathrm{RVaR}_{\beta_i,  \alpha_i} (r_{a_i, \VaR_{\bar \beta_i}(X_i), c_i}(X_i))\leq \mathrm{RVaR}_{\beta_i,  \alpha_i}(f_i(X_i))\leq \mathrm{RVaR}_{\beta_i,  \alpha_i} (r_{a_i,\VaR_{\bar \alpha_i}(X_i)-\frac{1}{k_{1i}}f_i(\VaR_{\bar \alpha_i} (X_i))),c_i}(X_i)).$$ 
By the convexity of $f_i,$ we have $r_{a_i,b_i,c_i}(x)\leq f_i(x)$ for $x\leq \VaR_{\bar \alpha_i}$ and $x\geq \VaR_{\bar \beta_i}.$ So, we have
$$\begin{aligned}\mathrm{RVaR}_{\gamma_i,  \gamma_0} (r_{a_i,b_i,c_i}(X_i)) &=\frac 1\gamma_0  \left(\int_{[\bar\gamma_0,\bar\alpha_i]\cup [\bar\beta_i, \bar \gamma_i]}r_{a_i,b_i,c_i}(\VaR_t(X_i))\mathrm d t+\alpha_i\mathrm{RVaR}_{\beta_i,  \alpha_i} (r_{a_i,b_i,c_i}(X_i))\right)&\\ 
&\leq \frac 1 \gamma_0\left(\int_{[\bar\alpha,\bar\alpha_i]\cup [\bar\beta_i, 1]}f_i(\VaR_t(X_i))\mathrm d t+\alpha_i\mathrm{RVaR}_{\beta_i,  \alpha_i} (f_i(X_i))\right)\\
&=\mathrm{RVaR}_{\gamma_i,  \gamma_0}   (f_i(X)).\end{aligned}$$   This implies  that we can always find $r_{a_i,b_i,c_i}\in \mathcal I_{cx}$   that is better than $f_i.$ 

  {\bf Case (4):  $\bar \alpha_i \leq \bar \gamma_0 < \bar \gamma_i<\bar \beta_i$.} 
  Let $ k_{2i}=  (f_i)_-'(\VaR_{\bar \beta_i} (X_i)),$ where $(f_i)_-'(x)$ is the left derivative of $f_i$ at $x$. Define  $ a_i= \frac{f_i(\VaR_{\bar\gamma_0}(X_i))}{\VaR_{\bar\gamma_0}(X_i)},$  $b_i=\bar x_i$ and    
  $c_i= k_{2i}-a_i,$   where $$ \VaR_{\bar \gamma_0}(X_i)\leq\bar x_i\leq\VaR_{\bar\gamma_i}(X_i)$$ such that 
  $ \mathrm{RVaR}_{\gamma_i,  \gamma_0} ( r_{a_i,b_i,c_i}(X_i))=\mathrm{RVaR}_{\gamma_i,  \gamma_0} ( f_i(X_i)).$ 
 The existence of such $\bar x_i$ is guaranteed by the continuity of $\mathrm{RVaR}_{\gamma_i,  \gamma_0} (r_{a_i,t, c_i}(X_i))$ for $t\in [\VaR_{\bar \gamma_0}(X_i) ,\VaR_{\bar\gamma_i}(X_i)]$ 
 and the fact that $$\mathrm{RVaR}_{\gamma_i,  \gamma_0} (r_{a_i, \VaR_{\bar\gamma_i}(X_i), c_i}(X_i))\leq \mathrm{RVaR}_{\gamma_i,  \gamma_0}(f_i(X_i))\leq \mathrm{RVaR}_{\gamma_i,  \gamma_0} (r_{a_i,\VaR_{\bar\gamma_0}(X_i),c_i}(X_i)).$$  
 The convexity of  $f_i$ implies that $r_{a_i,b_i,c_i}(x)\geq f_i(x)$ for $\VaR_{\bar \alpha_i}(X_i)\leq x \leq \VaR_{\bar \gamma_0}(X_i)$ and $\VaR_{\bar \gamma_i}(X_i)\leq x \leq \VaR_{\bar \beta_i}(X_i).$ Hence, we have
$$\begin{aligned}\mathrm{RVaR}_{\beta_i,  \alpha_i} (r_{a_i,b_i,c_i}(X_i)) &=\frac{1}{\alpha_i}  \left(\int_{[\bar\alpha_i,\bar\gamma_0]\cup [ \bar \gamma_i, \bar\beta_i]}r_{a_i,b_i,c_i}(\VaR_t(X_i))\mathrm d t+\gamma_0\mathrm{RVaR}_{\gamma_i,  \gamma_0} (r_{a_i,b_i,c_i}(X_i))\right)\\ 
&\geq  \frac{1}{\alpha_i} \left(\int_{[\bar\alpha_i,\bar\gamma_0]\cup [ \bar \gamma_i, \bar\beta_i]} f_i(\VaR_t(X_i))\mathrm d t+\gamma_0\mathrm{RVaR}_{\gamma_i,  \gamma_0} (f_i(X_i))\right)\\
&=\mathrm{RVaR}_{\beta_i,  \alpha_i}   (f_i(X_i)).\end{aligned}$$
Consequently,  $\mathrm{RVaR}_{\gamma_i,\gamma_0}(f_i(X_i))-\mathrm{RVaR}_{\beta_i,\alpha_i}\left(f_i(X_i)\right)\geq\mathrm{RVaR}_{\gamma_i,\gamma_0}(r_{a_i,b_i, c_i}(X_i))-\mathrm{RVaR}_{\beta_i,\alpha_i}\left(r_{a_i,b_i, c_i}(X_i)\right).$
 
 {\bf Case (5): $\bar \alpha_i \leq \bar \gamma_0 <\bar \beta_i  <\bar\gamma_i$.} 
Define  $a_i=\frac{f_i(\bar x_i)}{\bar x_i}$ and $b_i=c_i=0,$  where $\VaR_{\bar \gamma_0}(X_i) \leq \bar x_i \leq  \VaR_{\bar \gamma_i}(X_i),$  such that  $ \mathrm{RVaR}_{\gamma_i,  \gamma_0}(r_{a_i,b_i,c_i}(X_i))=\mathrm{RVaR}_{\gamma_i,  \gamma_0}(f_i(X_i)). $ 
 The existence of such $\bar x_i$ is guaranteed by the continuity of $\mathrm{RVaR}_{\gamma_i,  \gamma_0} (r_{t,b_i, c_i}(X_i))$ for $t\in [\frac{f_i(\VaR_{\bar \gamma_0}(X_i))}{\VaR_{\bar \gamma_0}(X_i)} , \frac{f_i(\VaR_{\bar \gamma_i}(X_i))}{\VaR_{\bar \gamma_i}(X_i)}],$
 and the fact that $$\mathrm{RVaR}_{\gamma_i,  \gamma_0} (r_{f_i(\VaR_{\bar \gamma_0}(X_i))/\VaR_{\bar \gamma_0}(X_i), b_i,c_i}(X_i))\leq \mathrm{RVaR}_{\gamma_i,  \gamma_0}(f_i(X_i))\leq \mathrm{RVaR}_{\gamma_i,  \gamma_0} (r_{f_i(\VaR_{\bar \gamma_i}(X_i))/\VaR_{\bar \gamma_i}(X_i),b_i,c_i}(X_i)).$$  
Note that $ \mathrm{RVaR}_{\gamma_i,  \gamma_0}(r_{a_i,b_i,c_i}(X_i))=\mathrm{RVaR}_{\gamma_i,  \gamma_0}(f_i(X_i))$ implies  $$\int_{[\bar \gamma_0,\bar\beta_i]} r_{a_i,b_i,c_i}(\VaR_t(X_i))\mathrm d t\geq\int_{[\bar \gamma_0,\bar\beta_i]} f_i(\VaR_t(X_i))\mathrm d t.$$
By the convexity of $f_i,$ we have $r_{a_i,b_i,c_i}(x)\geq f_i(x)$ for  $\VaR_{\bar \alpha_i}(X_i)\leq x \leq \VaR_{\bar \gamma_0}(X_i).$ Hence, we have
$$\begin{aligned}&\mathrm{RVaR}_{\beta_i,  \alpha_i} (r_{a_i,b_i,c_i}(X_i))\\ &=\frac{1}{\alpha_i}  \left(\int_{[\bar\alpha_i,\bar \gamma_0]}r_{a_i,b_i,c_i}(\VaR_t(X_i))\mathrm d t+\int_{[\bar \gamma_0,\bar\beta_i]}r_{a_i,b_i,c_i}(\VaR_t(X_i))\mathrm d t\right)\\
&\geq  \frac{1}{\alpha_i}  \left(\int_{[\bar\alpha_i,\bar \gamma_0]}f_i(\VaR_t(X_i))\mathrm d t+\int_{[\bar \gamma_0,\bar\beta_i]}f_i(\VaR_t(X_i))\mathrm d t\right)\\
&=\mathrm{RVaR}_{\beta_i,  \alpha_i}   (f_i(X)),\end{aligned}$$   implying  that we can always find $r_{a_i,b_i,c_i}\in \mathcal I_{cx}$   that is better than $f_i.$

{\bf Case (6): $  \bar \gamma_0 \leq \bar \alpha_i  < \bar \gamma_i <\bar \beta_i$.}  
  Define $a_i=0$, $b_i=\bar x_i$ and  $c_i=1,$  where $\VaR_{\bar \alpha_i}(X_i)-f_i(\VaR_{\bar \alpha_i}(X_i)) \leq \bar x_i \leq \VaR_{\bar \beta_i}(X_i),$  such that  $\mathrm{RVaR}_{\beta_i,  \alpha_i}  (r_{a_i,b_i, c_i}(X_i))=\mathrm{RVaR}_{\beta_i,  \alpha_i}(f_i(X_i)).$ The existence of such $\bar x_i$ is guaranteed by the continuity of $\mathrm{RVaR}_{\beta_i,  \alpha_i} (r_{a_i,t, c_i}(X_i))$ for $t\in [\VaR_{\bar \alpha_i}(X_i)-f_i(\VaR_{\bar \alpha_i}(X_i)),\VaR_{\bar \beta_i}(X_i)]$ and the fact that $$\mathrm{RVaR}_{\beta_i,  \alpha_i} (r_{a_i,\VaR_{\bar \beta_i}(X_i), c_i}(X_i))\leq \mathrm{RVaR}_{\beta_i,  \alpha_i}(f_i(X_i))\leq \mathrm{RVaR}_{\beta_i,  \alpha_i} (r_{a_i,\VaR_{\bar \alpha_i}(X_i)-f_i(\VaR_{\bar \alpha_i}(X_i)), c_i}(X_i)).$$ 
  Note that $\mathrm{RVaR}_{\beta_i,  \alpha_i}  (r_{a_i,b_i, c_i}(X_i))=\mathrm{RVaR}_{\beta_i,  \alpha_i}(f_i(X_i))$ implies $$\int_{[\bar \alpha_i,\bar\gamma_i]} r_{a_i,b_i,c_i}(\VaR_t(X_i))\mathrm d t\leq\int_{[\bar \alpha_i,\bar\gamma_i]} f_i(\VaR_t(X_i))\mathrm d t.$$
By the convexity of $f_i$, we have $r_{a_i,b_i,c_i}(x)\leq f_i(x)$ for  $\VaR_{\bar \gamma_0}(X_i)\leq x \leq \VaR_{\bar \alpha_i}(X_i).$ Hence, we have
$$\begin{aligned}
&\mathrm{RVaR}_{\gamma_i,  \gamma_0} (r_{a_i,b_i,c_i}(X_i))\\ &=\frac{1}{\gamma_0}  \left(\int_{[\bar \gamma_0,\bar\alpha_i]}r_{a_i,b_i,c_i}(\VaR_t(X_i))\mathrm d t+\int_{[\bar \alpha_i,\bar\gamma_i]}r_{a_i,b_i,c_i}(\VaR_t(X_i))\mathrm d t\right)\\
&\leq  \frac{1}{\gamma_0}  \left(\int_{[\bar \gamma_0,\bar\alpha_i]}f_i(\VaR_t(X_i))\mathrm d t+\int_{[\bar \alpha_i,\bar\gamma_i]}f_i(\VaR_t(X_i))\mathrm d t\right)\\
&=\mathrm{RVaR}_{\gamma_i,  \gamma_0}   (f_i(X)).
\end{aligned}$$  
Consequently, we have  $$\mathrm{RVaR}_{\gamma_i,\gamma_0}(f_i(X_i))-\mathrm{RVaR}_{\beta_i,\alpha_i}\left(f_i(X_i)\right)\geq\mathrm{RVaR}_{\gamma_i,\gamma_0}(r_{a_i,b_i, c_i}(X_i))-\mathrm{RVaR}_{\beta_i,\alpha_i}\left(r_{a_i,b_i, c_i}(X_i)\right).$$
To summarize the six cases, we have 
$$\begin{aligned}\inf_{\f\in\mathcal{I}_{cx}^n}V(\mathbf f)&\geq 
\inf_{\pmb{\gamma}\in(\beta+\alpha)\Delta_n,\gamma_0\geq \alpha}\inf_{(\a,\b,\c)\in\mathcal{A}_2} R(\a,\b,\c,\pmb{\gamma})\\&= \inf_{(\a,\b,\c)\in\mathcal{A}_2} \inf_{\pmb{\gamma}\in(\beta+\alpha)\Delta_n,\gamma_0\geq \alpha}R(\a,\b,\c,\pmb{\gamma}) .\end{aligned}$$
  We obtain the desired result.
\end{proof}
\begin{proof}[Proof of Proposition \ref{exist}] It is evident that ${\bf g}_{\a^*,\b^*}$ and  ${\bf r}_{\a^*,\b^*,\c^*}$ are the optimal ceded loss functions for cases (i)-(ii), respectively.   The existence of $(\a^*,\b^*)\in \mathcal A_1$ and $(\a^*,\b^*,\c^*)\in \mathcal A_2$ follows from the continuity of the functions $G$ and $R$. Specifically, the existence of $(\a^*,\b^*)$ is guaranteed by the continuity of $\mathrm{RVaR}$ and $\mathrm{ES}$, while the existence of $(\a^*,\b^*,\c^*)$ needs an additional argument because $\mathrm{RVaR}_{\gamma_i,\gamma_0}(X_i)$ may lose continuity as $\gamma_0\downarrow 0$.%


To be specific, for $0\leq \epsilon\leq T\leq 1,$ define $$\bar\Delta_n^{\epsilon,T}=\left\{(\gamma_0,\gamma_1,\ldots,\gamma_n)\in[\epsilon,T]\times[0,T]^n:\sum^n_{i=0}\gamma_i=T\right\}.$$
Clearly, for $0<\epsilon<\beta+\alpha$,  $\mathrm{RVaR}_{\gamma_i,\gamma_0}(X_i)$ is continuous with respect to $\pmb\gamma$ over $\bar\Delta_n^{\epsilon,\beta+\alpha}$, and $$\mathrm{RVaR}_{\gamma_i,\gamma_0}(r_{a_i,b_i,c_i}(X_i))=\frac{1}{\gamma_0}\int^{\gamma_i+\gamma_0}_{\gamma_i}r_{a_i,b_i,c_i}(F^{-1}_i(t))\mathrm dt,$$  which implies that $\mathrm{RVaR}_{\gamma_i,\gamma_0}(r_{a_i,b_i,c_i}(X_i))$ is continuous with respect to $(\a,\b,\c,\pmb\gamma)$ over $\mathcal A_2\times\bar\Delta_n^{\epsilon,\beta+\alpha}$ for $\epsilon>0.$ So, by the continuity of $\mathrm{RVaR},$ $R({\bf a}, {\bf b},\c, \pmb\gamma)$ is a continuous function of $({\bf a}, {\bf b},\c, \pmb\gamma)$ over $\mathcal A_2\times\bar\Delta_n^{\epsilon,\beta+\alpha}$ for $\epsilon>0$ and $\mathcal A_2$ is a closed set. Therefore, there exists $(\a^*_\epsilon,\b^*_\epsilon,\c^*_\epsilon,\pmb\gamma^*_\epsilon)\in\mathcal A_2\times\bar\Delta_n^{\epsilon,\beta+\alpha}$ such that $$(\a^*_\epsilon,\b^*_\epsilon,\c^*_\epsilon,\pmb\gamma^*_\epsilon)\in \arg\inf_{(\a,\b,\c,\pmb\gamma)\in\mathcal{A}_2\times\bar\Delta_n^{\epsilon,\beta+\alpha}}
R({\bf a}, {\bf b},\c,\pmb\gamma).$$
Note that $$\begin{aligned}\lim\limits_{\gamma_0\downarrow0}\mathrm{RVaR}_{\gamma_i,\gamma_0}(r_{a_i,b_i,c_i}(X_i))&=\lim\limits_{\gamma_0\downarrow0}\frac{1}{\gamma_0}\int^{\gamma_i+\gamma_0}_{\gamma_i}\VaR_{1-t}(r_{a_i,b_i,c_i}(X_i))\mathrm dt\\&=\VaR_{1-\gamma_i}(r_{a_i,b_i,c_i}(X_i)).   
\end{aligned}$$
We define $\mathrm{RVaR}_{\gamma_i,0}(r_{a_i,b_i,c_i}(X_i))=\VaR_{1-\gamma_i}(r_{a_i,b_i,c_i}(X_i)).$ Also, $\mathrm{RVaR}_{0,0}(r_{a_i,b_i,c_i}(X_i))=\infty$ for $a_i+c_i>0$ and $\operatorname{ess\,sup}X_i=\infty.$  With this situation, $\mathrm{RVaR}_{\gamma_i,\gamma_0}(r_{a_i,b_i,c_i}(X_i))$ is continuous with respect to $(\a,\b,\c,\pmb\gamma)$ over $\mathcal A_2\times\bar\Delta_n^{0,\beta+\alpha}.$ Then $R({\bf a}, {\bf b},\c,\pmb\gamma)$ is a continuous function of $(\a,\b,\c,\pmb\gamma)$ over $\mathcal A_2\times\bar\Delta_n^{0,\beta+\alpha}.$ According to the fact that $\mathcal A_2\times\bar\Delta_n^{0,\beta+\alpha}$ is a closed set,  there exists $$(\a^*,\b^*,\c^*)=\arg\inf_{(\a,\b,\c)\in\mathcal{A}_2}
  \left\{\inf_{\pmb{\gamma}\in(\beta+\alpha)\Delta_n,\gamma_0\geq \alpha}R({\bf a}, {\bf b},\c, \pmb\gamma)\right\}.$$
  We complete the proof. 
\end{proof}

\section{Proofs of Section \ref{sec:4}}
 \begin{proof}[Proof of Theorem \ref{thvar}] 
(i) In light of Lemma \ref{lem:1}, we have for $n=2$,
\begin{align*}&\inf_{\f\in\mathcal{I}^n}V(\mathbf f)\\
&=\inf_{\f\in\mathcal{I}^n}\inf_{\pmb{\gamma} \in(1-\alpha) \Delta_n} \sum_{i=1}^n \left(\mathrm{VaR}_{\alpha_i}\left(X_i\right)-f_i(\mathrm{VaR}_{\alpha_i}\left(X_i\right))+\RVaR_{\gamma_i, \gamma_0}\left(f_i(X_i)\right)\right)\\
&=\inf_{\pmb{\gamma} \in(1-\alpha) \Delta_n} \inf_{\f\in\mathcal{I}^n}\sum_{i=1}^n \left(\mathrm{VaR}_{\alpha_i}\left(X_i\right)-f_i(\mathrm{VaR}_{\alpha_i}\left(X_i\right))+\RVaR_{\gamma_i, \gamma_0}\left(f_i(X_i)\right)\right).
\end{align*}
Let $a_i=\VaR_{\alpha_i}(X_i)-f_i(\VaR_{\alpha_i}(X_i))$ and $b_i=\VaR_{\alpha_i}(X_i).$ Then it follows that $f_i(\mathrm{VaR}_{\alpha_i}\left(X_i\right))=g_{a_i,b_i}(\mathrm{VaR}_{\alpha_i}\left(X_i\right))$ and $f_i(X_i)\geq g_{a_i,b_i}(X_i),$ which implies
$$\RVaR_{\gamma_i, \gamma_0}\left(f_i(X_i)\right)-f_i(\mathrm{VaR}_{\alpha_i}\left(X_i\right))\geq \RVaR_{\gamma_i, \gamma_0}\left(g_{a_i,b_i}(X_i)\right)-g_{a_i,b_i}(\mathrm{VaR}_{\alpha_i}\left(X_i\right));$$  see Figure \ref{fig:VaR} (i).
Hence, we have 
$$\inf_{\f\in\mathcal{I}^n}V(\mathbf f)\geq 
  \inf_{\pmb{\gamma}\in(1-\alpha)\Delta_n} \inf_{(\a,\b)\in\mathcal A_1 }\overline G(\a, \b, {\pmb \gamma})=\inf_{(\a,\b)\in\mathcal A_1 }
  \inf_{\pmb{\gamma}\in(1-\alpha)\Delta_n} \overline G(\a, \b, {\pmb \gamma}).$$
  The inverse inequality is trivial, similar to the proof of Theorem \ref{th:RVaR} (i). We obtain the desired result.

  (ii) Let $f_i \in \mathcal{I}_{cv}^n$ and consider $X_i$ with $F_{X_i} \in \mathcal{M}_{cx}^{1-\beta-\alpha}$.  
To show that $F_{f_i(X_i)} \in \mathcal{M}_{cx}^{1-\beta-\alpha}$, we can follow the same method as in Theorem \ref{th:RVaR}, which establishes that applying a convex ceded loss function to a variable with a concave-tail distribution preserves the concavity in the tail.  By Lemma \ref{lem:1}, we have 
\begin{align*}&\inf_{\f\in\mathcal{I}_{cv}^n}V(\mathbf f)\\
&=\inf_{\f\in\mathcal{I}_{cv}^n}\inf_{\pmb{\gamma} \in(1-\alpha) \Delta_n} \sum_{i=1}^n \left(\mathrm{VaR}_{\alpha_i}\left(X_i\right)-f_i(\mathrm{VaR}_{\alpha_i}\left(X_i\right))+\RVaR_{\gamma_i, \gamma_0}\left(f_i(X_i)\right)\right)\\
&=\inf_{\pmb{\gamma} \in(1-\alpha) \Delta_n} \inf_{\f\in\mathcal{I}_{cv}^n}\sum_{i=1}^n \left(\mathrm{VaR}_{\alpha_i}\left(X_i\right)-f_i(\mathrm{VaR}_{\alpha_i}\left(X_i\right))+\RVaR_{\gamma_i, \gamma_0}\left(f_i(X_i)\right)\right).
\end{align*}
Let  $a_i=\frac{f_i(\VaR_{\alpha_i}(X_i))}{\VaR_{\alpha_i}(X_i)}$ and $b_i=\VaR_{\alpha_i}(X_i).$ Then we have$$\RVaR_{\gamma_i, \gamma_0}\left(f_i(X_i)\right)-f_i(\mathrm{VaR}_{\alpha_i}\left(X_i\right))\geq \RVaR_{\gamma_i, \gamma_0}\left(l_{a_i,b_i}(X_i)\right)-l_{a_i,b_i}(\mathrm{VaR}_{\alpha_i}\left(X_i\right));$$ see Figure \ref{fig:VaR} (ii).
The rest of the proof is the same as that of (i). Hence, it is omitted. 

(iii)  As already verified in Theorem \ref{th:RVaR}, one can  check that if $X_i \sim F_i \in \mathcal{M}_{cv}^{\alpha}$ and $f_i \in \mathcal{I}_{cx}$, then the cumulative distribution function of $f_i(X_i)$ is concave beyond its $\alpha$-quantile.
 Hence, by Lemma \ref{lem:1}, we have 
\begin{align*}&\inf_{\f\in\mathcal{I}_{cx}^n}V(\mathbf f)\\
&=\inf_{\f\in\mathcal{I}_{cx}^n}\inf_{\pmb{\gamma} \in(1-\beta) \Delta_n} \sum_{i=1}^n \left(\mathrm{VaR}_{\alpha_i}\left(X_i\right)-f_i(\mathrm{VaR}_{\alpha_i}\left(X_i\right))+\RVaR_{\gamma_i, \gamma_0}\left(f_i(X_i)\right)\right)\\
&=\inf_{\pmb{\gamma} \in(1-\beta) \Delta_n} \inf_{\f\in\mathcal{I}_{cx}^n}\sum_{i=1}^n \left(\mathrm{VaR}_{\alpha_i}\left(X_i\right)-f_i(\mathrm{VaR}_{\alpha_i}\left(X_i\right))+\RVaR_{\gamma_i, \gamma_0}\left(f_i(X_i)\right)\right).
\end{align*}
Let $a_i=(f_i)_+'(\VaR_{\alpha_i}(X_i))$ and $b_i=\VaR_{\alpha_i}(X_i)-\frac{f_i(\VaR_{\alpha_i}(X_i))}{a_i},$ where $(f_i)_+'(x)$ is the right derivative of $f_i$ at $x$.  Then it follows that $f_i(\mathrm{VaR}_{\alpha_i}\left(X_i\right))=h_{a_i,b_i}(\mathrm{VaR}_{\alpha_i}\left(X_i\right))$ and $f_i(X_i)\geq h_{a_i,b_i}(X_i),$ which implies
$$\RVaR_{\gamma_i, \gamma_0}\left(f_i(X_i)\right)-f_i(\mathrm{VaR}_{\alpha_i}\left(X_i\right))\geq \RVaR_{\gamma_i, \gamma_0}\left(h_{a_i,b_i}(X_i)\right)-h_{a_i,b_i}(\mathrm{VaR}_{\alpha_i}\left(X_i\right));$$
 see Figure \ref{fig:VaR}  (iii).
The rest of  the proof is exactly the same as that of (i). The details are omitted. 
\end{proof}
\begin{figure}[h]
\centering
 \includegraphics[width=16cm]{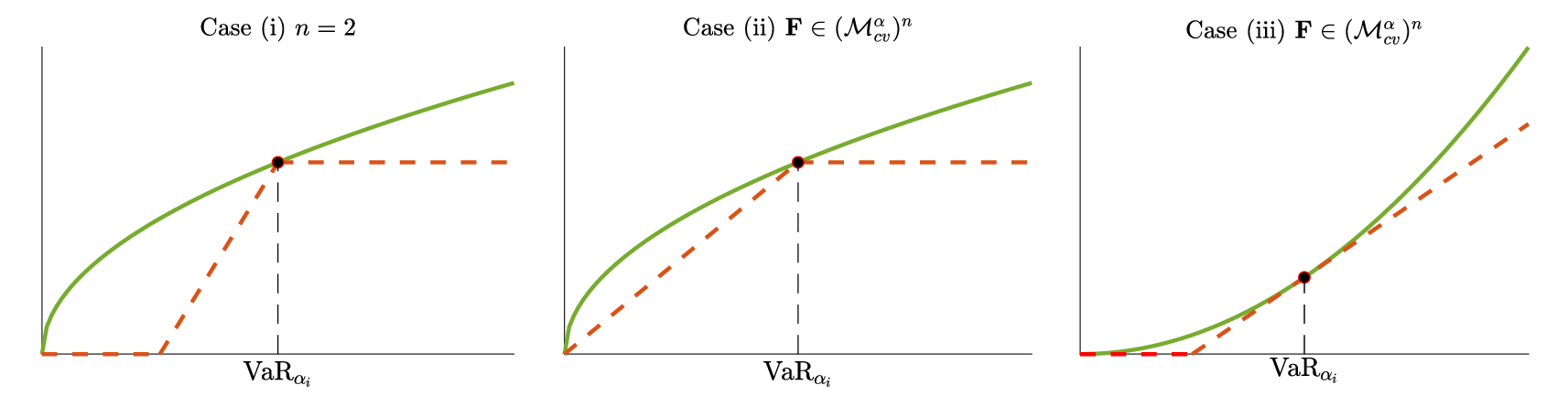}
 \captionsetup{font=small}
\caption{ \small 
(i) $g_{a_i,b_i}$ for the general case with $n=2$;  
(ii) $l_{a_i,b_i}$ for $\mathbf F \in (\mathcal M_{cx}^{\alpha})^n$;  
(iii) $h_{a_i,b_i}$ for $\mathbf F \in (\mathcal M_{cv}^{\alpha})^n$.
}\label{fig:VaR}
\end{figure}

 \begin{proof}[Proof of Proposition \ref{n2}]
By the result in \cite{M81} and the continuity of $F_1^{-1}$ and $F_2^{-1}$, we have
 $$\begin{aligned}
 V(\mathbf f)=&\VaR_{\alpha_1}(X_1)+\VaR_{\alpha_2}(X_2)- f(\VaR_{\alpha_1}(X_1))- f_2(\VaR_{\alpha_2}(X_2))\\&+\inf_{t\in[0,1-\alpha]}\left\{f_1(\VaR_{\alpha+t}(X_1))+ f_2(\VaR_{1-t}(X_2))\right\}.\end{aligned}$$
 Hence, \begin{align*}
 \inf_{(f_1,f_2)\in \mathcal{I}^2}V(\mathbf f)&=\VaR_{\alpha_1}(X_1)+\VaR_{\alpha_2}(X_2)+\inf_{t\in[0,1-\alpha]}\inf_{(f_1,f_2)\in \mathcal{I}^2}\left\{f_1(\VaR_{\alpha+t}(X_1))\right.\\&\left. + f_2(\VaR_{1-t}(X_2))-f_1(\VaR_{\alpha_1}(X_1))- f_2(\VaR_{\alpha_2}(X_2))\right\}.
 \end{align*}
Let $a_i=\VaR_{\alpha_i}(X_i)-f_i(\VaR_{\alpha_i}(X_i))$ and $b_i=\VaR_{\alpha_i}(X_i),~i=1,2.$
Using the above $a_i, b_i$, we have $f_i(\VaR_{\alpha_i}(X_i))=l_{a_i,b_i}(\VaR_{\alpha_i}(X_i))$ and $f_i(x)\geq g_{a_i,b_i}(x)$ for all $x\geq 0.$ Hence, we have  $$\begin{aligned}&f_1(\VaR_{\alpha+t}(X_1))-f_1(\VaR_{\alpha_1}(X_1))+f_2(\VaR_{1-t}(X_2))- f_2(\VaR_{\alpha_2}(X_2))\\\geq &g_{a_1,b_1}(\VaR_{\alpha+t}(X_1))- g_{a_1,b_1}(\VaR_{\alpha_1}(X_1))+g_{a_2,b_2}(\VaR_{1-t}(X_2))- g_{a_2,b_2}(\VaR_{\alpha_2}(X_2)),\end{aligned}$$
 which implies $$\inf_{(f_1,f_2)\in \mathcal{I}^2}V(\mathbf f)\geq \inf_{(a_1,a_2,b_1,b_2)\in \mathcal A}  \inf_{t\in[0,1-\alpha]} \overline G_1(a_1,a_2,b_1,b_2,t).$$
The inverse inequality holds trivially.  We complete the proof.
\end{proof}

 \section{Proofs of Section \ref{sec:5}}
 \begin{proof}[Proof of Proposition \ref{CLT}]
Since $\{X_i\}_{i=1}^n$ are independent random variables, and $f_i(X_i)$ depends only on $X_i$, the sequence $\{f_i(X_i)\}_{i=1}^n$ is also a sequence of independent random variables.
Since each $f_i$ is 1-Lipschitz continuous, and $\mathbb{E}[ X_i^{2}] < \infty$, we have $
   \mathbb{E}[f_i(X_i)^2] \leq \mathbb{E}[X_i^2] < \infty.$
   Hence,  $\mathrm{Var}(f_i(X_i)) = \mathbb{E}[f_i(X_i)^2] - (\mathbb{E}[f_i(X_i)])^2$ exists and is finite.
Define $$Z_n = \frac{S_n - \mathbb{E}[S_n]}{\sqrt{\mathrm{Var}(S_n)}}.$$ 
We aim to show that $Z_n \xrightarrow{d} \mathcal{N}(0, 1).$ Note that 
$\mathbb{E}[S_n] = \sum_{i=1}^n \mathbb{E}[f_i(X_i)],$ and 
$\mathrm{Var}(S_n) = \sum_{i=1}^n \mathrm{Var}(f_i(X_i)).$

To verify the Lindeberg condition \citep[see, e.g.,][Theorem~27.3]{Billingsley1995}, recall that for any $\epsilon>0$,
\[
\frac{1}{\mathrm{Var}(S_n)} \sum_{i=1}^n \mathbb{E}\Big[(f_i(X_i) - \mathbb{E}[f_i(X_i)])^2 \, \mathbb{I}_{\{|f_i(X_i) - \mathbb{E}[f_i(X_i)]| > \epsilon \sqrt{\mathrm{Var}(S_n)}\}}\Big] \to 0, \quad n\to\infty.
\]
Using the 1-Lipschitz property of $f_i$, we have
$
|f_i(X_i) - f_i(\mu)| \le |X_i - \mu|,
$
which implies
$$
|f_i(X_i) - \mathbb{E}[f_i(X_i)]| =|f_i(X_i) - f_i(\mu)+f_i(\mu)-\mathbb{E}[f_i(X_i)]|  \leq |X_i - \mu| + \mathbb{E}[|X_i - \mu|].$$
Define 
\[
A_i := \{|f_i(X_i) - \mathbb{E}[f_i(X_i)]| > \epsilon \sqrt{\mathrm{Var}(S_n)}\}, \quad 
B_i := \{|X_i - \mu| + \mathbb{E}[|X_i - \mu|] > \epsilon \sqrt{\mathrm{Var}(S_n)}\}.
\]
Then it is  clear that $A_i \subseteq B_i$. Note that
\[
\mathrm{Var}(S_n) = \sum_{i=1}^n \mathbb{E}[(f_i(X_i) - \mathbb{E}[f_i(X_i)])^2].
\]
Hence, we can bound the Lindeberg term as
\[
\begin{aligned}
& \frac{1}{\mathrm{Var}(S_n)} \sum_{i=1}^n \mathbb{E}\big[(f_i(X_i) - \mathbb{E}[f_i(X_i)])^2 \, \mathbb{I}_{A_i}\big] \\
&= \sum_{i=1}^n \frac{\mathbb{E}[(f_i(X_i) - \mathbb{E}[f_i(X_i)])^2]}{\mathrm{Var}(S_n)} \cdot 
\frac{\mathbb{E}[(f_i(X_i) - \mathbb{E}[f_i(X_i)])^2 \, \mathbb{I}_{A_i}]}{\mathbb{E}[(f_i(X_i) - \mathbb{E}[f_i(X_i)])^2]} \\
&\le \max_{i \in [n]} \frac{\mathbb{E}[(f_i(X_i) - \mathbb{E}[f_i(X_i)])^2 \, \mathbb{I}_{A_i}]}{\mathbb{E}[(f_i(X_i) - \mathbb{E}[f_i(X_i)])^2]}.
\end{aligned}
\]
It is easy to verify that
\[
\mathbb{E}[|X_i-\mu|] \le 2 \mathbb{E}[X_i] < \infty, \quad 
\mathbb{E}[|X_i-\mu|^2] \le 4 \mathbb{E}[X_i^2] < \infty.
\]
Hence,
\[
\mathbb{E}\big[(f_i(X_i) - \mathbb{E}[f_i(X_i)])^2\big] 
\le \mathbb{E}\big[(|X_i-\mu| + \mathbb{E}[|X_i-\mu|])^2\big] < \infty.
\]
If $f_i$ is a constant function, then $f_i(X_i)- \mathbb{E}[f_i(X_i)] \equiv 0$.  
The convention $0/0=0$  ensures   that this case does not cause any issues. Clearly,
\[
0 \le \frac{\mathbb{E}[(f_i(X_i) - \mathbb{E}[f_i(X_i)])^2 \, \mathbb{I}_{A_i}]}{\mathbb{E}[(f_i(X_i) - \mathbb{E}[f_i(X_i)])^2]} \le 1.
\]
Since $\mathrm{Var}(S_n) \to \infty$ as $n \to \infty$, 
  using Markov's inequality, we have
\[
\mathbb{P}\big(|X_i-\mu| + \mathbb{E}[|X_i-\mu|] > \epsilon \sqrt{\mathrm{Var}(S_n)}\big) 
\le \frac{2\mathbb{E}[|X_i-\mu|]}{\epsilon \sqrt{\mathrm{Var}(S_n)}} \to 0, \quad n \to \infty.
\]
Since $A_i \subseteq B_i$, it follows that $\mathbb{P}(A_i) \le \mathbb{P}(B_i) \to 0$. 
By the Monotone Convergence Theorem,
\[
\lim_{n\to\infty}\frac{\mathbb{E}[(f_i(X_i) - \mathbb{E}[f_i(X_i)])^2 \, \mathbb{I}_{A_i}]}{\mathbb{E}[(f_i(X_i) - \mathbb{E}[f_i(X_i)])^2]} 
= \frac{\mathbb{E}[\lim_{n\to\infty} (f_i(X_i) - \mathbb{E}[f_i(X_i)])^2 \, \mathbb{I}_{A_i}]}{\mathbb{E}[(f_i(X_i) - \mathbb{E}[f_i(X_i)])^2]} = 0.
\]
Thus, the Lindeberg condition holds for each $i\in[n]$. 
Since $\{f_i(X_i)\}_{i=1}^n$ are independent with finite variances, the Central Limit Theorem implies
\[
Z_n = \frac{S_n - \mathbb{E}[S_n]}{\sqrt{\mathrm{Var}(S_n)}} \xrightarrow{d} \mathcal{N}(0, 1).
\] We complete the proof.
\end{proof}
 \begin{proof}[Proof of Theorem \ref{RVaR_as}] 
Let   $M= \frac{1}{\alpha}\int_{\bar\alpha}^{\bar\beta} \Phi^{-1}(\gamma)\mathrm d\gamma.$ 
We prove the result in the following three steps:

\textbf{Step 1:}  For any $f_i \in \mathcal{I},$ let
$$
k_{f_i}(x) \triangleq \begin{cases}f_i(x), & 0 \leq x \leq \VaR_{\bar \alpha_i}(X), \\ \min \left\{x+f_i\left(\VaR_{\bar \alpha_i}(X)\right)-\VaR_{\bar \alpha_i}(X), b_i\right\}, & x>\VaR_{\bar \alpha_i}(X),\end{cases}
$$
where $b_i \geq f_i\left(\VaR_{\bar \alpha_i}(X)\right)$ is determined by $\ES_{\bar\alpha_i}(f_i(X))=\ES_{\bar\alpha_i} \left(k_{f_i}(X)\right).$ Note that
$$
f_i(x) \leq x-\VaR_{\bar \alpha_i}(X)+f_i\left(\VaR_\alpha(X)\right), \quad \forall x \geq \operatorname{VaR}_{\bar \alpha_i}(X).
$$
Then,
$$
\begin{aligned}
\ES_{\bar \alpha_i}(f_i(X)) & =\frac{1}{\alpha_i+\beta_i} \int_{\bar \alpha_i}^{1} \operatorname{VaR}_{s}(f_i(X)) \mathrm{d} s=\frac{1}{\alpha_i+\beta_i} \int_{\bar \alpha_i}^{1} f_i\left(\VaR_{s}(X)\right) \mathrm{d} s \\
&\leq \frac{1}{\alpha_i+\beta_i} \int_{\bar \alpha_i}^{1} \left(\operatorname{VaR}_{s}(X)-\operatorname{VaR}_{\bar \alpha_i}(X)+f\left(\VaR_{\bar \alpha_i}(X)\right)\right) \mathrm{d} s.
\end{aligned}
$$
Thus, there exists   $b_i\geq f_i\left(\VaR_\alpha(X)\right)$  such that
$$\begin{aligned}
\ES_{\bar\alpha_i}(f_i(X))&=  \frac{1}{\alpha_i+\beta_i} \int_{\bar\alpha_i}^{1}\min\{ \operatorname{VaR}_{s}(X)-\operatorname{VaR}_{\bar \alpha_i}(X)+f_i\left(\VaR_{\bar \alpha_i}(X)\right),b_i\} \mathrm{d} s\\& =\ES_{\bar\alpha_i} \left(k_{f_i}(X)\right).
\end{aligned}$$
In conjunction with 
$ \ES_{\bar\beta_i}(f_i(X)) \geq \ES_{\bar\beta_i}(k_{f_i}(X)),$ 
this yields
$
\mathrm{RVaR}_{\beta_i, \alpha_i}(f_i(X))
\leq
\mathrm{RVaR}_{\beta_i, \alpha_i}(k_{f_i}(X)).
$
 Next, we demonstrate $\mathbb{E}\left[k_{f_i}(X)\right]=\mathbb{E}[f_i(X)].$ Let  $U$ be uniformly distributed on $[0,1],$  then 
\begin{align*}
\mathbb{E}[f_i(X)] & =\mathbb{E}\left[\VaR_U(f_i(X))\right]=\mathbb{E}\left[f_i\left(\VaR_U(X)\right)\right] \\
& =\int_0^{\bar \alpha_i} f_i\left(\VaR_s(X)\right) \mathrm{d} s + \int_{\bar \alpha_i} ^{1 }f_i\left(\VaR_s(X)\right) \mathrm{d} s \\
& =\int_0^{\bar \alpha_i} k_{f_i}\left(\VaR_s(X)\right) \mathrm{d} s +(\alpha_i+\beta_i) \ES_{\bar \alpha_i} (f_i(X)) \\
& =\int_0^{\bar \alpha_i} k_{f_i}\left(\VaR_s(X)\right) \mathrm{d} s +(\alpha_i+\beta_i) \ES_{\bar \alpha_i} (k_{f_i}(X)) =\mathbb{E}\left[k_{f_i}(X)\right].
\end{align*}

Finally, by setting $t_0=b_i$ in Lemma \ref{lem:3}, let $G(x)=(x-\mathbb{E}\big[f_i(X_i)\big])^2,$ we have 
$\mathbb{P}(k_{f_i}(X)\leq t) \leq  \mathbb{P}( f_i(X)\leq t) ,$ for  $t< b_i$ and $\mathbb{P}(k_{f_i}(X)> t) \leq  \mathbb{P}( f_i(X)> t) $ for $ t \geq  b_i $, then $k_{f_i}(X) \leq_{c x} f_i(X).$  Then by  Lemma \ref{lem:3}, we know that  $\var( k_{f_i}(X))\leq  \var(f_i(X)).$

{\bf  Step 2:} Let  $0\leq c_i\leq f_i(\VaR_{\bar\alpha_i}(X))$ and $d_i= \VaR_{\bar\alpha_i}(X)- f_i(\VaR_{\bar\alpha_i}(X))+c_i.$ Define $$\widetilde k_{f_i}(x,c_i)= \left\{\begin{aligned} &k_{f_i}(x), &x>\VaR_{\bar \alpha_i}(X),\\ &x\wedge c_i+(x-d_i)_+,&x\leq \VaR_{\bar \alpha_i}(X). \end{aligned}\right.$$
It is straightforward to verify that $0=\widetilde k_{f_i}(x,0)\leq k_{f_i}(x)$ and $\widetilde k_{f_i}(x, f_i(\VaR_{\bar\alpha_i}(X)) )\geq k_{f_i}(x).$ Further, since $\widetilde k_{f_i}(x,c_i)$ is increasing and continuous with respect to $c_i,$ then there exists a $c_i^*\in[0, f_i(\VaR_{\bar \alpha_i}(X))]$ such that $\E[\widetilde k_{f_i}(X,c^*_i) ]= \E[k_{f_i}(X)].$  

Moreover, for $y< c_i^*,$ we have $\mathbb P(\widetilde k_{f_i}(X,c^*_i)\leq y) \leq \mathbb P(k_{f_i}(X)\leq y).$  Let $d_i^*= \VaR_{\bar\alpha_i}(X)-f_i(\VaR_{\bar\alpha_i}(X))+c^*_i,$ then  for $y\geq c^*_i,$  
$$\begin{aligned}
\mathbb{P}(\widetilde k_{f_i}(X,c^*_i)>y)= & \mathbb{P}(X \geq c^*_i, \widetilde k_{f_i}(X,c^*_i)>y) \\
= & \mathbb{P}(X>\VaR_{\bar\alpha_i}(X), k_{f_i}(X)>y) 
 +\mathbb{P}(c_i^* \leq X \leq \VaR_{\bar\alpha_i}(X), X-d^*_i+c^*_i>y) \\
\leq & \mathbb{P}(X>\VaR_{\bar\alpha_i}(X),k_ {f_i}(X)>y)  +\mathbb{P}\left(c_i^* \leq X \leq \VaR_{\bar\alpha_i}(X), k_{f_i}(X)>y\right) \\
= & \mathbb{P}(k_{f_i}(X)>y). 
\end{aligned}$$
Then by Lemma \ref{lem:3}, we know that $\var(\widetilde k_{f_i}(X,c^*_i))\leq  \var(k_{f_i}(X)).$ 

\textbf{Step 3:}  Define $$k^*_{f_i}(x,a_i)=(x-a_i)_+\wedge (\VaR_{\bar\alpha_i}(X)-f_i(\VaR_{\bar\alpha_i}(X))+b_i-a_i)$$ with $0\leq a_i\leq \VaR_{\bar\alpha_i}(X).$
Let $\theta_i\in[0,1],$ $\bar\theta_i=1-\theta_i$ and $\gamma_i\in[0,1],$ such that $$\VaR_{\theta_i}(X)=(\VaR_{\bar\alpha_i}(X)-f_i(\VaR_{\bar\alpha_i}(X))+b_i,$$ and $$\VaR_{\gamma_i}(X)=(\VaR_{\bar\alpha_i}(X)-f_i(\VaR_{\bar\alpha_i}(X)).$$
Then we have $$\begin{aligned}&\E[ k^*_{f_i}(\mathrm{RVaR}_{\beta_i, \alpha_i} (X),a_i)-k^*_{f_i}(X,a_i)]\\=~&  \mathrm{RVaR}_{\beta_i, \alpha_i} (X)-a_i- \E[(X-a_i)_+\wedge (\VaR_{\gamma_i}(X)+b_i-a_i)]\\=~&\E[(\mathrm{RVaR}_{\beta_i, \alpha_i} (X)-a_i)\wedge\max(\mathrm{RVaR}_{\beta_i, \alpha_i}(X)-X , \mathrm{RVaR}_{\beta_i, \alpha_i}(X)-\VaR_{\theta_i}(X))]\\:=~&g(a_i).\end{aligned}
$$ 
 It is clear that 
 $$\begin{aligned}
     g(\VaR_{\gamma_i}(X))=&\mathrm{RVaR}_{\beta_i, \alpha_i}(X)-\E[X \id_{\{\VaR_{\gamma_i}(X)\leq X\leq\VaR_{\theta_i}(X)\}}]-\gamma_i(\VaR_{\gamma_i}(X)-\bar\theta_i\VaR_{\theta_i}(X),
     \end{aligned}$$ and $$g(\VaR_{\bar \alpha_i}(X))=\mathrm{RVaR}_{\beta_i, \alpha_i}(X)-\bar\alpha_i\VaR_{\bar\alpha_i}(X)-\E[X \id_{\{\VaR_{\bar\alpha_i}(X)\leq X\leq\VaR_{\theta_i}(X)\}}]-\bar\theta_i\VaR_{\theta_i}(X).$$
   Also, we have 
 $$\begin{aligned}
     &\E[ \widetilde k_{f_i}(\mathrm{RVaR}_{\beta_i, \alpha_i} (X),c^*_i)-\widetilde k_{f_i}(X,c^*_i)]\\=~&\mathrm{RVaR}_{\beta_i, \alpha_i} (X)-\VaR_{\bar \alpha_i}(X)+f_i(\VaR_{\bar \alpha_i}(X))-\E[\widetilde k_{f_i}(X,c^*_i)]\\=~&\mathrm{RVaR}_{\beta_i, \alpha_i} (X)-\E[X+\VaR_{\bar \alpha_i}(X)-f_i(\VaR_{\bar \alpha_i}(X))\id_{\{X\in[0,c^*_i]\}}]\\&-\E[X\id _{\{X\in[d_i^*,\VaR_{\theta_i}(X)]\}}]-d_i^*(F_{X}(d_i^*)-F_{X}(c_i^*))-\bar\theta_i\VaR_{\theta_i}(X).
 \end{aligned}$$
Therefore, $$\begin{aligned}
     g(\VaR_{\bar\alpha_i}(X)-f_i(\VaR_{\bar\alpha_i}(X))\geq\E[ \widetilde k_{f_i}(\mathrm{RVaR}_{\beta_i, \alpha_i} (X),c^*_i)-\widetilde k_{f_i}(X,c^*_i)]\geq g(\VaR_{\bar \alpha_i}(X)).
 \end{aligned}$$
Together with the fact that  $g(a_i)$ is decreasing and continuous with respect to $ a_i,$ then there exist $ \VaR_{\bar\alpha_i}(X)-f_i(\VaR_{\bar\alpha_i}(X))\leq  a^*_i \leq \VaR_{\bar\alpha_i}(X)$  such that  $$\E[k^*_{f_i}(\mathrm{RVaR}_{\beta_i, \alpha_i} (X),a^*_i)-k^*_{f_i}(X,a^*_i)]=\E[ \widetilde k_{f_i}(\mathrm{RVaR}_{\beta_i, \alpha_i} (X),c^*_i)-\widetilde k_{f_i}(X,c^*_i)]. $$ 
For $y< \mathrm{RVaR}_{\beta_i, \alpha_i} (X)-a_i^*,$  we have 
$$\begin{aligned}
&\mathbb{P}(k^*_{f_i}(\mathrm{RVaR}_{\beta_i, \alpha_i} (X),a^*_i)-k^*_{f_i}(X,a^*_i)\leq y)\\=&~  \mathbb{P}(X \geq a_i^*,  k^*_{f_i}(\mathrm{RVaR}_{\beta_i, \alpha_i} (X),a^*_i)-k^*_{f_i}(X,a^*_i) \leq y) \\
= &~ \mathbb{P}(X>\VaR_{\bar \alpha_i}(X),  \widetilde k_{f_i}(\mathrm{RVaR}_{\beta_i, \alpha_i} (X),c^*_i)-\widetilde k_{f_i}(X,c^*_i)\leq y) 
 \\&+\mathbb{P}(a_i^*  \leq X\leq \VaR_{\bar \alpha_i}(X), \mathrm{RVaR}_{\beta_i, \alpha_i} (X)-X\leq y) \\
\leq &~  \mathbb{P}(X>\VaR_{\bar \alpha_i}(X),  \widetilde k_{f_i}(\mathrm{RVaR}_{\beta_i, \alpha_i} (X),c^*_i)-\widetilde k_{f_i}(X,c^*_i)\leq y) 
 \\&+\mathbb{P}(a_i^*  \leq X\leq \VaR_{\bar \alpha_i}(X),  \widetilde k_{f_i}(\mathrm{RVaR}_{\beta_i, \alpha_i} (X),c^*_i)-\widetilde k_{f_i}(X,c^*_i) \leq y)  \\
 =&~  \mathbb{P}( \widetilde k_{f_i}(\mathrm{RVaR}_{\beta_i, \alpha_i} (X),c^*_i)-\widetilde k_{f_i}(X,c^*_i)\leq y). 
\end{aligned}$$
On the other hand, for $y\geq \mathrm{RVaR}_{\beta_i, \alpha_i} (X) -a_i^*,$ we have $$0=\mathbb{P}(k^*_{f_i}(\VaR_{\bar \alpha_i}(X),a^*_i)-k^*_{f_i}(X,a^*_i)> y) \leq \mathbb{P}(\widetilde k_{f_i}(\VaR_{\bar\alpha_i}(X),c^*_i)-\widetilde k_{f_i}(X,c^*_i)> y). $$
Thus, we have $$\var\left(k^*_{f_i}(\mathrm{RVaR}_{\beta_i, \alpha_i} (X),a^*_i)-k^*_{f_i}(X,a^*_i)\right)\leq  \var(\widetilde k_{f_i}\left(\mathrm{RVaR}_{\beta_i, \alpha_i} (X),c^*_i)-\widetilde k_{f_i}(X,c^*_i)\right).$$
To summarize,  
 \begin{align*}& \sum_{i=1}^n   \left( \mathrm{RVaR}_{\beta_i, \alpha_i} (X_i)-    \mathrm{RVaR}_{\beta_i, \alpha_i} (f_i(X_i))+\mathbb E[f_i(X_i)] \right)+  M\left(\sum_{i=1}^n \var(f_i(X_i))\right)^{1/2}  \\\geq & \sum_{i=1}^n   \left( \mathrm{RVaR}_{\beta_i, \alpha_i} (X_i)- \mathrm{RVaR}_{\beta_i, \alpha_i}  (k_{f_i}(X_i))+ \mathbb E[k_{f_i}(X_i)]  \right)+  M \left(\sum_{i=1}^n \var(k_{f_i}(X_i))\right)^{1/2}  \\ \geq &  \sum_{i=1}^n   \left( \mathrm{RVaR}_{\beta_i, \alpha_i} (X_i)- \mathrm{RVaR}_{\beta_i, \alpha_i}  (\widetilde k_{f_i}(X_i,c_i^*))+ \mathbb E[\widetilde k_{f_i}(X_i,c_i^*)]  \right)+  M  \left(\sum_{i=1}^n \var(\widetilde k_{f_i}(X_i,c^*_i))\right)^{1/2}\\=&\sum_{i=1}^n   \left( \mathrm{RVaR}_{\beta_i, \alpha_i} (X_i)- \big(\mathrm{RVaR}_{\beta_i, \alpha_i}  (\widetilde k_{f_i}(X_i,c_i^*))- \mathbb E[\widetilde k_{f_i}(X_i,c_i^*)]\big)  \right)+  M  \left(\sum_{i=1}^n \var(\widetilde k_{f_i}(X_i,c^*_i))\right)^{1/2}\\\geq & \sum_{i=1}^n   \left(  \mathrm{RVaR}_{\beta_i, \alpha_i} (X_i)-\big(\RVaR_{\beta_i, \alpha_i}(k^*_{f_i}(X_i,a^*_i))- \mathbb E[k^*_{f_i}(X_i,a_i^*)]) \right)+   M \left(\sum_{i=1}^n \var( k^*_{f_i}(X_i,a^*_i))\right)^{1/2}\\= & \sum_{i=1}^n   \left(  \mathrm{RVaR}_{\beta_i, \alpha_i} (X_i)-\RVaR_{\beta_i, \alpha_i}(k^*_{f_i}(X_i,a^*_i))+ \mathbb E[k^*_{f_i}(X_i,a_i^*)] \right)+   M \left(\sum_{i=1}^n\var( k^*_{f_i}(X_i,a^*_i))\right)^{1/2},\end{align*}
which  completes the proof. 
\end{proof}

\begin{proof}[Proof of Proposition \ref{prop:5}]For any $\mathbf f=(f_1,\dots,f_n)\in\mathcal I^n$,
it is clear that $$ \begin{aligned}&\sum_{i=1}^n \mathrm{VaR}_{\alpha_i}\left(T_{f_i,\pi_i}(X_i)\right)+ \mathrm{VaR}_\alpha\left(R(\mathbf f,\pi)\right)\\&= \sum_{i=1}^n \mathrm{VaR}_{\alpha_i}(X_i)-\sum_{i=1}^n f_{i}\left(\mathrm{VaR}_{\alpha_i}(X_i)\right)+ \mathrm{VaR}_\alpha\left(\sum_{i=1}^nf_i( X_i)\right).\end{aligned}$$
Let $a_i=\VaR_{\alpha_i}(X_i)-f_i(\VaR_{\alpha_i}(X_i))$ and $b_i=\VaR_{\alpha_i}(X_i).$ Then it follows that $f_i(\mathrm{VaR}_{\alpha_i}\left(X_i\right))=g_{a_i,b_i}(\mathrm{VaR}_{\alpha_i}\left(X_i\right))$ and $f_i(X_i)\geq g_{a_i,b_i}(X_i),$ which implies
$$ -\sum_{i=1}^n g_{a_i,b_i}\left(\mathrm{VaR}_{\alpha_i}(X_i)\right)+ \mathrm{VaR}_\alpha\left(\sum_{i=1}^ng_{a_i,b_i}( X_i)\right) \leq -\sum_{i=1}^n f_{i}\left(\mathrm{VaR}_{\alpha_i}(X_i)\right)+ \mathrm{VaR}_\alpha\left(\sum_{i=1}^nf_i( X_i)\right).$$ Then we get the desired result.
\end{proof} 

\begin{proof}[Proof of Theorem \ref{thme}]
From Proposition \ref{prop:5}, we immediately obtain $b^*_i = \VaR_{\alpha_i}(X_i)$.  Next, we  establish the monotonicity properties of the auxiliary functions. Direct computation yields:
\[
w'_i(a_i) = -S_{X_i}(a_i) \leq 0, \quad \text{and} \quad v'_i(a_i) = -2w_i(a_i) \leq 0,
\]
where $S_{X_i}(x) = 1 - F_{X_i}(x)$ is the survival function.
For any indemnity function $f_i = g_{a_i,b_i}$, the objective function becomes:
\[
\begin{aligned}
F(\a) &:= \sum_{i=1}^n \mathrm{VaR}_{\alpha_i}(X_i) - \sum_{i=1}^n f_i(\VaR_{\alpha_i}(X_i)) + \sum_{i=1}^n \mathbb{E}[f_i(X_i)] + \Phi^{-1}(\alpha) \left(\sum_{i=1}^n \mathrm{var}(f_i(X_i))\right)^{1/2} \\
&= \sum_{i=1}^n (a_i + w_i(a_i)) + \Phi^{-1}(\alpha) \left( \sum_{i=1}^n (v_i(a_i) - w_i(a_i)^2) \right)^{1/2}.
\end{aligned}
\]
Applying the first-order condition, we compute:
\[
\begin{aligned}
\frac{\partial F(\a)}{\partial a_i} &= F_{X_i}(a_i) \left(1 - \Phi^{-1}(\alpha) \left(\sum_{j=1}^n \left(v_j(a_j) - w_j(a_j)^2 \right) \right)^{-1/2} w_i(a_i) \right) \\
&= F_{X_i}(a_i) \left(1 - \Phi^{-1}(\alpha) G_i(\a)^{1/2} \right),
\end{aligned}
\]
where 
\[
G_i(\a) = \frac{w_i(a_i)^2}{\sum_{j=1}^n \left(v_j(a_j) - w_j(a_j)^2 \right)}.
\]
This implies \[
\frac{\partial F(\a)}{\partial a_i} \geq 0 \quad \Longleftrightarrow \quad 1 - \Phi^{-1}(\alpha) G_i(\a)^{1/2} \geq 0.
\]
To establish the monotonicity of $G_i(\a)$, consider the function
\[
\begin{aligned}
g_i(a_i) &:= w_i(a_i)^2 - S_{X_i}(a_i)v_i(a_i) \\
&= \left(\int_{a_i}^{\VaR_{\alpha_i}(X_i)} S_{X_i}(x) \mathrm{d} x\right)^2 - 2S_{X_i}(a_i) \int_{a_i}^{\VaR_{\alpha_i}(X_i)} (x - a_i) S_{X_i}(x) \mathrm{d} x.
\end{aligned}
\]
We observe that $g_i(\VaR_{\alpha_i}(X_i)) = 0$ and
\[
\frac{\partial g_i(a_i)}{\partial a_i} = \frac{\partial F_{X_i}(a_i)}{\partial a_i} v_i(a_i) \geq 0,
\]
which implies $g_i(a_i) \leq 0$ for $0 \leq a_i \leq \VaR_{\alpha_i}(X_i)$.

Now, differentiating $G_i(\a)$ with respect to $a_i$:
\[
\begin{aligned}
\frac{\partial G_i(\a)}{\partial a_i} &= \frac{-2w_i(a_i)S_{X_i}(a_i)\left(\sum_{j=1}^n (v_j(a_j) - w_j(a_j)^2)\right) + 2w_i(a_i)^3(1 - S_{X_i}(a_i))}{\left(\sum_{j=1}^n (v_j(a_j) - w_j(a_j)^2)\right)^2} \\
&= \frac{-2w_i(a_i)S_{X_i}(a_i)\left(\sum_{j\neq i} (v_j(a_j) - w_j(a_j)^2)\right) + 2w_i(a_i)^3 - 2w_i(a_i)S_{X_i}(a_i)v_i(a_i)}{\left(\sum_{j=1}^n (v_j(a_j) - w_j(a_j)^2)\right)^2} \leq 0.
\end{aligned}
\]
Hence, $G_i(\a)$ is decreasing in $a_i$, and consequently $\frac{\partial F(\a)}{\partial a_i}\geq0$. Therefore, the minimum of $F(\a)$ is attained at 
\[
\mathbf{a}^* = (a_1^*, \dots, a_n^*)
\]
with 
\[
a_i^* = \inf \left\{0 \leq a_i \leq \VaR_{\alpha_i}(X_i): 1 - \Phi^{-1}(\alpha) \cdot \frac{w_i(a_i)^2}{\sum_{j=1}^n \left(v_j(a_j) - w_j(a_j)^2 \right)} \geq 0 \right\},
\]
which completes the proof.
\end{proof}

\section{Proofs of Section \ref{sec:6}}

\begin{proof}[Proof of Lemma \ref{lem:4}]
The proofs of (ii) and (iii) are similar to those of (i), so we focus on (i).

\textbf{Step 1.}  
We first show that
\[
    \overline G_1({\bf u}, {\bf v}, t) \le \overline G_1({\bf a}, {\bf b}, t), \quad \forall t \in [0,1-\alpha],
\]
where $v_i = \VaR_{\alpha_i}(X_i)$ is fixed, and $u_i = v_i - g_{a_i,b_i}(v_i)$ for $i=1,2$. By Theorem 3.1 in \cite{CT13}, we have 
\[
    g_{u_i,v_i}(x) \le g_{a_i,b_i}(x), \quad \forall x \ge 0, \quad \text{and} \quad g_{u_i,v_i}(\VaR_{\alpha_i}(X_i)) = g_{a_i,b_i}(\VaR_{\alpha_i}(X_i)),~i=1,2.
\]  
Let $\theta_1=\alpha+t$ and $\theta_2=1-t.$ Hence,
\[
    \VaR_{\theta_i}(g_{u_i,v_i}(X_i)) \le \VaR_{\theta_i}(g_{a_i,b_i}(X_i)),
\]
and therefore
\begin{align*}
    \overline G_1(\mathbf{ u}, \mathbf{ v}, t) 
    &= \sum_{i=1}^{2} \big\{\VaR_{\alpha_i}(X_i) - \VaR_{\alpha_i}(g_{u_i,v_i}(X_i)) + \VaR_{\theta_i}(g_{u_i,v_i}(X_i)) \big\} \\
    &\le \sum_{i=1}^{2} \big\{\VaR_{\alpha_i}(X_i) - \VaR_{\alpha_i}(g_{a_i,b_i}(X_i)) + \VaR_{\theta_i}(g_{a_i,b_i}(X_i)) \big\} \\
    &= \overline G_1(\a, \b, t).
\end{align*}

\textbf{Step 2.}  
We now prove
\[
    \inf_{(\a,\b)\in \mathcal A_1} \inf_{t\in[0,1-\alpha]} \overline G_1(\a, \b, t)
    = \inf_{\bf u \in \mathcal A_1(\bf v)} \inf_{t\in[0,1-\alpha]} \overline G_1(\mathbf{ u}, \mathbf{ v}, t),
\]
based on Step 1. 

Let
\[
    S_{ab}^* = \arg\inf_{(\a,\b)\in \mathcal A_1} \inf_{t\in[0,1-\alpha]} \overline G_1(\a, \b, t),
    \quad
    S_{uv}^* = \arg\inf_{\bf u \in \mathcal A_1(\bf u)} \inf_{t\in[0,1-\alpha]} \overline G_1(\mathbf{ u}, \mathbf{ v},t).
\]
Assume $(u_1^*, u_2^*, t^*) \in S_{uv}^*$ but $(u_1^*, u_2^*, v_1, v_2, t^*) \notin S_{ab}^*$.  
Then there exists $(\hat{\a}, \hat{\b}, \hat{t}) \in S_{ab}^*$ such that
\[
    \overline G_1(\hat{\a}, \hat{\b}, \hat{t}) < \overline G_1(\mathbf{ u^*}, \mathbf{ v}, t^*).
\]
If $\hat{t} = t^*$, this is a contradiction. Otherwise, by Step 1, there exists $(\hat{\bf u}, \hat{t}) \in S_{uv}^*$ such that
\[
    \overline G_1(\hat{\mathbf{ u}}, \mathbf{ v}, \hat{t}) \le \overline G_1(\hat{\a}, \hat{\b}, \hat{t}) < \overline G_1(\mathbf{ u^*}, \mathbf{ v}, t^*),
\]
which is also a contradiction. Hence, $(u_1^*, u_2^*, t^*) \in S_{uv}^*$ implies $(u_1^*, u_2^*, v_1, v_2, t^*) \in S_{ab}^*$.

Conversely, if $(a_1^*, a_2^*, v_1, v_2, t^*) \in S_{ab}^*$ but $(a_1^*, a_2^*, t^*) \notin S_{uv}^*$, then there exists $(\widetilde{\mathbf{ u}}, t^*) \in S_{uv}^*$ such that
\[
    \overline G_1(\widetilde{\mathbf{ u}}, \mathbf{ v}, t^*) < \overline G_1(\a^*, \mathbf{ v}, t^*),
\]
again a contradiction.

Finally, by Step 1, for any $(a_1^*, a_2^*, b_1^*, b_2^*, t^*) \in S_{ab}^*$ with $b_1^* \neq v_1$ or $b_2^* \neq v_2$, there exists $(u_1', u_2', v_1, v_2, t^*) \in S_{ab}^*$ such that
\[
    \overline G_1(\a^*, \b^*, t^*) = \overline G_1(\mathbf{ u'}, \mathbf{ v}, t^*),
\]
and hence $(u_1', u_2', t^*) \in S_{uv}^*$.  

Therefore,
\[
    \inf_{(\a,\b)\in\mathcal A_1} \inf_{t\in[0,1-\alpha]} \overline G_1(\a, \b, t)
    = \inf_{\bf u \in \mathcal A_1(\bf v)} \inf_{t\in[0,1-\alpha]} \overline G_1(\mathbf{ u}, \mathbf{ v}, t),
\]
which yields the desired result.
\end{proof}

\begin{proof}[Proof of Proposition \ref{a12}]
  Let $\theta_1 = \alpha + t$ and $\theta_2 = 1 - t$. Consider the case where there exists $i \in \{1,2\}$ such that $\alpha \ge \alpha_i$, and assume without loss of generality that $\alpha \ge \alpha_2$.  
If $\alpha \ge \alpha_1$, then
\begin{align*}
    \VaR_{\theta_i}\big(g_{a_i,b_i}(X_i)\big) - \VaR_{\alpha_i}\big(g_{a_i,b_i}(X_i)\big) \ge 0, \quad i = 1,2.
\end{align*}
In this case, $a_i \in [0,b_i^*]$ for $i=1,2$, and $t^*=0$.  

If $\alpha < \alpha_1$ and $\alpha + t \le \alpha_1$, then $a_2$ still belongs to $[0, b_2^*]$. By the 1-Lipschitz property of $g_{a_1,b_1}$,  
\begin{align*}
    \VaR_{\alpha+t}\big(g_{a_1,b_1}(X_1)\big) - \VaR_{\alpha_1}\big(g_{a_1,b_1}(X_1)\big)
    &\ge \VaR_{\alpha+t}(X_1) - \VaR_{\alpha_1}(X_1) \\
    &\ge \VaR_{\alpha}(X_1) - \VaR_{\alpha_1}(X_1),
\end{align*}
and the infimum is achieved at $t^* = 0$ with $a_1 \in [0, \VaR_{\alpha}(X_1)]$. By symmetry, one can also obtain $t^* = 1 - \alpha$ for the case $\alpha \ge \alpha_1$.

Next, consider the case where $\alpha_i > \alpha$ for $i = 1,2$, and examine the corresponding admissible ranges of $t$ and the associated minimum values.

 {\bf Case (1): $t \in [0, \alpha_1 - \alpha)$.} 
In this case, we have $\alpha + t < \alpha_1$ and $1 - t > \alpha_2$. Since $1 - t > \alpha_2$,  
\[
    \VaR_{1-t}\big(g_{a_2,b_2}(X_2)\big)
    - \VaR_{\alpha_2}\big(g_{a_2,b_2}(X_2)\big) \ge 0,
\]
and the infimum is attained for $a_2 \in [0, \VaR_{\alpha_2}(X_2)]$.  
Because $\alpha + t < \alpha_1$, by the 1-Lipschitz condition,    
\begin{align*}
    \VaR_{\alpha+t}\big(g_{a_1,b_1}(X_1)\big)
    - \VaR_{\alpha_1}\big(g_{a_1,b_1}(X_1)\big)
    &\ge \VaR_{\alpha+t}(X_1) - \VaR_{\alpha_1}(X_1) \\
    &\ge \VaR_{\alpha}(X_1) - \VaR_{\alpha_1}(X_1),
\end{align*}
and the minimum is achieved at $t^* = 0$ and $a_1 \in [0, \VaR_{\alpha}(X_1)]$.  
Thus, the minimum value in this case is  
$
    \VaR_{\alpha}(X_1) + \VaR_{\alpha_2}(X_2).
$

 {\bf Case (2): $t \in [\alpha_1 - \alpha,\, 1 - \alpha_2]$.}  
In this case, $\alpha + t \ge \alpha_1$ and $1 - t \ge \alpha_2$. Recall that $\theta_1 = \alpha + t$ and $\theta_2 = 1 - t$. Then  
\[
    \VaR_{\theta_i}\big(g_{a_i,b_i}(X_i)\big)
    - \VaR_{\alpha_i}\big(g_{a_i,b_i}(X_i)\big) \ge 0, \quad i = 1,2,
\]
and the minimum value is  
$
    \VaR_{\alpha_1}(X_1) + \VaR_{\alpha_2}(X_2).
$
which is greater than that in Case 1.
    
    {\bf Case (3): $t\in(1-\alpha_2,1]$.} 
    In this case, $\alpha+t>\alpha_1,~1-t<\alpha_2.$ Since $\alpha+t>\alpha_1,$ 
    \begin{align*}
        \VaR_{\alpha+t}(g_{a_1,b_1}( X_1))-\VaR_{\alpha_1}(g_{a_1,b_1}(X_1))\geq0,
    \end{align*}
    which is obtained for $a_1\in[0,\VaR_{\alpha_1}(X_1)].$ Since $1-t<\alpha_2,$ by the 1-Lipschitz, 
    \begin{align*}
        \VaR_{1-t}(g_{a_2,b_2}( X_2))-\VaR_{\alpha_2}(g_{a_2,b_2}( X_2))&\geq \VaR_{1-t}( X_2)-\VaR_{\alpha_2}(X_2)\\&\geq \VaR_{\alpha}( X_2)-\VaR_{\alpha_2}(X_2),
    \end{align*}
    which is obtained for $t^*=1-\alpha$ and $a_2\in[0,\VaR_{\alpha}(X_1)].$ The minimum value is $\VaR_{\alpha_1}(X_1)+\VaR_{\alpha}(X_2).$\\
  To sum up, the minimum value is either $\VaR_{\alpha}(X_1)+\VaR_{\alpha_2}(X_2)$ or $\VaR_{\alpha_1}(X_1)+\VaR_{\alpha}(X_2)$, attained at $t=0$ or $t=1-\alpha$, respectively.   Consequently, the problem reduces to the case of one reinsurer and one insurer.
 The proof is complete.
\end{proof}

\begin{proof}[Proof of Proposition \ref{aa1a2}]
If $\alpha \ge \alpha_1 + \alpha_2 - 1$, the conclusion follows directly from Proposition~\ref{a12}. We therefore assume $\alpha < \alpha_1 + \alpha_2 - 1$ in the remainder of the proof.

We analyze the behavior of the objective function over different intervals of $t$:

\textbf{Case 1: $t \in [0, 1 - \alpha_2]$.} 
In this regime, we have $\alpha + t < \alpha_1$ and $1 - t \geq \alpha_2$. 
Following an argument analogous to Case (1) in the proof of Proposition~\ref{a12}, 
the minimum is attained at $t^* = 0$, with optimal value $\VaR_{\alpha}(X_1) + \VaR_{\alpha_2}(X_2)$.

\textbf{Case 2: $t \in [\alpha_1 - \alpha, 1 - \alpha]$.} 
In this parameter regime, the conditions $\alpha + t \geq \alpha_1$ and $1 - t < \alpha_2$ are satisfied. 
Following an argument parallel to Case (3) in the proof of Proposition~\ref{a12}, 
we find that the minimum is attained at the right endpoint $t^* = 1 - \alpha$. 
This yields the optimal value $\VaR_{\alpha_1}(X_1) + \VaR_{\alpha}(X_2)$.
  
  \textbf{Case (3): $t \in (1 - \alpha_2, \alpha_1 - \alpha)$.} 
In this case, $\alpha + t < \alpha_1$ and $1 - t < \alpha_2$. For any $t \in (1 - \alpha_2, \alpha_1 - \alpha)$, 
\[
\VaR_{1-t}(g_{a_2,b_2}(X_2)) - \VaR_{\alpha_2}(g_{a_2,b_2}(X_2)) \geq \VaR_{1-t}(X_2) - \VaR_{\alpha_2}(X_2),
\]
and
\[
\VaR_{\alpha+t}(g_{a_1,b_1}(X_1)) - \VaR_{\alpha_1}(g_{a_1,b_1}(X_1)) \geq \VaR_{\alpha+t}(X_1) - \VaR_{\alpha_1}(X_1),
\]
with the bounds attained by $a_1 \in [0, \VaR_{\alpha+t}(X_1)]$ and $a_2 \in [0, \VaR_{1-t}(X_2)]$. 
Hence, the minimum value for fixed $t \in (1 - \alpha_2, \alpha_1 - \alpha)$ is $\VaR_{\alpha+t}(X_1) + \VaR_{1-t}(X_2)$. 

The difference between this value and the minimum in Case (1) is
\begin{align*}
& \VaR_{\alpha+t}(X_1) - \VaR_{\alpha}(X_1) + \VaR_{1-t}(X_2) - \VaR_{\alpha_2}(X_2) \\
= & \VaR_{\alpha+t}(X_1) - \VaR_{\alpha+1-\alpha_2}(X_1) + \VaR_{\alpha+1-\alpha_2}(X_1) - \VaR_{\alpha}(X_1) \\
& + \VaR_{1-t}(X_2) - \VaR_{\alpha_2}(X_2).
\end{align*}
If $\VaR_{\alpha+t}(X_1) - \VaR_{\alpha+1-\alpha_2}(X_1) \geq \VaR_{\alpha_2}(X_2) - \VaR_{1-t}(X_2)$, 
then since $\VaR_{\alpha+1-\alpha_2}(X_1) - \VaR_{\alpha}(X_1) > 0$, the above expression is positive.  Otherwise, consider the difference between this value and the minimum in Case (2):
\begin{align*}
& \VaR_{\alpha+t}(X_1) - \VaR_{\alpha_1}(X_1) + \VaR_{1-t}(X_2) - \VaR_{\alpha}(X_2) \\
= & \VaR_{\alpha+t}(X_1) - \VaR_{\alpha_1}(X_1) + \VaR_{1-t}(X_2) - \VaR_{\alpha+1-\alpha_1}(X_2) \\
& + \VaR_{\alpha+1-\alpha_1}(X_2) - \VaR_{\alpha}(X_2).
\end{align*}
Since $\mathbf{F} \in \left(\mathcal{M}_{c x}^\alpha\right)^2$, we have
\[
\frac{1}{\alpha_2 + t - 1} \left( \VaR_{\alpha+t}(X_1) - \VaR_{\alpha+1-\alpha_2}(X_1) \right)
\geq \frac{1}{\alpha_1 - \alpha - t} \left( \VaR_{\alpha_1}(X_1) - \VaR_{\alpha+t}(X_1) \right),
\]
and
\[
\frac{1}{\alpha_1 - \alpha - t} \left( \VaR_{1-t}(X_2) - \VaR_{\alpha+1-\alpha_1}(X_2) \right)
\geq \frac{1}{\alpha_2 + t - 1} \left( \VaR_{\alpha_2}(X_2) - \VaR_{1-t}(X_2) \right).
\]
From the earlier assumption that $\VaR_{\alpha+t}(X_1) - \VaR_{\alpha+1-\alpha_2}(X_1) < \VaR_{\alpha_2}(X_2) - \VaR_{1-t}(X_2)$, 
it follows that $\VaR_{1-t}(X_2) - \VaR_{\alpha+1-\alpha_1}(X_2) > \VaR_{\alpha_1}(X_1) - \VaR_{\alpha+t}(X_1)$. 
Therefore, the above expression is positive. 
Thus, the minimum value in Case (3) is greater than the minima in both Cases (1) and (2). 
In summary, the global minimum is either $\VaR_{\alpha}(X_1) + \VaR_{\alpha_2}(X_2)$ or $\VaR_{\alpha_1}(X_1) + \VaR_{\alpha}(X_2)$, 
attained at $t = 0$ or $t = 1 - \alpha$, respectively. Consequently, the problem reduces to the case of one reinsurer and one insurer.
\end{proof}

\end{document}